\newtheorem{thm}{Theorem}[section]
\newtheorem{lem}[thm]{Lemma}
\newtheorem{pro}[thm]{Proposition}
\theoremstyle{definition}
\newtheorem{defn}{Definition}[section]
\begin{document}

\title{The Longest $(s, t)$-paths of $O$-shaped Supergrid Graphs
}

\author{\vspace{0.5cm}Ruo-Wei Hung$^\textrm{a}$ and Fatemeh Keshavarz-Kohjerdi$^{\mathrm{b},}$\thanks{Corresponding author.}\\
$^\textrm{a}$\textit{Department of Computer Science \& Information Engineering,}\\
\textit{Chaoyang University of Technology, Wufeng, Taichung 41349, Taiwan}\\
\textit{\vspace{0.2cm}e-mail address: rwhung@cyut.edu.tw}\\
$^\textrm{b}$\textit{Department of Mathematics \& Computer Science,}\\
\textit{Shahed University, Tehran, Iran}\\
\textit{e-mail address: f.keshavarz@shahed.ac.ir}}


\maketitle

\begin{abstract}
In this paper, we continue the study of the Hamiltonian and longest $(s, t)$-paths of supergrid graphs. The Hamiltonian $(s, t)$-path of a graph is a Hamiltonian path between any two given vertices $s$ and $t$ in the graph, and the longest $(s, t)$-path is a simple path with the maximum number of vertices from $s$ to $t$ in the graph. A graph holds Hamiltonian connected property if it contains a Hamiltonian $(s, t)$-path. These two problems are well-known NP-complete for general supergrid graphs. An $O$-shaped supergrid graph is a special kind of a rectangular grid graph with a rectangular hole. In this paper, we first prove the Hamiltonian connectivity of $O$-shaped supergrid graphs except few conditions. We then show that the longest $(s, t)$-path of an $O$-shaped supergrid graph can be computed in linear time. The Hamiltonian and longest $(s, t)$-paths of $O$-shaped supergrid graphs can be applied to compute the minimum trace of computerized embroidery machine and 3D printer when a hollow object is printed.

\vspace{0.2cm}\noindent\textbf{Keywords:}
Longest path, Hamiltonian connectivity, Supergrid graphs, $O$-shaped supergrid graphs, Computerized embroidery machines, 3D printers
\end{abstract}

\section{Introduction}\label{Introduction}
The studied graphs, namely \textit{supergrid graphs}, are derived from our industry-university cooperative research project. They can be applied to the computerized embroidery machines. The flow of a computerized sewing process is as follows. Given by a colour image. The computerized embroidery software first uses the image processing technique to produce $k$ blocks of different colors. Then, it computes the stitching trace for each block of colors. Finally, the software transmits its computed stitching trace to computerized embroidery machine, and the machine performs the sewing action along its received stitching trace. 
Since each stitch position of a sewing machine can be moved to its eight neighbor positions (left, right, up, down, up-left, up-right, down-left, and down-right), we define the supergrid graph as follows: Each lattice of a block of color is represented by a vertex and each vertex $v$ is coordinated as $(v_x, v_y)$, denoted by $v = (v_x, v_y)$, where $v_x$ and $v_y$ are integers and represent the $x$ and $y$ coordinates of node $v$, respectively. Two vertices $u$ and $v$ are adjacent if and only if $|u_x-v_x|\leqslant 1$ and $|u_y-v_y|\leqslant 1$. Thus, the possible adjacent vertices of a vertex $v = (v_x, v_y)$ in a supergrid graph contain $(v_x, v_y-1)$, $(v_x-1, v_y)$, $(v_x+1, v_y)$, $(v_x, v_y+1)$, $(v_x-1, v_y-1)$, $(v_x+1, v_y+1)$, $(v_x+1, v_y-1)$, and $(v_x-1, v_y+1)$. 


\begin{figure}[!t]
\begin{center}
\includegraphics[width=0.8\textwidth]{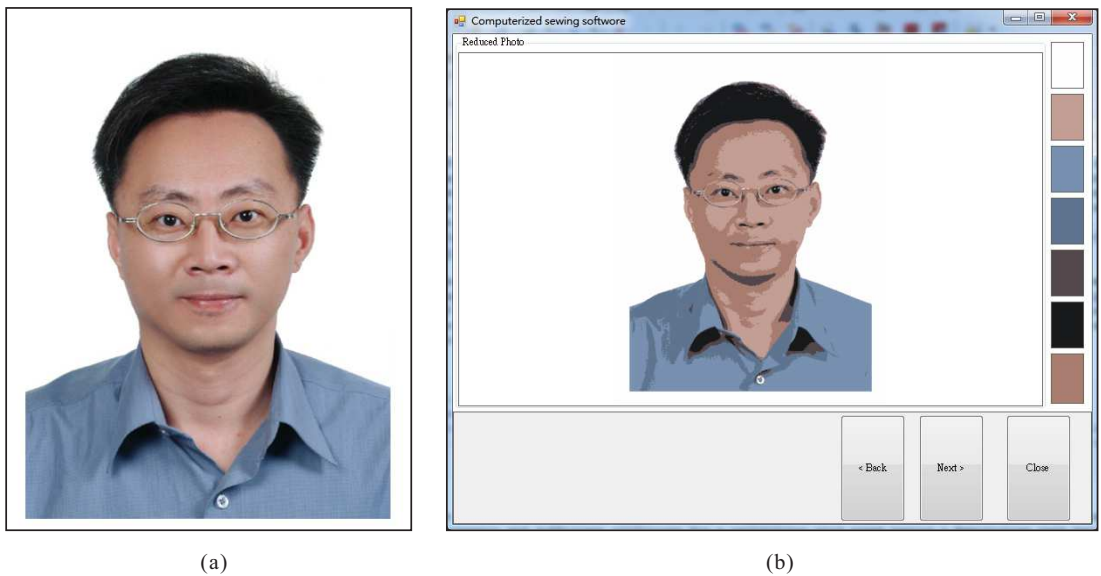}
\caption{(a) An input image for the computerized sewing software and (b) seven colors of regions produced by image processing software.} \label{Fig_SewingMachineFlowExample}
\end{center}
\end{figure}

\begin{figure}[!t]
\begin{center}
\includegraphics[width=0.8\textwidth]{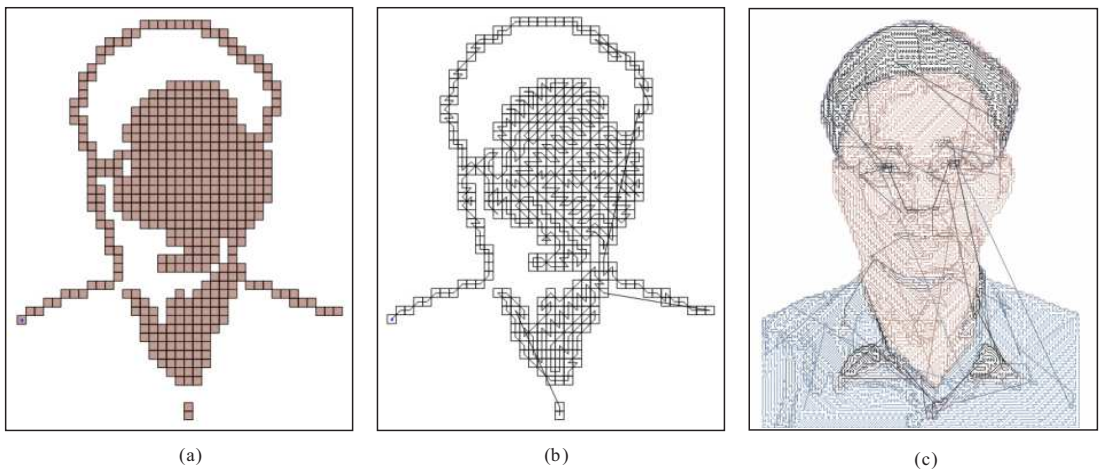}
\caption{(a) A set of lattices for one region of color, (b) a possible sewing trace for the set of lattices in (a), and (c) an overview after computing sewing traces of all regions of colors.} \label{Fig_SewingMachineFlowExample1}
\end{center}
\end{figure}


In the literature, there exist two related classes of graphs, \textit{grid} and \textit{triangular grid} graphs. In a grid graph, for each vertex $v = (v_x, v_y)$ its possible adjacent vertices include $(v_x, v_y-1)$, $(v_x-1, v_y)$, $(v_x+1, v_y)$, and $(v_x, v_y+1)$. And for each vertex $v = (v_x, v_y)$ in a triangular grid graph, its possible adjacent vertices include $(v_x, v_y-1)$, $(v_x-1, v_y)$, $(v_x+1, v_y)$, $(v_x, v_y+1)$, $(v_x-1, v_y-1)$, and $(v_x+1, v_y+1)$. Thus, supergrid graphs contain grid and triangular grid graphs as subgraphs. However, grid and triangular grid graphs are not subclasses of supergrid graphs, and the converse is also true: these classes of graphs have common elements (vertices) but in general they are distinct since the edge sets of these graphs are different. Obviously, all grid graphs are bipartite \cite{Itai82} but triangular grid graphs and supergrid graphs are not always bipartite.

A \textit{Hamiltonian path} (resp. \textit{cycle}) of a graph is a simple path (resp. \textit{cycle}) in which each vertex of the graph appears exactly once. The \textit{Hamiltonian path \emph{(resp.,} cycle\emph{)} problem} involves deciding whether or not a graph contains a Hamiltonian path (resp., cycle). A graph is said to be \textit{Hamiltonian} if it contains a Hamiltonian cycle. A graph $G$ is said to be \textit{Hamiltonian connected} if for each pair of distinct vertices $u$ and $v$ of $G$, there exists a Hamiltonian path between $u$ and $v$ in $G$. If $(u, v)$ is an edge of a Hamiltonian connected graph, then a Hamiltonian cycle containing $(u, v)$ does exist. Thus, a Hamiltonian connected graph contains many Hamiltonian cycles, and, hence, the sufficient conditions of Hamiltonian connectivity are stronger than those of Hamiltonicity. The longest $(s, t)$-path problem is to find a longest path from vertex $s$ to vertex $t$ of a graph, where $s$ and $t$ are any two given vertices and the longest path is a simple path with the maximum number of vertices. It is well known that the Hamiltonian and longest $(s, t)$-path problems are NP-complete for general graphs \cite{GareyJ79, Johnson85}. The same holds true for bipartite graphs \cite{Krishnamoorthy76}, split graphs \cite{Golumbic80}, circle graphs \cite{Damaschke89}, undirected path graphs \cite{BertossiB86}, grid graphs \cite{Itai82}, triangular grid graphs \cite{Gordon08}, supergrid graphs \cite{Hung15}, and so on. In the literature, there are many studies for the Hamiltonian connectivity of interconnection networks, see \cite{Chen00, Chen04, Huang00, Huang02, Hung12, Li09, Liu11, Lo01}.

Previous related works are summarized as follows. Recently, Hamiltonian path (cycle) and Hamiltonian connected problems in grid, triangular grid, and supergrid graphs have received much attention. Itai \textit{et al.} \cite{Itai82} showed that the Hamiltonian path and cycle problems for grid graphs are NP-complete. They also gave the necessary and sufficient conditions for a rectangular grid graph to be Hamiltonian connected. Thus, rectangular grid graphs are not always Hamiltonian connected. Zamfirescu \textit{et al.} \cite{Zamfirescu92} gave the sufficient conditions for a grid graph having a Hamiltonian cycle, and proved that all grid graphs of positive width have Hamiltonian line graphs. Later, Chen \textit{et al.} \cite{Chen02} improved the Hamiltonian path algorithm of \cite{Itai82} on rectangular grid graphs and presented a parallel algorithm for the Hamiltonian path problem with two given end vertices in rectangular grid graph. Also Lenhart and Umans \cite{Lenhart97} showed the Hamiltonian cycle problem on solid grid graphs, which are grid graphs without holes, is solvable in polynomial time. Recently, Keshavarz-Kohjerdi \textit{et al.} \cite{Keshavarz12b, Keshavarz13} presented linear-time and parallel algorithms to compute the longest path between two given vertices in rectangular grid graphs. Reay and Zamfirescu \cite{Reay00} proved that all 2-connected, linear-convex triangular grid graphs contain Hamiltonian cycles except one special case. The Hamiltonian cycle and path problems on triangular grid graphs were known to be NP-complete \cite{Gordon08}. In addition, the Hamiltonian cycle problem on hexagonal grid graphs has been shown to be NP-complete \cite{Islam07}. Alphabet grid graphs first appeared in \cite{Salman05}, in which Salman determined the classes of alphabet grid graphs containing Hamiltonian cycles. Keshavarz-Kohjerdi and Bagheri \cite{Keshavarz12a} gave the necessary and sufficient conditions for the existence of Hamiltonian paths in alphabet grid graphs, and presented a linear-time algorithm for finding Hamiltonian path with two given endpoints in these graphs. Recently, Keshavarz-Kohjerdi and Bagheri \cite{Keshavarz16} verified the Hamiltonian connectivity of $L$-shaped grid graphs. Very recently, Keshavarz-Kohjerdi and Bagheri presented a linear-time algorithm to find Hamiltonian $(s, t)$-paths in rectangular grid graphs with a rectangular hole \cite{Keshavarz17a, Keshavarz17b}, and to compute longest $(s, t)$-paths in $L$-shaped and $C$-shaped grid graphs \cite{Keshavarz18a,Keshavarz19c}. The supergrid graphs were first appeared in \cite{Hung15}, in which we proved that the Hamiltonian cycle and path problems on supergrid graphs are NP-complete, and every rectangular supergrid graph is Hamiltonian. Since the Hamiltonian cycle and path problems are NP-complete for supergrid graphs \cite{Hung15}, an important line of investigation is to discover the complexities of the Hamiltonian related problems when the input is restricted to be in special subclasses of supergrid graphs. In \cite{Hung16}, we showed that the Hamiltonian cycle problem for linear-convex supergrid graphs is linear solvable. In \cite{Hung17a}, we proved that rectangular supergrid graphs are always Hamiltonian connected except one trivial forbidden condition. Some shaped supergrid graphs have been verified to be Hamiltonian and Hamiltonian connected \cite{Hung17b}. Recently, we showed the Hamiltonian connectivity of alphabet supergrid graphs \cite{Hung19a}. Very recently, we showed that the Hamiltonian and longest $(s, t)$-paths of $L$- and $C$-shaped supergrid graphs can be computed in linear time \cite{Hung18, Keshavarz19a, Keshavarz19b}. In this paper, we first show that $O$-shaped supergrid graphs, which are rectangular supergrid graphs with rectangular holes, are always Hamiltonian and Hamiltonian connected except few conditions. We then give a linear-time algorithm to solve the longest $(s, t)$-path problem on $O$-shaped supergrid graphs. This study can be regarded as the first attempt for solving the Hamiltonian and longest $(s, t)$-path problems on hollow supergrid graphs.

The Hamiltonian connectivity of $O$-shaped supergrid graphs can be also applied to compute the minimum trace of 3D printers as follows. Consider a 3D printer with a hollow object ($O$-type object) being printed. The software produces a series of thin layers, designs a path for each layer, combines these paths of produced layers, and transmits the above paths to 3D printer. Because 3D printing is performed layer by layer, each layer can be considered as an $O$-shaped supergrid graph. Suppose that there are $k$ layers under the above 3D printing. If the Hamiltonian connectivity of $O$-shaped supergrid graphs holds true, then we can find a Hamiltonian $(s_i, t_i)$-path of an  $O$-shaped supergrid graph $O_i$, where $O_i$, $1\leqslant i\leqslant k$, represents a layer under 3D printing. Thus, we can design an optimal trace for the above 3D printing, where $t_i$ is adjacent to $s_{i+1}$ for $1\leqslant i\leqslant k-1$. In this application, we restrict the 3d printer nozzle to be located at integer coordinates. 

The paper is organized as follows. In Section \ref{Sec_Preliminaries}, some notations, observations, and previous established results are introduced. We also verify the Hamiltonicity of $O$-shaped supergrid graphs in this section. In Section \ref{Sec_forbidden-conditions}, we discover some conditions such that $O$-shaped supergrid graphs contain no Hamiltonian $(s, t)$-path. Section \ref{Sec_O-shaped-supergrid} shows that $O$-shaped supergrid graphs are Hamiltonian connected except the forbidden conditions in Section \ref{Sec_forbidden-conditions}. In Section \ref{Sec_Algorithm}, we present a linear-time algorithm to compute the longest $(s, t)$-paths of $O$-shaped supergrid graphs. Finally, we make some concluding remarks in Section \ref{Sec_Conclusion}.

\section{Terminologies and Background Results}\label{Sec_Preliminaries}
In this section, we will introduce some terminologies and symbols. Some observations and previously established results for the Hamiltonicity and Hamiltonian connectivity of rectangular supergrid graphs are also presented. For graph-theoretic terminology not defined in this paper, the reader is referred to \cite{Bondy76}.

Let $G = (V, E)$ be a supergrid graph with vertex set $V(G)$ and edge set $E(G)$. Let $S$ be a subset of vertices in $G$, and let $u$ and $v$ be two vertices in $G$. We write $G[S]$ for the subgraph of $G$ \textit{induced} by $S$, $G-S$ for the subgraph $G[V-S]$, i.e., the subgraph induced by $V-S$. In general, we write $G-v$ instead of $G-\{v\}$.
We say that $u$ is \textit{adjacent} to $v$, and $u$ and $v$ are \textit{incident} to edge $(u, v)$, if $(u, v)\in E(G)$. The notation $u\thicksim v$ (resp., $u \nsim v$) means that vertices $u$ and $v$ are adjacent (resp., non-adjacent). A vertex $w$ \textit{adjoins} edge $(u, v)$ if $w\thicksim u$ and $w\thicksim v$. For two edges $e_1=(u_1, v_1)$ and $e_2=(u_2, v_2)$, if $u_1\thicksim u_2$ and $v_1\thicksim v_2$, then we say that $e_1$ and $e_2$ are \textit{parallel}, denoted by $e_1\thickapprox e_2$. For any $v\in V(G)$, a \textit{neighbor} of $v$ is any vertex that is adjacent to $v$. Let $N_G(v)$ be the set of neighbors of $v$ in $G$, and let $N_G[v]=N_G(v)\cup\{v\}$. The \textit{degree} of vertex $v$ in $G$, denoted by $deg(v)$, is the number of vertices adjacent to $v$. A path $P$ of length $|P|$ in $G$, denoted by $v_1\rightarrow v_2\rightarrow \cdots \rightarrow v_{|P|-1} \rightarrow v_{|P|}$, is a sequence $(v_1, v_2, \cdots, v_{|P|-1}, v_{|P|})$ of vertices such that $(v_i,v_{i+1})\in E(G)$ for $1 \leqslant i < |P|$, and all vertices except $v_1, v_{|P|}$ in it are distinct. The first and last vertices visited by $P$ are denoted by $start(P)$ and $end(P)$, respectively. We will use $v_i \in P$ to denote ``$P$ visits vertex $v_i$" and use $(v_i, v_{i+1}) \in P$ to denote ``$P$ visits edge $(v_i, v_{i+1})$". A path from $v_1$ to $v_k$ is denoted by $(v_1, v_k)$-path. In addition, we use $P$ to refer to the set of vertices visited by path $P$ if it is understood without ambiguity. A cycle is a path $C$ with $|V(C)| \geqslant 4$ and $start(C) = end(C)$. Two paths (or cycles) $P_1$ and $P_2$ of graph $G$ are called \textit{vertex-disjoint} if $V(P_1)\cap V(P_2) = \emptyset$. If $end(P_1)\thicksim start(P_2)$, then two vertex-disjoint paths $P_1$ and $P_2$ can be concatenated into a path, denoted by $P_1 \Rightarrow P_2$.

The \emph{two-dimensional supergrid graph} $S^\infty$ is the infinite graph whose vertex set consists of all points of the plane with integer coordinates and in which two vertices are adjacent if the difference of their $x$ or $y$ coordinates is not larger than $1$. A \textit{supergrid graph} is a finite vertex-induced subgraph of $S^\infty$. For a vertex $v$ in a supergrid graph, it is represented as $(v_x, v_y)$, where $v_x$ and $v_y$ are the $x$ and $y$ coordinates of $v$ respectively. The possible adjacent vertices of a vertex $v=(v_x, v_y)$ in a supergrid graph hence include $(v_x, v_y-1)$, $(v_x-1, v_y)$, $(v_x+1, v_y)$, $(v_x, v_y+1)$, $(v_x-1, v_y-1)$, $(v_x+1, v_y+1)$, $(v_x+1, v_y-1)$, and $(v_x-1, v_y+1)$. The edge $(u, v)$ is said to be \textit{horizontal} (resp., \textit{vertical}) if $u_y=v_y$ (resp., $u_x=v_x$), and is called \textit{crossed} if it is neither a horizontal nor a vertical edge. Next, we define some special supergrid graphs studied in the paper as follows.

\begin{defn}
Let $R(m, n)$ be the supergrid graph whose vertex set $V(R(m, n))=\{v =(v_x, v_y) | 1\leqslant v_x\leqslant m$ and $1\leqslant v_y\leqslant n\}$. A \textit{rectangular supergrid graph} is a supergrid graph which is isomorphic to $R(m, n)$ for some $m$ and $n$, and $R(m, n)$ is called $n$-rectangle.
\end{defn}

There are four boundaries in a rectangular supergrid graph $R(m, n)$ with $m, n\geqslant 2$. The edge in the boundary of $R(m, n)$ is called \textit{boundary edge}. A path is called \textit{boundary} of $R(m, n)$ if it visits all vertices and edges of the same boundary in $R(m, n)$ and its length equals to the number of vertices in the visited boundary. Let $v=(v_x, v_y)$ be a vertex in $R(m, n)$. The vertex $v$ is called the \textit{upper-left} (resp., \textit{upper-right}, \textit{down-left}, \textit{down-right}) \textit{corner} of $R(m, n)$ if for any vertex $w=(w_x, w_y)\in R(m, n)$, $w_x\geqslant v_x$ and $w_y\geqslant v_y$ (resp., $w_x\leqslant v_x$ and $w_y\geqslant v_y$, $w_x\geqslant v_x$ and $w_y\leqslant v_y$, $w_x\leqslant v_x$ and $w_y\leqslant v_y$). Throughout this paper in the figures, $(1, 1)$ is the coordinates of the vertex in the upper-left corner, except we explicitly change this assumption.

\begin{defn}
Let $R(m, n)$ be a rectangular supergrid graph. Let $L(m, n; k, l)$ be a supergrid graph obtained from $R(m, n)$ by removing its subgraph $R(k, l)$ from the upper-right corner coordinated by $(m, 1)$. A $L$-shaped supergrid graph is isomorphic to $L(m,n; k,l)$ (see Fig. \ref{Fig_LCO-shaped}(a)).
\end{defn}

\begin{defn}
Let $R(m, n)$ be a rectangular supergrid graph. A $C$-shaped supergrid graph $C(m,n; k,l; c,d)$ is a supergrid graph obtained from a rectangular supergrid graph $R(m, n)$ by removing its subgraph $R(k, l)$ from its vertex coordinated by $(m, c+1)$ while $R(m, n)$ and $R(k, l)$ have exactly one border side in common, where $m\geqslant 2$, $n\geqslant  3$, $k, l, c, d\geqslant 1$, and $n = c+d+l$ (see Fig. \ref{Fig_LCO-shaped}(b)).
\end{defn}

\begin{defn}
Let $R(m, n)$ be a rectangular supergrid graph. An $O$-shaped supergrid graph $O(m,n; k,l; a,b,c,d)$ is a supergrid graph obtained from a rectangular supergrid graph $R(m, n)$ by removing its subgraph $R(k, l)$ from its vertex coordinated by $(m-b, c+1)$ while $R(m, n)$ and $R(k, l)$ have no border side in common, where $m, n\geqslant 3$, $k,l,a,b,c,d\geqslant 1$, $m = a+b+k$, and $n = c+d+l$ (see Fig. \ref{Fig_LCO-shaped}(c)).
\end{defn}

\begin{figure}[!t]
\begin{center}
\includegraphics[width=0.8\textwidth]{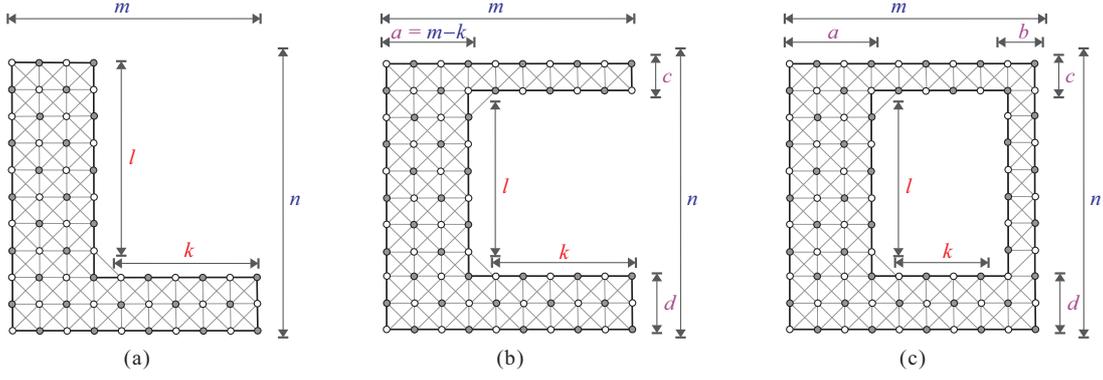}
\caption{The structure of (a) $L$-shaped supergrid graph $L(m,n; k,l)$, (b) $C$-shaped supergrid graph $C(m,n; k,l; c,d)$, and (c) $O$-shaped supergrid graph $O(m,n; k,l; a,b,c,d)$.} \label{Fig_LCO-shaped}
\end{center}
\end{figure}

In proving our results, we need to partition a supergrid graph into $k$ disjoint parts, where $k\geqslant 2$. The partition is defined as follows.

\begin{defn}
Let $G$ be a supergrid graph. A \textit{separation operation} on $G$ is a partition of $G$ into $k$ vertex-disjoint supergrid subgraphs $G_1$, $G_2$, $\cdots$, $G_k$, i.e., $V(G)=V(G_1)\cup V(G_2)\cup \cdots \cup V(G_k)$ and $V(G_i)\cap V(G_j) =\emptyset$ for $i\neq j$ and $1\leqslant i, j\leqslant k$, where $k\geqslant 2$. A separation is called \textit{vertical} if it consists of a set of horizontal edges, and is called \textit{horizontal} if it contains a set of vertical edges.
\end{defn}

Let $(G, s, t)$ denote the supergrid graph $G$ with two specified distinct vertices $s$ and $t$. Without loss of generality, we will assume that $s_x \leqslant t_x$ in the rest of the paper, except we explicitly change this assumption. We denote a Hamiltonian path between $s$ and $t$ in $G$ by $HP(G, s, t)$. We say that $HP(G, s, t)$ does exist if there is a Hamiltonian $(s, t)$-path in $G$. Next, we will introduce some previously established results.

Let $R(m, n)$ be a rectangular supergrid graph with $m\geqslant n\geqslant 2$, $\mathcal{C}$ be a cycle of $R(m, n)$, and let $H$ be a boundary of $R(m, n)$, where $H$ is a subgraph of $R(m, n)$. The restriction of $\mathcal{C}$ to $H$ is denoted by $\mathcal{C}_{| H}$. If $|\mathcal{C}_{| H}|=1$, i.e. $\mathcal{C}_{| H}$ is a boundary path on $H$, then $\mathcal{C}_{| H}$ is called \textit{flat face} on $H$. If $|\mathcal{C}_{| H}|>1$ and $\mathcal{C}_{| H}$ contains at least one boundary edge of $H$, then $\mathcal{C}_{| H}$ is called \textit{concave face} on $H$. A Hamiltonian cycle of $R(m, 3)$ is called \textit{canonical} if it contains three flat faces on two shorter boundaries and one longer boundary, and it contains one concave face on the other boundary, where the shorter boundary consists of three vertices. And, a Hamiltonian cycle of $R(m, n)$ with $n=2$ or $n\geqslant 4$ is said to be \textit{canonical} if it contains three flat faces on three boundaries, and it contains one concave face on the other boundary. The following lemma states the result in \cite{Hung15} concerning the Hamiltonicity of rectangular supergrid graphs.

\begin{lem}\label{HC-rectangular_supergrid_graphs}
(See \cite{Hung15}) Let $R(m, n)$ be a rectangular supergrid graph with $m\geqslant n\geqslant 2$. Then, the following statements hold true:\\
$(1)$ if $n=3$, then $R(m, 3)$ contains a canonical Hamiltonian cycle;\\
$(2)$ if $n=2$ or $n\geqslant 4$, then $R(m, n)$ contains four canonical Hamiltonian cycles with concave faces being on different boundaries.
\end{lem}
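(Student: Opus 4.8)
The plan is to argue by induction on $m+n$, combining a short list of explicitly constructed base cases with a single ``combination'' step that fuses the Hamiltonian cycles of two adjacent subrectangles. For the base cases I would exhibit canonical Hamiltonian cycles directly for the smallest rectangles — $R(3,3)$ for statement $(1)$, and rectangles such as $R(4,2)$, $R(4,4)$ together with a few neighboring sizes for statement $(2)$ — drawing each cycle so that its concave face sits on the prescribed boundary. These finitely many pictures both anchor the induction and make concrete the flat/concave face distinction of the definition.

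The engine of the induction is a combination step that exploits the crossed (and extra horizontal/vertical) edges available in supergrid graphs. Suppose $R(m,n)$ is split by a separation operation into two adjacent rectangular supergrid subgraphs $R_1$ and $R_2$ sharing a boundary $B$, and suppose each $R_i$ carries a Hamiltonian cycle $\mathcal{C}_i$ whose restriction to $B$ is a flat face, i.e. a straight boundary path running along $B$. Then each flat face supplies edges lying along $B$, and a facing pair $e_1\in\mathcal{C}_1$, $e_2\in\mathcal{C}_2$ satisfies $e_1\thickapprox e_2$. Deleting $e_1,e_2$ and inserting the two parallel edges joining their endpoints across $B$ merges $\mathcal{C}_1$ and $\mathcal{C}_2$ into a single Hamiltonian cycle $\mathcal{C}$ of $R(m,n)$. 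The crucial point is that this rewiring is local to $B$: the flat faces of $\mathcal{C}_1,\mathcal{C}_2$ on the three outer boundaries are left intact and fuse into flat faces of $\mathcal{C}$, while two concave faces placed on a common outer boundary merge into one concave face of $\mathcal{C}$.

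With this step in hand, statement $(1)$ follows by applying a horizontal separation to $R(m,3)$, taking for each piece a canonical cycle with concave face on (say) the lower long boundary and a flat face on the shared short boundary, and combining; the two length-$3$ boundaries remain flat throughout, which reflects exactly why the canonical definition for $n=3$ forces the concave face onto a long boundary. For statement $(2)$ I would obtain the four required cycles by choosing the cut relative to the target concave boundary: to place the concave face on a boundary parallel to the cut, split so that both pieces carry concave faces on that boundary and combine as above; to place it on a boundary transverse to the cut, cut in the perpendicular direction so that the target boundary lies wholly inside a single piece (using the transpose form of the induction hypothesis when a piece falls outside the convention $m\geqslant n$). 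The main obstacle is precisely this bookkeeping: guaranteeing that after the local rewiring exactly one boundary carries a concave face while the other three stay flat, and that a compatible pair $e_1\thickapprox e_2$ always exists on $B$ for every admissible split $m_1,m_2$ (or $n_1,n_2$) and every choice of target boundary. Establishing these compatibility conditions — rather than the mere existence of some Hamiltonian cycle — is where the real work lies.
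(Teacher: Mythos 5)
First, a point of context: the paper never proves this lemma at all — it is imported verbatim from \cite{Hung15} as background — so there is no internal proof to compare against, and I can only judge your argument on its own terms. Your fusing mechanism itself is sound and is precisely what the paper later codifies as Statements (1)--(2) of Proposition \ref{Pro_Obs}: two cycles whose restrictions to the shared interface are flat faces always supply a parallel pair $e_1\thickapprox e_2$, and exchanging these two edges for the two cross edges merges the cycles. For $n=3$ and $n\geqslant 4$ your plan can be carried through, with two caveats: the parallel pair must be chosen at the end of the interface touching the boundary that is supposed to stay flat (otherwise that row splits into two components in the merged cycle and you manufacture a second concave face — the bookkeeping you yourself flag as the real work); and what you call a ``horizontal separation'' of $R(m,3)$ is, in the paper's terminology, a \emph{vertical} one, since it consists of horizontal edges (a genuinely horizontal separation would produce $R(m,1)$, which has no cycle at all).

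The genuine gap is the case $n=2$ of statement (2). The base cases you promise to exhibit — e.g.\ $R(4,2)$ ``drawn so that its concave face sits on the prescribed boundary'' — do not exist. Under the definitions used here, a flat face forces the cycle to contain the entire boundary path of that side, and a concave face must have at least two components yet contain a boundary edge. Now check $R(m,2)$: if three faces are flat, the vertices of the fourth side have no unused neighbours outside their own row (every vertex of the opposite row is already consumed by its flat face), so they must be traversed consecutively and the fourth face is flat as well; hence the only Hamiltonian cycle of $R(m,2)$ with three flat faces is the boundary cycle, which has four. Moreover a two-vertex boundary can never carry a concave face, since two components on two vertices contain no edge at all. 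So no Hamiltonian cycle of a two-row rectangle has exactly three flat faces and one concave face: neither your base cases nor any sequence of merges can produce the four claimed canonical cycles for $n=2$, and your induction cannot get off the ground there. Any correct treatment must handle $n=2$ separately and degenerately (and must first settle what ``canonical'' is even to mean for two-row rectangles, a subtlety the statement inherits from the cited source); asserting the existence of these base-case pictures, as your proposal does, is not a repairable oversight but a step that provably fails.
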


\begin{defn}
Assume that $G$ is a connected supergrid graph and $V_1$ is a subset of the vertex set $V(G)$. $V_1$ is a \textit{vertex cut} if $G-V_1$ is disconnected. A vertex $v\in V(G)$ is a \textit{cut vertex}, if $G-\{v\}$ is disconnected. For an example, in Fig. \ref{Fig_ForbiddenConditionF1}(a) $t$ is a cut vertex, and in Fig. \ref{Fig_ForbiddenConditionF1}(b) $\{s, t\}$ is a vertex cut.
\end{defn}

\begin{figure}[!t]
\begin{center}
\includegraphics[scale=0.9]{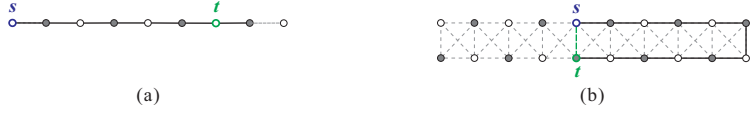}
\caption{Rectangular supergrid graphs in which there is no Hamiltonian $(s, t)$-path for (a) $R(m, 1)$, and (b) $R(m, 2)$, where solid lines indicate the longest path between $s$ and $t$.} \label{Fig_ForbiddenConditionF1}
\end{center}
\end{figure}

In \cite{Hung17a}, the authors showed that $HP(R(m,n), s, t)$ does not exist if the following condition holds:

\begin{description}
  \item[(F1)] $s$ or $t$ is a cut vertex, or $\{s, t\}$ is a vertex cut (see Fig. \ref{Fig_ForbiddenConditionF1}(a) and Fig. \ref{Fig_ForbiddenConditionF1}(b)).
\end{description}

In addition to condition (F1) (as depicted in Fig. \ref{Fig_ForbiddenCondition_L-F2F3}(a) and  \ref{Fig_ForbiddenCondition_L-F2F3}(b)), in \cite{Hung18, Keshavarz19a}, we showed that $HP(L(m,n; k,l), s, t)$ does not exist whenever one of the following conditions is satisfied.

\begin{description}
  \item[(F2)] assume that $G$ is a supergrid graph, there exists a vertex $w \in G$ such that $deg(w) = 1$, $w \neq s$, and $w \neq t$ (see Fig. \ref{Fig_ForbiddenCondition_L-F2F3}(c)).
  \item[(F3)] $m-k = 1$, $n-l = 2$, $l = 1$, $k \geq 2$, and $\{s, t\} = \{(1, 2), (2, 3)\}$ or $\{(1, 3), (2, 2)\}$ (see Fig. \ref{Fig_ForbiddenCondition_L-F2F3}(d)).
\end{description}

\begin{figure}[!t]
\begin{center}
\includegraphics[width=0.8\textwidth]{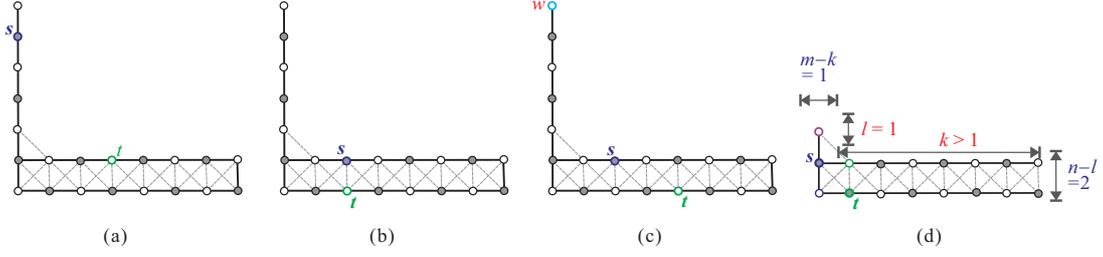}
\caption{$L$-shaped supergrid graph in which there is no Hamiltonian $(s, t)$-path for (a) $s$ is a cut vertex, (b) $\{s, t\}$ is a vertex cut, (c) there exists a vertex $w$ such that $deg(w)=1$, $w\neq s$, and $w\neq t$, and (d) $m-k=1$, $n-l=2$, $l=1$, $k\geqslant 2$, and $\{s, t\}=\{(1, 2), (2, 3)\}$.} \label{Fig_ForbiddenCondition_L-F2F3}
\end{center}
\end{figure}

In addition to conditions (F1) (as depicted in Fig. \ref{Fig_ForbiddenCondition_C-F4F6}(a)--\ref{Fig_ForbiddenCondition_C-F4F6}(b)) and (F2) (as depicted in Fig. \ref{Fig_ForbiddenCondition_C-F4F6}(c)), in \cite{Keshavarz19b}, we showed that $C(m,n; k,l; c,d)$ contains no Hamiltonian $(s, t)$-path if $(C(m,n; k,l; c,d), s, t)$ satisfies one of the following conditions.

\begin{figure}[!t]
\centering
\includegraphics[width=0.8\textwidth]{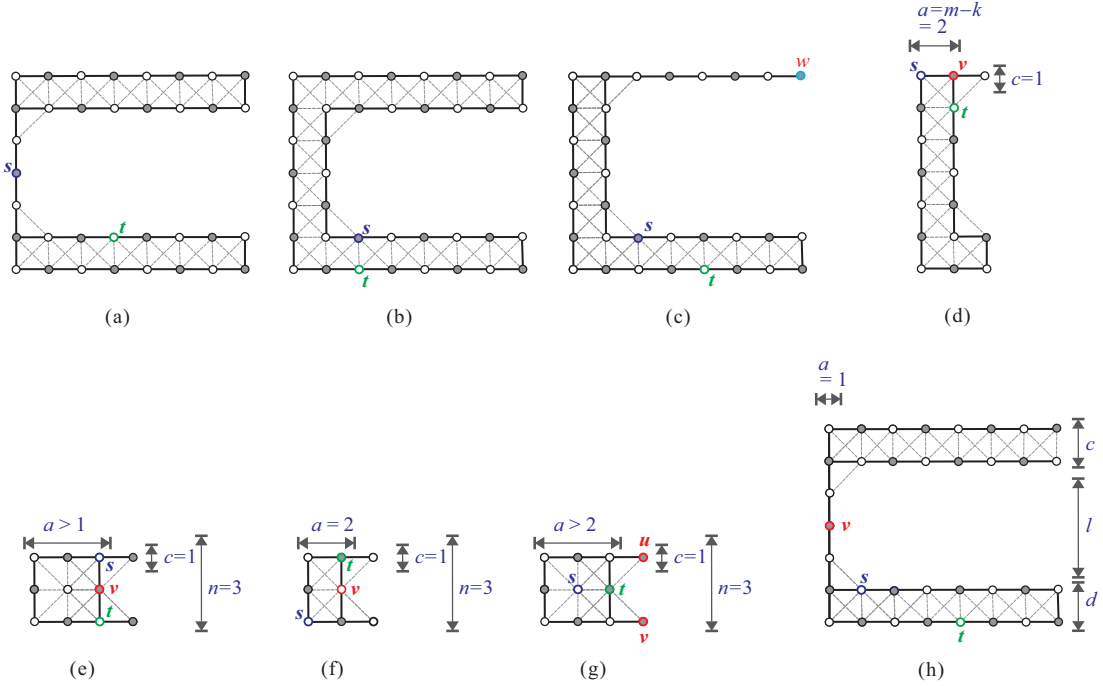}
\caption{Some $C$-shaped supergrid graphs in which there is no Hamiltonian $(s, t)$-path.}\label{Fig_ForbiddenCondition_C-F4F6}
\end{figure}

\begin{description}
  \item[(F4)] $m=3$, $a=m-k=2$, and $[(c=1$ and $\{s, t\} = \{(1, 1), (2, 2)\}$ or $\{(1, 2), (2, 1)\})$ or $(d=1$ and $\{s, t\} = \{(1, n), (2, n-1)\}$ or $\{(1, n - 1), (2, n)\})]$ (see Fig. \ref{Fig_ForbiddenCondition_C-F4F6}(d)).
  \item[(F5)] $n=3$, $k=c=d=1$, and\\
\hspace{0.65cm}(1) $a \geqslant 2$, $s_x=t_x=m-1$, and $|s_y-t_y|=2$ (see Fig. \ref{Fig_ForbiddenCondition_C-F4F6}(e)); or\\
\hspace{0.65cm}(2) $a = 2$, $s_x = 1$, $t_x = 2$, and $|s_y-t_y|=2$ (see Fig. \ref{Fig_ForbiddenCondition_C-F4F6}(f)); or\\
\hspace{0.65cm}(3) $a > 2$, $s_x < m - 1$, and $t = (m-1, 2)$ (see Fig. \ref{Fig_ForbiddenCondition_C-F4F6}(g)).
  \item[(F6)] $a=m-k=1$, and ($s_y, t_y\leqslant c$ or $s_y, t_y > c + l$) (see Fig. \ref{Fig_ForbiddenCondition_C-F4F6}(h)).
\end{description}

\begin{thm}\label{HP-Theorem-RLCshaped}(See \cite{Hung17a, Hung18, Keshavarz19a, Keshavarz19b})
Let $G$ be a supergrid graph with vertices $s$ and $t$, where $G$ is rectangular, $L$-shaped, or $C$-shaped. $(G,s,t)$ contains a Hamiltonian $(s, t)$-path, i.e., $HP(G, s, t)$ does exist if and only if $(G, s, t)$ does not satisfy conditions $\mathrm{(F1)}$--$\mathrm{(F6)}$.
\end{thm}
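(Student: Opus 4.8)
The plan is to prove both directions of the equivalence by treating the three shapes separately, since the listed obstructions are not uniform across them: (F1) is meaningful for every supergrid graph, (F2) can only occur once a reflex corner is present (so for $L$- and $C$-shaped graphs), (F3) is a purely $L$-shaped degeneracy, and (F4)--(F6) are $C$-shaped degeneracies. A preliminary observation I would record is that each condition not naming $G$'s shape is in fact vacuous for the other shapes: a rectangular $R(m,n)$ with $m,n\geqslant 2$ has minimum degree $3$, so (F2) cannot hold, and (F3)--(F6) reference hole parameters $k,l,c,d$ that a rectangle does not possess; likewise (F3) and (F4)--(F6) are stated against the distinct $L$- and $C$-shaped structures and cannot be triggered by the wrong shape. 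Consequently the compound hypothesis ``$(G,s,t)$ does not satisfy (F1)--(F6)'' collapses, for each individual shape, to exactly the hypothesis under which that shape's Hamiltonian connectivity was characterized in the cited work.

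For the necessity direction (forbidden $\Rightarrow$ no path), I would argue each condition is a genuine obstruction to a Hamiltonian $(s,t)$-path $P$. The core tool is that deleting an endpoint from $P$ leaves a single path; so if $s$ (or $t$) is a cut vertex then $P$ minus that endpoint would be a connected spanning path of a disconnected graph, which is impossible, and if $\{s,t\}$ is a vertex cut then $P-\{s,t\}$ would be a single path spanning the disconnected $G-\{s,t\}$, again impossible. This disposes of (F1). Condition (F2) is the standard remark that a degree-$1$ vertex can only be a path endpoint, so a degree-$1$ vertex distinct from both $s$ and $t$ rules out any Hamiltonian $(s,t)$-path. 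The bounded-dimension configurations (F3)--(F6) are checked directly by the same reasoning applied to the specific named vertices, verifying that every candidate path is forced to strand a vertex; these are precisely the necessity statements already cited from \cite{Hung18, Keshavarz19a, Keshavarz19b}.

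The substantive direction is sufficiency: assuming none of (F1)--(F6) holds, I would construct a Hamiltonian $(s,t)$-path. The engine is Lemma \ref{HC-rectangular_supergrid_graphs}, which supplies canonical Hamiltonian cycles of rectangular blocks with a prescribed concave face on any chosen boundary. The construction proceeds by a separation operation that cuts $G$ into rectangular subgraphs (two cuts for an $L$-shape, a sequence of cuts for a $C$-shape around the removed block), builds a canonical Hamiltonian cycle in each block with concave faces aligned so that adjacent blocks share a pair of parallel boundary edges, merges the cycles across those shared edges into one Hamiltonian cycle of $G$, and finally ``opens'' the cycle at $s$ and $t$; that is, one chooses the decomposition and the merging edges so that the resulting cycle contains an edge incident to $s$ and one incident to $t$ whose removal yields a Hamiltonian $(s,t)$-path.

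The hard part will be the case analysis driving this construction: the placement of $s$ and $t$ (same block or different blocks, on a boundary or interior, near the reflex corner or far from it) and the small or degenerate dimensions ($n=2$, $n=3$, $m-k=1$, $l=1$, and so on) each demand a slightly different choice of cut lines and concave-face orientations, and it is precisely the degenerate cases that shrink down to the forbidden configurations (F3)--(F6). The construction must therefore be shown to succeed in every non-degenerate case while matching the excluded list exactly. Rather than re-derive this in full, I would invoke the completed characterizations: sufficiency for rectangular $G$ is \cite{Hung17a}, for $L$-shaped $G$ it is \cite{Hung18, Keshavarz19a}, and for $C$-shaped $G$ it is \cite{Keshavarz19b}. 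Combining these with the necessity statements above and the vacuity observation of the first paragraph yields the stated equivalence for all three shapes simultaneously.
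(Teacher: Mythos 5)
This statement is a background result that the paper itself never proves---it is imported wholesale from the cited works \cite{Hung17a, Hung18, Keshavarz19a, Keshavarz19b}---and your proposal, after supplying the elementary necessity arguments for (F1)--(F2) and sketching the cycle-merging machinery, likewise defers the degenerate-case necessity (F3)--(F6) and the entire sufficiency direction to those same references, so it is correct and takes essentially the same approach. (One small caveat: your vacuity claim that (F2) cannot occur in rectangles via ``minimum degree $3$'' requires $m,n\geqslant 2$; in the degenerate case $R(m,1)$ condition (F2) can hold, but there it always forces (F1) as well, so the collapse to each shape's own characterization still goes through.)
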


The Hamiltonian $(s, t)$-path $P$ of $R(m, n)$ constructed in \cite{Hung17a} satisfies that $P$ contains at least one boundary edge of each boundary, and is called \textit{canonical}.

\begin{lem}\label{HamiltonianConnected-Rectangular}
(See \cite{Hung17a}) Let $R(m, n)$ be a rectangular supergrid graph with $m, n \geqslant 1$, and let $s$ and $t$ be its two distinct vertices. If $(R(m, n), s, t)$ does not satisfy condition $\mathrm{(F1)}$, then there exists a canonical Hamiltonian $(s, t)$-path of $R(m, n)$, i.e., $HP(R(m, n), s, t)$ does exist.
\end{lem}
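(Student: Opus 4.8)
The plan is to prove Lemma~\ref{HamiltonianConnected-Rectangular} by induction on the dimensions $m$ and $n$, reducing larger rectangles to smaller ones via the separation operation, with the canonical Hamiltonian cycles of Lemma~\ref{HC-rectangular_supergrid_graphs} serving as the engine for the construction. First I would dispose of the degenerate base cases. If $n=1$ (or symmetrically $m=1$), then $R(m,n)$ is a path, and the only way condition (F1) can fail is when neither $s$ nor $t$ is a cut vertex; since the interior vertices of a path are all cut vertices, this forces $\{s,t\}$ to be the two endpoints, and the whole path is the desired Hamiltonian $(s,t)$-path. For small fixed rectangles such as $R(2,2)$, $R(2,3)$, $R(3,3)$ one can simply enumerate all placements of $s,t$ that do not trigger (F1) and exhibit a canonical path in each, which anchors the induction.

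For the inductive step I would argue by the position of $s$ and $t$ relative to a cut line. Assume without loss of generality $s_x\leqslant t_x$ and $m\geqslant n$. The key idea is to apply a vertical separation that splits $R(m,n)$ into two rectangular subgraphs $R_1$ and $R_2$ along a horizontal strip of edges, chosen so that each of $s$ and $t$ lands in a subgraph where it is not a cut vertex. There are two regimes. If $s$ and $t$ lie in the same half after an appropriate cut, I would build a Hamiltonian $(s,t)$-path of that half by the induction hypothesis and a canonical Hamiltonian \emph{cycle} of the other half (guaranteed by Lemma~\ref{HC-rectangular_supergrid_graphs}), then splice the cycle into the path across a pair of parallel boundary edges $e_1\thickapprox e_2$ lying on the separating strip. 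Because the canonical objects are required to contain a boundary edge on each boundary, such a parallel pair always exists on the cut line, and replacing $e_1,e_2$ by the two crossed edges joining their endpoints merges cycle and path into a single longer canonical path. If instead $s$ and $t$ end up in different halves, I would route a canonical $(s,x)$-path in $R_1$ ending at a vertex $x$ on the cut line and a canonical $(y,t)$-path in $R_2$ starting at a neighbor $y$ of $x$ on the cut line, and concatenate them with $P_1\Rightarrow P_2$; the freedom to choose which boundary carries the concave face lets me place the endpoints $x,y$ adjacently.

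The subtlety that controls the whole argument is ensuring that the smaller instances fed to the induction hypothesis themselves avoid condition (F1): a naive cut could push $s$ or $t$ against a width-one or width-two sub-rectangle where it becomes a cut vertex or part of a vertex cut. I would handle this by always cutting so that the half containing a given endpoint has width at least $2$ (in both dimensions), shifting the cut line by one unit when an endpoint sits too close to the boundary, and treating the narrow leftover strips $R(m,1)$ and $R(m,2)$ by the explicit boundary-path constructions rather than by recursion. The main obstacle I anticipate is precisely this bookkeeping: verifying, across all relative placements of $s$ and $t$ (same row, same column, diagonal, near a corner, on the separating line), that a legal cut exists and that the spliced path is genuinely canonical, i.e.\ still meets every boundary in at least one edge. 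The cleanest route through this case analysis is to exploit the four-fold choice of concave-face boundary in Lemma~\ref{HC-rectangular_supergrid_graphs}, which gives enough flexibility to align a boundary edge of each piece with the cut line in every configuration.
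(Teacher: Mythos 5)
You should first note a mismatch of expectations: this paper does not prove Lemma~\ref{HamiltonianConnected-Rectangular} at all. It is stated as a background result imported from \cite{Hung17a}, so there is no in-paper argument to compare yours against. Judged on its own terms, your sketch does reconstruct the strategy that the cited work actually uses: explicit constructions for narrow/small rectangles, separation into rectangular pieces, a Hamiltonian $(s,t)$-path in the piece containing the endpoints merged with canonical Hamiltonian cycles of the complementary pieces via parallel boundary edges (Statement (2) of Proposition~\ref{Pro_Obs}), or two paths concatenated across the cut when $s$ and $t$ are separated. Your observations that the merge preserves canonicity (the destroyed edges lie on interior columns) and that a flat face of the complementary cycle guarantees a parallel partner for the path's boundary edge are correct.

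However, as a standalone proof the proposal has genuine gaps. First, the entire mathematical content of this lemma is the case analysis you explicitly defer as ``bookkeeping'': verifying, for every placement of $s$ and $t$ (including the narrow rectangles $n\leqslant 3$, where vertex cuts of the form $\{(i,1),(i,2)\}$ constrain which cuts and connector vertices are legal), that a valid separation exists and that the sub-instances avoid (F1). Announcing that the four-fold choice of concave face ``gives enough flexibility'' is a conjecture about the case analysis, not a proof of it. Second, a concrete step that fails as stated: when the complementary piece is a width-one strip $R(1,n)$, Lemma~\ref{HC-rectangular_supergrid_graphs} provides no Hamiltonian cycle (it requires both dimensions at least $2$), so the cycle--path splice is unavailable; moreover a Hamiltonian path of the strip runs from one end to the other, and none of the statements of Proposition~\ref{Pro_Obs} lets you absorb such a path into a path of the other piece unless $n\leqslant 2$. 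Your fallback, ``explicit boundary-path constructions,'' is precisely the missing argument here --- one must either show the cut can always be chosen so that no width-one complement arises (which again depends on the positions of $s$ and $t$), or give a genuinely different construction for those configurations. Until those two items are filled in, what you have is a correct plan rather than a proof.
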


Consider that $(R(m, n), s, t)$ does not satisfy condition (F1). Let $w=(1, 1)$, $z=(2, 1)$, and $f=(3, 1)$ be three vertices of $R(m, n)$ with $m\geqslant 3$ and $n\geqslant 2$. In \cite{Keshavarz19a}, we proved that there exists a Hamiltonian $(s, t)$-path $Q$ of $R(m, n)$ such that $(z, f)\in Q$ if the following condition (F7) is satisfied; and $(w, z)\in Q$ otherwise.

\begin{description}
  \item[(F7)] $n=2$ and $\{s, t\}\in \{\{w, z\}, \{(1, 1), (2, 2)\}, \{(2, 1), (1, 2)\}\}$, or $n\geqslant 3$ and $\{s, t\}=\{w, z\}$.
\end{description}

The above result is presented as follows.

\begin{lem}\label{HamiltonianConnected-Rectangular-wz_rectangle}
(See \cite{Keshavarz19a}) Let $R(m, n)$ be a rectangular supergrid graph with $m\geqslant 3$ and $n\geqslant 2$, $s$ and $t$ be its two distinct vertices, and let $w=(1, 1)$, $z=(2, 1)$, and $f=(3, 1)$. If $(R(m, n), s, t)$ does not satisfy condition $\mathrm{(F1)}$, then there exists a canonical Hamiltonian $(s, t)$-path $Q$ of $R(m, n)$ such that $(z, f)\in Q$ if $(R(m, n), s, t)$ does satisfy condition $\mathrm{(F7)}$; and $(w, z)\in Q$ otherwise.
\end{lem}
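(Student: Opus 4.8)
The plan is to prove Lemma~\ref{HamiltonianConnected-Rectangular-wz_rectangle} by refining the construction already guaranteed by Lemma~\ref{HamiltonianConnected-Rectangular}. That lemma gives us a \emph{canonical} Hamiltonian $(s,t)$-path of $R(m,n)$ whenever (F1) fails, and ``canonical'' means the path uses at least one boundary edge of each of the four boundaries. The new content here is a \emph{local} control statement: we want the path to traverse one specific edge of the bottom-left corner, namely $(z,f)$ in the special case (F7) and $(w,z)$ otherwise. So the strategy is not to build a path from scratch but to show that the canonical paths of Lemma~\ref{HamiltonianConnected-Rectangular} can always be arranged to pass through the desired edge among $w=(1,1)$, $z=(2,1)$, $f=(3,1)$, all lying consecutively on the bottom boundary.

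First I would set up the case distinction exactly as the condition (F7) dictates. The intuition behind (F7) is about degree/boundary forcing at the corner: when $\{s,t\}=\{w,z\}$ (and in the thin $n=2$ variants), the endpoints \emph{are} the two leftmost bottom vertices, so a Hamiltonian path cannot also route the corner through the edge $(w,z)$ in the way the generic construction wants — instead the path must enter the interior via $(z,f)$. In the complementary case the corner edge $(w,z)$ is available and the canonical path can be forced to use it. So I would first handle the generic case: take the canonical Hamiltonian $(s,t)$-path $P$ from Lemma~\ref{HamiltonianConnected-Rectangular} and argue that, because $P$ contains a bottom boundary edge and $w$ is a corner of degree arrangement forcing its incident bottom edge, one can perform a local rerouting (an exchange of two parallel edges, using the $\thickapprox$ relation introduced in the preliminaries) so that $(w,z)\in P$ while keeping it Hamiltonian and keeping the endpoints fixed. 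The key device is that $R(m,n)$ is dense enough near the corner — $w$ is adjacent to $z$, to $(1,2)$, to $(2,2)$ — that any canonical path can be twisted to absorb the corner edge without changing $s,t$.

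For the (F7) case, I would instead show that the same canonical construction, or a mirror of it, yields $(z,f)\in Q$. Here the obstruction is that $w$ or $z$ is an endpoint (or the graph is a $2$-row strip where both corner vertices are pinned), so the corner edge $(w,z)$ cannot be an interior edge of the path; the path must leave the corner through $z\to f$. I would verify the small configurations of (F7) directly: $n=2$ with $\{s,t\}$ one of the three listed pairs, and $n\geqslant 3$ with $\{s,t\}=\{w,z\}$. In each of these there are only finitely many corner layouts up to the structure of the rest of $R(m,n)$, so I can exhibit the canonical path explicitly near the corner (passing $z\to f$) and append the canonical Hamiltonian path of the remaining $R(m-2,n)$ or $R(m-1,\cdot)$ block guaranteed by Lemma~\ref{HamiltonianConnected-Rectangular}, concatenating via $\Rightarrow$ at an adjacent pair of vertices.

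The main obstacle I anticipate is the local rerouting argument in the generic case: one must show that \emph{every} canonical Hamiltonian $(s,t)$-path can be modified to include $(w,z)$ without destroying Hamiltonicity or moving the endpoints, and in particular that the modification does not fail when $s$ or $t$ happens to sit near the corner (but is not one of the excluded (F7) configurations). Handling these near-corner boundary subcases cleanly — deciding when a parallel-edge swap at the corner is legal and when instead one must fall back on a separation of $R(m,n)$ into $R(2,n)$ (the two left columns, which contain $w,z,f$'s column) plus $R(m-2,n)$ and reassemble — is where the real case analysis lives. The redeeming feature is that all of this is purely local to the bottom-left $2\times 2$ or $3\times 2$ patch, so the number of genuinely distinct cases is small and each is settled by a short explicit path together with an appeal to Lemma~\ref{HamiltonianConnected-Rectangular} for the bulk of the graph.
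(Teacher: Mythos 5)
First, a point of comparison: the paper does not actually prove this lemma — it is imported verbatim from \cite{Keshavarz19a}, so there is no in-paper proof to match your proposal against. Your proposal therefore has to stand on its own as a complete argument, and it does not.

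The genuine gap is in the generic (non-(F7)) case, which is the main content of the lemma. There, your entire argument is the assertion that any canonical Hamiltonian $(s,t)$-path given by Lemma \ref{HamiltonianConnected-Rectangular} ``can be twisted to absorb the corner edge'' $(w,z)$ via an exchange of parallel edges. That step is named but never carried out, and the operation you invoke is not sound as stated: replacing an edge of a single Hamiltonian path by a parallel edge is precisely the kind of move that can split the path into a path plus a vertex-disjoint cycle, or detach the corner vertices entirely. Proposition \ref{Pro_Obs}, the paper's tool for the $\thickapprox$ relation, only \emph{merges} structures that are already vertex-disjoint (two cycles, or a cycle and a path); it provides no rerouting principle inside one Hamiltonian path that fixes the endpoints, keeps the result a single $(s,t)$-path, and keeps it canonical — and canonicity of the output is part of the statement, which your modification never addresses. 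The difficulty is concentrated exactly in the subcases you defer: when $s$ or $t$ lies in the bottom-left patch but $\{s,t\}$ is not an (F7) pair (for instance $n=2$, $s=w$, $t=(1,2)$), no local surgery on an arbitrary given path is available, and the path must be built globally — by stripping off columns or rows, prescribing the endpoints of the Hamiltonian paths of the pieces so that the desired corner edge is forced, and concatenating with $\Rightarrow$, which is how the construction in \cite{Keshavarz19a} actually proceeds. Your (F7) branch, by contrast, is essentially right: when $\{s,t\}=\{w,z\}$ (or in the $n=2$ variants) the edge $(w,z)$ cannot lie on any Hamiltonian $(s,t)$-path, since both of its endpoints are path endpoints (or a degree count at $(1,2)$ closes the path after four vertices), so the path must leave the corner through $(z,f)$, and those finitely many corner patterns plus Lemma \ref{HamiltonianConnected-Rectangular} on the remainder can be completed into a proof. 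But as submitted, the main case rests on an unproved and, in general, invalid rerouting principle, so the proposal has a genuine gap.
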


For a 3-rectangle $R(m, 3)$, we obtain the following lemma in \cite{Keshavarz19b}.


\begin{lem}\label{HP-3rectangle-boundary_path}
(See \cite{Keshavarz19b}) Let $R(m, 3)$ be a $3$-rectangle with $m\geqslant 3$, and let $s$ and $t$ be its two distinct vertices. Let $z_1=(m, 1)$, $z_2=(m, 2)$, and $z_3=(m, 3)$ be three vertices of $R(m, 3)$, and let edges $e_{12}=(z_1, z_2)$, $e_{23}=(z_2, z_3)$. If $\{s, t\}\cap\{z_1, z_2, z_3\}=\emptyset$, then there exists a Hamiltonian $(s, t)$-path of $R(m, 3)$ containing $e_{12}$ and $e_{23}$.
\end{lem}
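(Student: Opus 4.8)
The plan is to prove Lemma~\ref{HP-3rectangle-boundary_path} by exhibiting, for every choice of $s$ and $t$ avoiding the rightmost column $\{z_1,z_2,z_3\}$, a Hamiltonian $(s,t)$-path of $R(m,3)$ whose restriction to the right boundary is exactly the boundary path $z_1\rightarrow z_2\rightarrow z_3$, so that both $e_{12}$ and $e_{23}$ are used. The natural tool is Lemma~\ref{HP-3rectangle-boundary_path}'s predecessors: by Lemma~\ref{HC-rectangular_supergrid_graphs}(1), $R(m,3)$ admits a canonical Hamiltonian \emph{cycle} $\mathcal{C}$ that has a flat face on each of the two shorter (three-vertex) boundaries and on one longer boundary, and a concave face on the remaining longer boundary. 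I would orient $\mathcal{C}$ so that its flat face on the right shorter boundary is precisely the subpath $z_1\rightarrow z_2\rightarrow z_3$; this already secures the two target edges $e_{12},e_{23}$ inside a Hamiltonian cycle.

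The second step is to convert the cycle into an $(s,t)$-path by deleting a single edge, without disturbing the right boundary. Since $\{s,t\}\cap\{z_1,z_2,z_3\}=\emptyset$, both $s$ and $t$ lie in the subgraph $R(m-1,3)$ obtained by dropping the last column. The idea is to reserve the right boundary path $z_1\rightarrow z_2\rightarrow z_3$ as a rigid ``cap'' and to route a Hamiltonian $(s,t)$-path through the remaining part $R(m-1,3)$ that enters and leaves this cap appropriately. Concretely, I would separate $R(m,3)$ horizontally into $R(m-1,3)$ and the single column $C=\{z_1,z_2,z_3\}$ via a horizontal separation (a set of vertical/crossed edges joining column $m-1$ to column $m$). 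On $R(m-1,3)$ I invoke Lemma~\ref{HamiltonianConnected-Rectangular} (or Lemma~\ref{HamiltonianConnected-Rectangular-wz_rectangle} when $m-1\geq 3$) to obtain a canonical Hamiltonian $(s,t)$-path $P'$; canonicity guarantees $P'$ uses a boundary edge of the right boundary of $R(m-1,3)$, i.e.\ an edge in column $m-1$ that I can break and splice the cap $z_1\rightarrow z_2\rightarrow z_3$ into, using the crossed/vertical connections between columns $m-1$ and $m$.

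The splicing is the technical heart. If $P'$ contains a vertical edge $((m-1,i),(m-1,i+1))$ on the right boundary, I remove it and reconnect through the cap: replace that edge by the detour $(m-1,i)\rightarrow z_i \rightarrow \dots \rightarrow z_{i+1}\rightarrow(m-1,i+1)$ traversing all three of $z_1,z_2,z_3$ in the order $z_1\rightarrow z_2\rightarrow z_3$. Because column $m-1$ vertices are each adjacent (horizontally and diagonally) to the three $z$-vertices, there is enough adjacency freedom to enter the cap at one end and exit at the other while covering $z_1,z_2,z_3$ consecutively; one checks the two cases according to whether $P'$ meets the right boundary at the top edge $((m-1,1),(m-1,2))$ or the bottom edge $((m-1,2),(m-1,3))$, and in each case the cap is inserted so that $e_{12}$ and $e_{23}$ both appear. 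Small base cases $m=3$ (and possibly $m=4$) would be handled separately by explicit drawings, since Lemma~\ref{HamiltonianConnected-Rectangular-wz_rectangle} needs $m-1\geq 3$ and the separation degenerates.

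The main obstacle I anticipate is guaranteeing that the chosen canonical path $P'$ on $R(m-1,3)$ actually presents a usable attachment edge on its right boundary that is compatible with the fixed cap orientation $z_1\rightarrow z_2\rightarrow z_3$ for \emph{every} placement of $s$ and $t$. When $s$ or $t$ sits in column $m-1$ itself, the canonical path may be forced to terminate there, leaving no free right-boundary edge to cut; in those configurations I would instead directly build the path by starting the boundary cap at $z_1$ (or $z_3$), running $z_1\rightarrow z_2\rightarrow z_3$ first, crossing into column $m-1$, and then applying Lemma~\ref{HamiltonianConnected-Rectangular} to the reduced rectangle with one endpoint pinned at the crossing vertex. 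Verifying that these boundary/corner placements of $s,t$ always admit such a pinned extension — using that $(R(m-1,3),s,t)$ cannot satisfy (F1) because $m-1\geq 2$ and neither $s$ nor $t$ is a cut vertex there — is the step requiring the most careful case analysis.
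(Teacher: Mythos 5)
There is nothing in the paper to compare your argument against: Lemma~\ref{HP-3rectangle-boundary_path} is imported as a background result from \cite{Keshavarz19b} and is stated here without proof, so your proposal can only be judged on its own merits. Its core is sound. Peeling off column $m$ as the cap $z_1\rightarrow z_2\rightarrow z_3$ and splicing it into a right-boundary edge of a \emph{canonical} Hamiltonian $(s,t)$-path $P'$ of $R(m-1,3)$ does work: by the paper's definition, a canonical path contains at least one boundary edge of \emph{every} boundary, so $P'$ contains $((m-1,1),(m-1,2))$ or $((m-1,2),(m-1,3))$, and in both cases the detour closes up, since $(m-1,1)\thicksim z_1$ and $z_3\thicksim(m-1,2)$, respectively $(m-1,2)\thicksim z_1$ and $z_3\thicksim(m-1,3)$; either detour inserts $z_1\rightarrow z_2\rightarrow z_3$ whole, hence both $e_{12}$ and $e_{23}$. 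This also makes the worry in your last paragraph unnecessary: even when $s$ or $t$ lies in column $m-1$, canonicity still supplies a usable right-boundary edge, and that edge cannot be $(s,t)$ itself, because the two endpoints of a Hamiltonian path on more than two vertices are never consecutive on it. Two smaller points: deleting a single edge from the Hamiltonian cycle of Lemma~\ref{HC-rectangular_supergrid_graphs} yields an $(s,t)$-path only when $s$ and $t$ are consecutive on that cycle, so your opening step should simply be dropped rather than ``converted''; and in the paper's terminology the cut severing column $m$ from the rest is a \emph{vertical} separation (it destroys horizontal and crossed edges), not a horizontal one.

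The one genuine error is your claim that $(R(m-1,3),s,t)$ cannot satisfy (F1) ``because $m-1\geqslant 2$ and neither $s$ nor $t$ is a cut vertex there.'' Condition (F1) also forbids $\{s,t\}$ being a vertex cut, and for $m=3$ this really occurs: in $R(2,3)$ the pair $\{(1,2),(2,2)\}$ is a vertex cut (deleting the middle row separates $\{(1,1),(2,1)\}$ from $\{(1,3),(2,3)\}$), so $R(2,3)$ has no Hamiltonian path between these two vertices and your reduction genuinely breaks down there --- not merely because Lemma~\ref{HamiltonianConnected-Rectangular-wz_rectangle} wants $m-1\geqslant 3$. The lemma itself still holds for this pair, e.g.\ $(1,2)\rightarrow(1,1)\rightarrow(2,1)\rightarrow(3,1)\rightarrow(3,2)\rightarrow(3,3)\rightarrow(2,3)\rightarrow(1,3)\rightarrow(2,2)$ in $R(3,3)$, so your plan to settle $m=3$ by explicit drawings does rescue the proof, but that finite case analysis is mandatory, not optional, and must in particular cover exactly this configuration. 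For $m\geqslant 4$ your conclusion is correct, but for the stronger reason that $R(m-1,3)$ with $m-1\geqslant 3$ has no two-vertex cuts at all (both dimensions being at least $3$); that, rather than the absence of cut vertices, is the justification you should give.
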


In \cite{Hung18} and \cite{Keshavarz19b}, we verified the Hamiltonicity of $L$-shaped and $C$-shaped supergrid graphs as follows.

\begin{thm}\label{HC-LCshaped}
(See \cite{Hung18, Keshavarz19a, Keshavarz19b}) Let $L(m,n; k,l)$ (resp. $C(m,n; k,l; c,d)$) be a $L$-shaped (resp. $C$-shaped) supergrid graph. Then, $L(m,n; k, l)$ (resp. $C(m,n; k,l; c,d)$) contains a Hamiltonian cycle if it does not satisfy condition $\mathrm{(F8)}$ (resp. $\mathrm{(F9)}$), where condition $\mathrm{(F8)}$ (resp. $\mathrm{(F9)}$) is defined as follows:
\end{thm}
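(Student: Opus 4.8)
The plan is to establish the Hamiltonicity of $L$-shaped and $C$-shaped supergrid graphs by the standard \emph{separation-and-merge} technique: cut the non-rectangular graph into rectangular pieces along a vertical or horizontal separation, invoke Lemma~\ref{HC-rectangular_supergrid_graphs} to obtain a canonical Hamiltonian cycle in each piece, and then splice these cycles together across the separating edges into a single Hamiltonian cycle of the whole graph. The forbidden conditions (F8) and (F9) are precisely the degenerate cases in which no such splice is possible, so the proof is an existence argument that must carefully avoid those cases.

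First I would treat the $L$-shaped graph $L(m,n;k,l)$. I would apply a horizontal separation that splits $L(m,n;k,l)$ into the rectangular block $R(m,n-l)$ sitting below the removed corner and the rectangular block $R(m-k,l)$ forming the remaining upper-left strip, choosing the cut so that both pieces have width and height at least the minimum required by Lemma~\ref{HC-rectangular_supergrid_graphs}. Each piece then carries a canonical Hamiltonian cycle, and by orienting the concave faces of the two canonical cycles toward their shared border I can locate a pair of parallel boundary edges $e_1 \approx e_2$, one in each cycle, lying across the separation. Deleting $e_1$ and $e_2$ and reconnecting the four endpoints by the two crossing edges guaranteed by the supergrid adjacency merges the two cycles into one Hamiltonian cycle of $L(m,n;k,l)$. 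The case $n=3$ or $l=1$ must be handled with the 3-rectangle variant of the canonical cycle, using part~(1) of Lemma~\ref{HC-rectangular_supergrid_graphs}.

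For the $C$-shaped graph $C(m,n;k,l;c,d)$ the argument is analogous but requires an extra splice: because the removed rectangle shares one border side with $R(m,n)$, the $C$-shape decomposes naturally into three rectangular blocks (the two arms above and below the notch, plus the vertical back strip of width $m-k$), or alternatively into an $L$-shaped piece and a rectangular piece. I would separate it into the back strip and the two arms, build canonical cycles by Lemma~\ref{HC-rectangular_supergrid_graphs} with concave faces chosen to face the strip, and perform two successive merges along parallel boundary-edge pairs. When $m-k=1$ the back strip degenerates to width one and the merge must instead thread through single vertices, which is exactly the situation that (F9) will rule out.

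The hard part will be verifying that the required pair of parallel boundary edges always exists after fixing the orientation of the concave faces, and in particular that the very thin sub-cases — $m-k=1$, or a block of width or height $1$ or $2$ — either admit an explicit direct Hamiltonian cycle or fall under the forbidden conditions (F8)/(F9). These boundary-width degeneracies are where the canonical-cycle machinery breaks down, so the bulk of the proof will be a finite case analysis pinning down exactly which narrow configurations survive; everything else reduces cleanly to the merge lemma via Lemma~\ref{HC-rectangular_supergrid_graphs}.
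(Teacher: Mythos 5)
First, a point of comparison: this paper never proves Theorem~\ref{HC-LCshaped} at all --- it is imported as background from \cite{Hung18, Keshavarz19a, Keshavarz19b}. The closest in-paper analogues are the proof of Theorem~\ref{HC-Oshaped} and the merge steps inside Lemma~\ref{HP-Oshaped1}, and your overall plan (separate into rectangular pieces, take canonical Hamiltonian cycles from Lemma~\ref{HC-rectangular_supergrid_graphs}, splice across the cut via Proposition~\ref{Pro_Obs}) is indeed the same strategy used there and in the cited references. However, two of your concrete steps do not work as stated. The first is the orientation of the faces: you propose to turn the \emph{concave} faces of the two canonical cycles toward the shared border. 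A concave face is only guaranteed to contain \emph{at least one} boundary edge, and nothing forces the boundary edges of two concave faces to lie opposite each other across the cut, so the required parallel pair $e_1 \thickapprox e_2$ may simply not exist. The guaranteed construction is the opposite one, which the paper uses in Case~2.2.1.1 of Lemma~\ref{HP-Oshaped1}: place a \emph{flat} face toward the border on at least one side; since a flat face contains every boundary edge of its side, any boundary edge on the other side automatically has an aligned parallel partner.

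The second and more serious gap is the treatment of the thin pieces. When $l=1$ in $L(m,n;k,l)$ (with $m-k\geqslant 2$), or $k=1$ in $C(m,n;k,l;c,d)$ (with $c,d\geqslant 2$), your decomposition produces a $1$-row or $1$-column rectangle, and such a piece has \emph{no Hamiltonian cycle at all}: Lemma~\ref{HC-rectangular_supergrid_graphs} requires $m\geqslant n\geqslant 2$, and its statement~(1) concerns $3$-rectangles, not $1$-rectangles, so it cannot be the rescue you invoke. Crucially, these configurations contain no degree-$1$ vertex and are not excluded by $\mathrm{(F8)}$ or $\mathrm{(F9)}$; they are infinite families parameterized by $m,n,k,l$, not a finite case analysis, so they must be handled affirmatively. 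The standard fix, available in this paper as statement~(3) of Proposition~\ref{Pro_Obs}, is vertex insertion: build the canonical cycle of the fat piece with its flat face along the cut, and insert each vertex $(i,1)$ of the width-$1$ strip into the distinct edge $((i,2),(i+1,2))$ of that flat face, which it adjoins; since the insertions use pairwise distinct edges, they are compatible and yield a Hamiltonian cycle of the whole graph. Without this insertion argument (or an equivalent one), your proof fails on exactly the degenerate cases that $\mathrm{(F8)}$/$\mathrm{(F9)}$ do not remove, which is where the substance of the theorem lies.
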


\begin{description}
  \item[(F8)] there exists a vertex $w$ in $L(m,n; k,l)$ such that $deg(w)=1$.
  \item[(F9)] $a(=m-k)=1$ or there exists a vertex $w \in V(C(m,n; k,l; c,d))$ such that $deg(w)=1$.
\end{description}

We then give some observations on the relations among cycle, path, and vertex. These propositions will be used in proving our results and are given in \cite{Hung15, Hung16, Hung17a}.

\begin{pro}\label{Pro_Obs}
(See \cite{Hung15, Hung16, Hung17a}) Let $C_1$ and $C_2$ be two vertex-disjoint cycles of a graph $G$, let $C_1$ and $P_1$ be a cycle and a path, respectively, of $G$ with $V(C_1)\cap V(P_1)=\emptyset$, and let $x$ be a vertex in $G-V(C_1)$ or $G-V(P_1)$. Then, the following statements hold true:\\
$(1)$ If there exist two edges $e_1\in C_1$ and $e_2\in C_2$ such that $e_1 \thickapprox e_2$, then $C_1$ and $C_2$ can be combined into a cycle of $G$ (see Fig. \emph{\ref{Fig_Obs}(a)}).\\
$(2)$ If there exist two edges $e_1\in C_1$ and $e_2\in P_1$ such that $e_1 \thickapprox e_2$, then $C_1$ and $P_1$ can be combined into a path of $G$ (see Fig. \emph{\ref{Fig_Obs}(b)}). \\
$(3)$ If vertex $x$ adjoins one edge $(u_1, v_1)$ of $C_1$ (resp., $P_1$), then $C_1$ (resp., $P_1$) and $x$ can be combined into a cycle (resp., path) of $G$ (see Fig. \emph{\ref{Fig_Obs}(c)}).\\
$(4)$ If there exists one edge $(u_1, v_1)\in C_1$ such that $u_1\thicksim start(P_1)$ and $v_1\thicksim end(P_1)$, then $C_1$ and $P_1$ can be combined into a cycle $C$ of $G$ (see Fig. \emph{\ref{Fig_Obs}(d)}).
\end{pro}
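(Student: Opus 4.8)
The plan is to prove all four statements by the same elementary device of \emph{edge splicing}: deleting a carefully chosen edge from a cycle turns it into a path on the same vertex set, and the adjacency hypotheses supply exactly the edges needed to reconnect the pieces into the desired single cycle or path. Throughout, the vertex-disjointness hypotheses ($V(C_1)\cap V(C_2)=\emptyset$ for (1), $V(C_1)\cap V(P_1)=\emptyset$ for (2) and (4), and $x\notin V(C_1)$ or $x\notin V(P_1)$ for (3)) guarantee that the resulting structure visits every vertex at most once, so it is genuinely simple; and membership of each newly added edge in $E(G)$ follows directly from the relation $\thicksim$ appearing in the hypotheses.

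For $(1)$, write $e_1=(u_1,v_1)$ and $e_2=(u_2,v_2)$. Deleting $e_1$ from the cycle $C_1$ leaves a path $P$ from $u_1$ to $v_1$ spanning $V(C_1)$, and deleting $e_2$ from $C_2$ leaves a path $P'$ from $u_2$ to $v_2$ spanning $V(C_2)$. Since $e_1\thickapprox e_2$ means $u_1\thicksim u_2$ and $v_1\thicksim v_2$, the edges $(u_1,u_2)$ and $(v_1,v_2)$ both lie in $G$; closing up $u_1\rightarrow\cdots\rightarrow v_1\rightarrow v_2\rightarrow\cdots\rightarrow u_2\rightarrow u_1$ yields a cycle on $V(C_1)\cup V(C_2)$. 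Part $(2)$ is identical except that $P_1$ is already a path: deleting $e_2=(u_2,v_2)$ splits $P_1$ into two subpaths ending at $u_2$ and at $v_2$, and the same two parallel edges splice the $C_1$-path between them, producing $start(P_1)\rightarrow\cdots\rightarrow u_2\rightarrow u_1\rightarrow\cdots\rightarrow v_1\rightarrow v_2\rightarrow\cdots\rightarrow end(P_1)$.

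For $(3)$ with a cycle, the hypothesis that $x$ adjoins $(u_1,v_1)$ gives $x\thicksim u_1$ and $x\thicksim v_1$, so we may replace the single edge $(u_1,v_1)$ of $C_1$ by the length-two detour $u_1\rightarrow x\rightarrow v_1$, inserting $x$ into the cycle; the path version is the same substitution performed inside $P_1$. For $(4)$, deleting $(u_1,v_1)$ from $C_1$ again gives a $u_1$--$v_1$ path spanning $V(C_1)$, and the hypotheses $u_1\thicksim start(P_1)$ and $v_1\thicksim end(P_1)$ supply the two edges that join this path to the two ends of $P_1$, closing the combined structure $u_1\rightarrow\cdots\rightarrow v_1\rightarrow end(P_1)\rightarrow\cdots\rightarrow start(P_1)\rightarrow u_1$ into a cycle $C$.

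No part presents a real obstacle; the only point requiring care is fixing the traversal directions of $C_1$, $C_2$, and $P_1$ consistently so that the two added edges attach to \emph{opposite} ends of the spliced paths. If one instead reconnected a path to itself, the result would split into two components rather than a single cycle or path. Once the orientations are pinned down as above, simplicity follows immediately from vertex-disjointness, and correctness of every added edge follows from the adjacency relations stated in the hypotheses.
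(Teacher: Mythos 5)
Your proof is correct and matches the paper's intended argument: the paper presents this proposition as a known result (citing earlier work) and illustrates it only by the schematic diagrams of Fig.~\ref{Fig_Obs}, where the symbol $\otimes$ denotes exactly the ``destruction of an edge'' that your edge-splicing construction formalizes. Deleting one edge of each cycle (or of the path) and reconnecting via the parallel/adjacent edges, with the orientation caveat you note, is precisely the construction the figures depict, so no further comparison is needed.
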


\begin{figure}[!t]
\begin{center}
\includegraphics[width=0.7\textwidth]{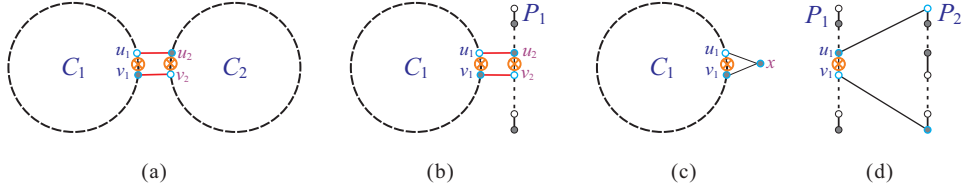}
\caption{A schematic diagram for (a) Statement (1), (b) Statement (2), (c) Statement (3), and (d) Statement (4) of Proposition \ref{Pro_Obs}, where bold dashed lines indicate the cycles (paths) and $\otimes$ represents the destruction of an edge while constructing a cycle or path.} \label{Fig_Obs}
\end{center}
\end{figure}

For the longest $(s, t)$-path problem on $R(m, n)$, $L(m,n; k,l)$, and $C(m,n; k,l; c,d)$, we showed in \cite{Hung17a, Keshavarz19a, Keshavarz19b} that it can be solved in linear time.

\begin{thm}\cite{Hung17a, Keshavarz19a, Keshavarz19b}\label{RLC-shaped_LongestPath}
Given a rectangular supergrid graph $R(m, n)$ with $mn\geqslant 2$, $L$-shaped supergrid graph $L(m,n; k,l)$, or $C$-shaped supergrid graph $C(m,n; k,l; c,d)$, and two distinct vertices $s$ and $t$ in $R(m, n)$, $L(m,n; k,l)$ or $C(m,n; k,l; c,d)$, a longest $(s, t)$-path can be computed in $O(mn)$-linear time.
\end{thm}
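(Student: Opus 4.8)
The plan is to leverage the exact characterization of Hamiltonian $(s,t)$-connectivity given in Theorem~\ref{HP-Theorem-RLCshaped}: a Hamiltonian $(s,t)$-path is by definition a longest possible $(s,t)$-path, since it visits every vertex. Hence whenever $(G,s,t)$ fails all of the forbidden conditions (F1)--(F6), the longest $(s,t)$-path is simply the canonical Hamiltonian path guaranteed there, and it is built in $O(mn)$ time by the constructions underlying Lemmas~\ref{HamiltonianConnected-Rectangular}--\ref{HP-3rectangle-boundary_path}. Thus the first step of the algorithm is to test, in linear time, which (if any) forbidden condition $(G,s,t)$ satisfies; if none does, we output the Hamiltonian $(s,t)$-path and stop. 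The real work is therefore confined to the finitely many parametrized families of inputs that do satisfy a forbidden condition, where the longest path must necessarily omit some vertices.

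For those inputs I would argue case by case, in each case exhibiting a set $D$ of vertices that is provably unavoidable to delete and then showing that $G-D$ (or a slight variant of it) carries a Hamiltonian $(s,t)$-path of length $|V(G)|-|D|$. The cleanest cases are (F2) and (F1). If (F2) holds there is a vertex $w$ with $deg(w)=1$ and $w\notin\{s,t\}$; since a degree-one vertex can lie on a simple path only as an endpoint, and $w$ is neither $s$ nor $t$, every $(s,t)$-path must avoid $w$, so $D=\{w\}$ and it suffices to find a Hamiltonian $(s,t)$-path of $G-w$, which is again a graph of one of the three admissible shapes (or reduces to one after peeling a forced thin strip). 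If (F1) holds, a cut vertex or a $2$-cut $\{s,t\}$ splits $G$ into components; any $(s,t)$-path is confined to the union of the cut with a single component, so $D$ is exactly the set of vertices of the unreachable components, and the path is built inside the retained rectangular / $L$- / $C$-shaped region via Lemma~\ref{HamiltonianConnected-Rectangular}.

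The parametrized conditions (F3)--(F6) are treated similarly: each pins down a constant-size obstruction (a $1$- or $2$-wide arm together with $s,t$ in a bad position), so $D$ has constant size, and I would give an explicit drawing for each. The general construction tool is the separation operation combined with Proposition~\ref{Pro_Obs}: cut the shape into a few rectangular blocks along a vertical or horizontal separation, use Lemmas~\ref{HamiltonianConnected-Rectangular}, \ref{HamiltonianConnected-Rectangular-wz_rectangle}, and \ref{HP-3rectangle-boundary_path} to produce in each block a canonical Hamiltonian subpath with a prescribed boundary edge, and glue consecutive blocks through a pair of parallel edges $e_1\thickapprox e_2$ as in Proposition~\ref{Pro_Obs}(2). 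Since each block is processed once and each seam touches $O(1)$ edges, the whole construction runs in $O(mn)$ time.

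The step I expect to be the main obstacle is the matching lower bound on the number of omitted vertices, i.e.\ proving that the constructed path is genuinely \emph{longest} and not merely long. For (F1) and (F2) the obstruction (a cut, or a pendant vertex) immediately forces the deletion of $D$; but for (F3)--(F6) one must verify that no cleverer $(s,t)$-path can save any of the discarded vertices, which requires a short parity- or connectivity-based argument in each small configuration, analogous to the lower-bound arguments used to establish the forbidden conditions themselves in \cite{Hung18, Keshavarz19a, Keshavarz19b}. Once these optimality arguments are in place, combining the no-forbidden-condition case with the finitely many forbidden cases yields a longest $(s,t)$-path in $O(mn)$ time for all three graph classes.
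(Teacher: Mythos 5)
Your overall skeleton is the right one, and it is in fact how the cited works \cite{Hung17a, Keshavarz19a, Keshavarz19b} (and Section \ref{Sec_Algorithm} of this paper, for the $O$-shaped analogue) proceed: test the forbidden conditions of Theorem \ref{HP-Theorem-RLCshaped} in constant time, output a Hamiltonian $(s,t)$-path when none holds, and otherwise prove an upper bound on the longest $(s,t)$-path together with a matching construction. (Note that this paper itself gives no proof of the statement; it is imported as a background result, so the fair comparison is with the strategy of the cited papers and of Lemmas \ref{Lemma:O1-O3}--\ref{Lemma:long-Osupergrid}.)

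The genuine gap is your claim that conditions (F3)--(F6) ``pin down a constant-size obstruction,'' so that the omitted set $D$ has constant size and optimality follows from a short parity argument in a small configuration. This is false for (F6). There $a=m-k=1$ and both $s$ and $t$ lie in the same arm, say $s_y,t_y\leqslant c$; the only connection to the opposite arm is the width-one column $\{(1,y):\ c+1\leqslant y\leqslant c+l\}$. A simple path with both endpoints in the top arm can visit at most the single column vertex $(1,c+1)$ (entering from $(1,c)$ and leaving to $(2,c)$, or vice versa), because any deeper descent into the column is a dead end: re-crossing a width-one column to return would repeat vertices. Hence \emph{every} $(s,t)$-path omits the whole bottom arm plus the column vertices $(1,c+2),\dots,(1,c+l)$, i.e.\ at least $m\cdot d+l-1$ vertices --- a quantity growing with the graph, not a constant. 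The same scaling occurs in the vertex-cut cases of (F1), and even (F2) is not as clean as you state: in a thin $L$-shape, deleting one degree-one vertex creates another, and the forced omissions cascade along a width-one arm. Consequently the forbidden cases cannot be dispatched as finitely many small pictures; they need parameter-dependent upper bounds obtained by separation-and-counting arguments, exactly of the shape of Lemmas \ref{Lemma:O1-O3}--\ref{Lemma:O4} (bounds such as $mn-kl-t_x+s_x+1$, or $\max\{\hat{L}(R_1,s,u)+\hat{L}(R_2,q,t),\ \hat{L}(R_1,s,v)+\hat{L}(R_2,z,t)\}$). A second, related defect: your framework presumes $D$ can be read off the condition and that $G-D$ then admits a Hamiltonian $(s,t)$-path via Lemma \ref{HamiltonianConnected-Rectangular}; in reality the retained region may itself satisfy a forbidden condition, so the correct bounds are maxima, over a few decompositions, of sums of \emph{recursively computed longest} (not Hamiltonian) $(s,t)$-paths of subgraphs, as in case (O3) and statement (4) of Lemma \ref{Lemma:F11-F13}. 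Without these parameter-dependent bounds and the recursion, your optimality step --- which you yourself flag as the main obstacle --- does not go through.
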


In this paper, we will study $O$-shaped supergrid graph $O(m,n; k,l; a,b,c,d)$ whose structure is depicted in Fig. \ref{Fig_LCO-shaped}(c). By symmetry, we will only consider the following three cases, the isomorphic cases are omitted.

\begin{verse}
\textbf{(1)} $a=b=c=d=1$; or\\
\textbf{(2)} $a\geqslant 2$ and $c=1$; or\\
\textbf{(3)} $a,b,c,d\geqslant 2$.\\
\end{verse}

The above case (2) contains the following four subcases: (2.1) $b=d=1$, (2.2) $b=1$ and $d\geqslant 2$, (2.3) $b\geqslant 2$ and $d=1$, and (2.4) $b, d\geqslant 2$. These four subcases are depicted in Fig. \ref{Fig_CaseII}. Depending on the positions of $s$ and $t$ in $O(m,n; k,l; a,b,c,d)$, we can consider the following cases for the above three cases : $(s_x, t_x\leqslant a)$, $(s_x\leqslant a$ and $t_x\geqslant a+1)$, or $(s_x, t_x\geqslant a+1)$.

\begin{figure}[!t]
\begin{center}
\includegraphics[width=0.85\textwidth]{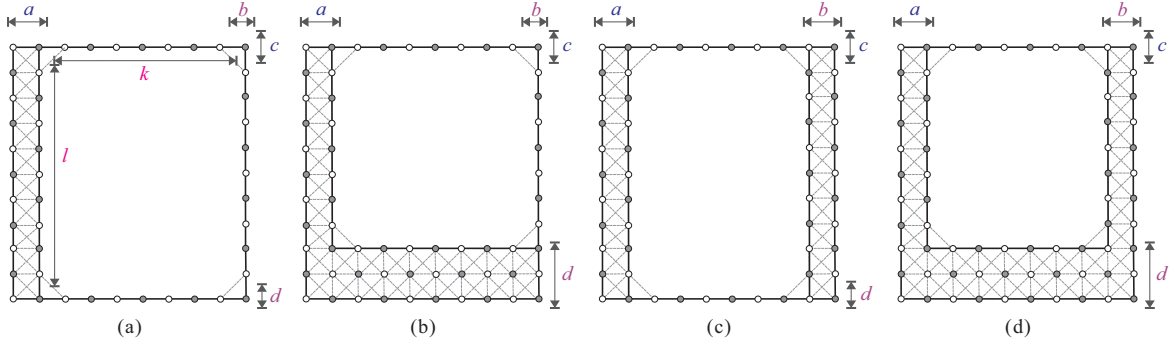}
\caption{The subcases of $a\geqslant 2$ and $c=1$, where (a) $b=d=1$, (b) $b=1$ and $d\geqslant 2$, (c) $b\geqslant 2$ and $d=1$, and (d) $b, d\geqslant 2$.} \label{Fig_CaseII}
\end{center}
\end{figure}

We first verify the Hamiltonicity of $O$-shaped supergrid graphs as the following theorem.

\begin{thm}\label{HC-Oshaped}
Let $O(m,n; k,l; a,b,c,d)$ be an $O$-shaped supergrid graph. Then, $O(m,n; k,l; a,b,c,d)$ always contains a Hamiltonian cycle.
\end{thm}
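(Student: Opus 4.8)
The plan is to reduce the Hamiltonicity of $O(m,n;k,l;a,b,c,d)$ to the already-established Hamiltonicity of rectangular, $L$-shaped, and $C$-shaped supergrid graphs by means of a single separation operation, and then to reassemble the pieces using the parallel-edge merging of Proposition \ref{Pro_Obs}(1). Concretely, I would cut the frame with one horizontal or one vertical separation so that the two resulting vertex-disjoint subgraphs are each a rectangle (Lemma \ref{HC-rectangular_supergrid_graphs}) or a $C$-shaped supergrid graph, up to isomorphism (Theorem \ref{HC-LCshaped}). In each piece I would take a canonical Hamiltonian cycle whose flat face lies on the boundary along which the two pieces are glued; a flat face contributes every boundary edge of that boundary, so the two cycles possess a pair of parallel edges $e_1\thickapprox e_2$ across the cut, and Proposition \ref{Pro_Obs}(1) combines them into a single cycle covering all vertices of $O$.

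For the generic case $a,b,c,d\geqslant 2$ (case (3)) I would place the cut horizontally just above the hole, i.e. between rows $c$ and $c+1$. The upper piece is the rectangle $R(m,c)$ with $c\geqslant 2$, and the lower piece is a $C$-shaped supergrid graph opening upward (the hole $R(k,l)$ touches its top border only). This $C$-shape satisfies $m-k=a+b\geqslant 2$ and, because its arms have width $a,b\geqslant 2$ and its base has height $d\geqslant 2$, it has no degree-$1$ vertex; hence neither part of condition (F9) holds and the lower piece is Hamiltonian by Theorem \ref{HC-LCshaped}. The two shared boundaries (row $c$ of the rectangle and a top segment at row $c+1$ of the $C$-shape) are vertically aligned over columns $1,\dots,a$, so choosing flat faces there yields the required parallel edges and the merge goes through. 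A symmetric vertical cut, splitting $O$ into a $C$-shape and the rectangle $R(b,n)$, works equally well.

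The real work lies in the thin degenerate configurations of cases (1) and (2.1)--(2.3), where at least one of $c,d$ equals $1$ and/or one arm width $a$ or $b$ equals $1$. There a naive piece may collapse to a single row or column $R(m,1)$ or $R(1,n)$, which has no Hamiltonian cycle, or it may become a $C$-shape with $m-k=1$ or with a degree-$1$ vertex at the tip of a width-$1$ arm, so that (F9) forbids a Hamiltonian cycle. I would treat these by choosing the cut direction adaptively so that the thin dimension is always absorbed into a thicker neighbouring piece rather than isolated: when $c=1$ I cut below the hole (absorbing the single top row into the upper $C$-shape), when $d=1$ I cut above it, and when a width-$1$ arm is present I orient the separation parallel to that arm so its vertices are swallowed by an adjacent full-height rectangle. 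A handful of the smallest sub-cases (for instance $a=b=c=d=1$ with small $k,l$) I would finish by an explicit direct construction of the cycle from the boundary traversals provided by Lemma \ref{HC-rectangular_supergrid_graphs} and the boundary-path Lemma \ref{HP-3rectangle-boundary_path}.

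The main obstacle is therefore not the gluing step, which is mechanical once parallel edges are available, but guaranteeing a legitimate decomposition in every thin sub-case: in each of (1) and (2.1)--(2.4) I must exhibit a separation whose pieces simultaneously (i) avoid single rows/columns and the forbidden conditions (F8)/(F9), so that each piece is Hamiltonian, and (ii) share an aligned boundary carrying a parallel-edge pair for Proposition \ref{Pro_Obs}(1). Verifying both requirements across all the thin configurations --- and checking that the flat and concave faces of the canonical cycles can indeed be oriented onto the glue boundaries --- is where the bulk of the case analysis will go, whereas the generic case (3) is essentially routine.
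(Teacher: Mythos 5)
Your decomposition framework --- cut once, obtain two pieces that each possess a Hamiltonian cycle, and merge the two cycles through a parallel-edge pair via Proposition \ref{Pro_Obs}(1) --- provably cannot cover the thin configurations, and these are not ``a handful of the smallest sub-cases'' but entire infinite families. Take your case (2.1): $a\geqslant 2$, $b=c=d=1$, $k,l\geqslant 2$. A vertical cut at column $a$ leaves the right piece $C(k+1,n;k,l;1,1)$, in which the vertex $(a+1,1)$ has degree $1$ (both $(a+1,2)$ and $(a+2,2)$ lie in the hole), so condition (F9) forbids a Hamiltonian cycle; a vertical cut at column $a+k$ symmetrically creates the degree-$1$ vertex $(a+k,1)$ in the left piece; and every horizontal cut either isolates a single row $R(m,1)$ or severs the width-one right arm, leaving a degree-$1$ tip (or, at the extreme rows, a corner triangle of forced edges that cannot lie on any Hamiltonian cycle of its piece). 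The same holds for $a=b=c=d=1$ with \emph{every} $k,l$: the graph is a thickness-one ring, and any straight separation produces two thickness-one pieces, neither of which has any Hamiltonian cycle at all. Your adaptive-cut rule has nothing to absorb a thin side into once two or more sides have thickness one, so the promised ``explicit direct construction'' is not a finite cleanup --- it is precisely where the content of the theorem lies, and it is left unproved. A secondary problem is that Theorem \ref{HC-LCshaped} only asserts \emph{existence} of a Hamiltonian cycle in a $C$-shaped piece; it gives no control over whether that cycle carries a flat face, or even a single boundary edge, on the glue boundary, which your merging step requires.

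The idea you are missing, and which makes the paper's proof a single uniform paragraph, is to use Hamiltonian \emph{connectivity} of the pieces rather than their Hamiltonicity. The paper cuts vertically at column $a$ into $R_1=R(a,n)$ and $R_2=C(m-a,n;k,l;c,d)$, takes a Hamiltonian $(p,q)$-path of $R_1$ between the corners $p=(a,1)$ and $q=(a,n)$ (Lemma \ref{HamiltonianConnected-Rectangular}), takes a Hamiltonian $(r_2,r_1)$-path of $R_2$ between $r_2=(a+1,n)$ and $r_1=(a+1,1)$ (Theorem \ref{HP-Theorem-RLCshaped}, after checking that corner pairs never satisfy (F1)--(F2), (F4)--(F6)), concatenates the two paths across the edge $(q,r_2)$, and closes the resulting $(p,r_1)$-path into a cycle using $p\thicksim r_1$. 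This succeeds uniformly, with no case analysis, exactly because the thin pieces that defeat your approach (a single column $R(1,n)$, a thickness-one $C$-shape) still admit Hamiltonian paths between the two cut-adjacent corner vertices even though they have no Hamiltonian cycles. To repair your write-up, replace cycle-plus-cycle merging by this path-plus-path concatenation; otherwise you must supply genuine constructions for all the thin families, which is the bulk of the work.
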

\begin{proof}
We first make a vertical separation on $O(m,n; k,l; a,b,c,d)$ to obtain two disjoint supergrid subgraphs $R_1 = R(a, n)$ and $R_2 = C(m-a,n; k,l; c,d)$, as shown in Fig. \ref{Fig_HC}(a). Let $p=(a, 1)$ and $q=(a, n)$ be two vertices of $R_1$, and let $r_1=(a+1, 1)$ and $r_2=(a+1, n)$ be two vertices of $R_2$, as depicted in Fig. \ref{Fig_HC}(b). Then, $p\thicksim r_1$ and $q\thicksim r_2$. Since $p$ and $q$ are corners of $R_1$, $(R_1, p, q)$ does not satisfy condition (F1). By Lemma \ref{HamiltonianConnected-Rectangular}, $R_1$ contains a Hamiltonian $(p, q)$-path $P_1$. By inspecting conditions (F1)--(F2) and (F4)--(F6), $(R_2, r_2, r_1)$ does not satisfy these conditions. By Theorem \ref{HP-Theorem-RLCshaped}, $(R_2, r_2, r_1)$ contains a Hamiltonian $(r_2, r_1)$-path $P_2$. Then, $P=P_1\Rightarrow P_2$ forms a Hamiltonian $(p, r_1)$-path of $O(m,n; k,l; a,b,c,d)$. Since $start(P)\thicksim end(P)$, $P$ is a Hamiltonian cycle of $O(m,n; k,l; a,b,c,d)$. The constructed Hamiltonian cycle is depicted in Fig. \ref{Fig_HC}(b).

\begin{figure}[!t]
\begin{center}
\includegraphics[scale=0.85]{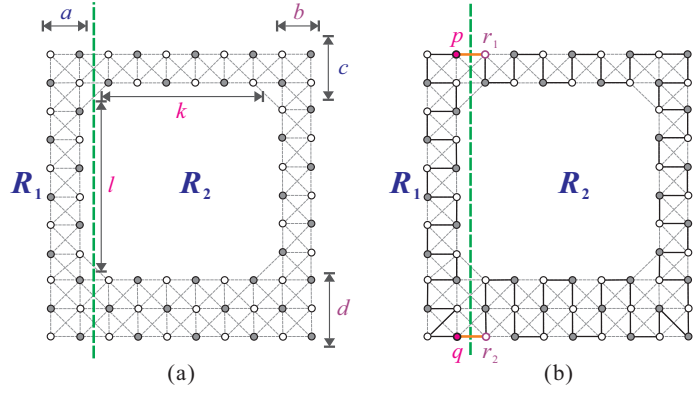}
\caption{(a) A vertical separation on $O(m,n; k,l; a,b,c,d)$ to obtain $R_1$ and $R_2$, and (b) the constructed Hamiltonian cycle of $O(m,n; k,l; a,b,c,d)$, where solid bold lines indicate the constructed Hamiltonian cycle.} \label{Fig_HC}
\end{center}
\end{figure}
\end{proof}

We can see from the above construction that if $a = 1$ then the constructed Hamiltonian cycle $HC$ of $O(m,n; k,l;$ $a,b,c,d)$ satisfies that $HC_{| R_1}$ is a flat face.

\section{The Forbidden Conditions for the Hamiltonian Connectivity of $O$-shaped Supergrid Graphs}\label{Sec_forbidden-conditions}
In this section, we will discover all cases for that $O$-shaped supergrid graphs contain no Hamiltonian $(s, t)$-path. By the structure of $O$-shaped supergrid graphs, there exists no cut vertex in them. However, there exist vertex cuts in an $O$-shaped supergrid graph. Thus, we have the forbidden condition (F1) for that $HP(O(m,n; k,l; a,b,c,d), s, t)$ does not exist (see Fig. \ref{Fig_ForbiddenCondition-oshaped1}(a) and \ref{Fig_ForbiddenCondition-oshaped1}(b)).\\


In the following, we would like to probe the other forbidden conditions for that $HP(O(m,n; k,l; a,b,c,d), s, t)$ does not exist. We consider the sizes of parameters $a$, $b$, $c$, $d$ and list the forbidden conditions as follows.

\begin{figure}[!t]
\begin{center}
\includegraphics[scale=0.85]{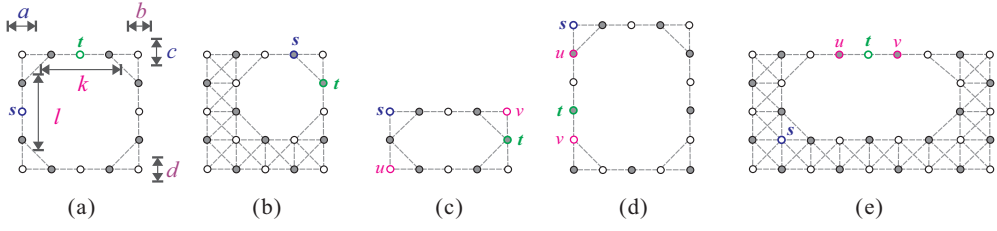}
\caption{$O$-shaped supergrid graphs in which there is no Hamiltonian $(s, t)$-path for (a)--(b) $\{s, t\}$ is a vertex cut, (c)--(d) $a=b=c=d=1$, and (e) $c=1$, $a,b,d\geqslant 2$ and $k\geqslant 5$.} \label{Fig_ForbiddenCondition-oshaped1}
\end{center}
\end{figure}

\begin{description}
  \item[(F10)] $a=b=c=d=1$, $s\not\thicksim t$, $\{s,t\}$ is not a vertex cut, and one of the following cases occurs:\\
  \hspace{0.65cm}(1) $s\in \{(1, 1), (1, n)\}$, $s_x\neq t_x$, and $s_y\neq t_y$ (see Fig. \ref{Fig_ForbiddenCondition-oshaped1}(c));\\
  \hspace{0.65cm}(2) $l\geqslant 3$, $s \ \mathrm{or}\ t\in \{(1, 1),(1, n)\}$, $s_x=t_x$, and $|s_y-t_y|>2$ (see Fig. \ref{Fig_ForbiddenCondition-oshaped1}(d)).

  \item[(F11)] $c=1$, $a, b, d\geqslant 2$, $k\geqslant 5$, $s\not\thicksim t$, $\{s,t\}$ is not a vertex cut, and $a+3\leqslant s_x(\mathrm{or}\ t_x)\leqslant a+k-2$  (see Fig. \ref{Fig_ForbiddenCondition-oshaped1}(e)).

  \item[(F12)] $c=d=1$, $a, b\geqslant 2$, $k\geqslant 3$, and one of the following cases occurs:\\
  \hspace{0.65cm}(1) $s_x\leqslant a$ and $t_x\geqslant a+3$ (see Fig. \ref{Fig_ForbiddenCondition-oshaped2}(a));\\
  \hspace{0.65cm}(2) $a+1\leqslant s_x\leqslant a+k-2$ and $t_x\geqslant a+k+1$ (see Fig. \ref{Fig_ForbiddenCondition-oshaped2}(b))).

  \item[(F13)] $b=c=1$, $a\geqslant 2$,  $s\not\thicksim t$, and one of the following cases occurs:\\
  \hspace{0.65cm}(1) $s_x,t_x\geqslant a+1$, $t=(m,1)$, $[(d=1)$ or $(d\geqslant 2$ and $s_y,t_y\leqslant c+l)]$, and

  \hspace{0.4cm}(1.1) $k\geqslant 3$, $s_y=1$, and $s_x\leqslant m-3$ (see Fig. \ref{Fig_ForbiddenCondition-oshaped2}(c) and Fig. \ref{Fig_ForbiddenCondition-oshaped2}(d)); or

  \hspace{0.4cm}(1.2) $s_y=n$ and $s_x\leqslant m-1$ (see Fig. \ref{Fig_ForbiddenCondition-oshaped3}(a)); or

  \hspace{0.4cm}(1.3) $l\geqslant 3$, $s_x=t_x$, and $s_y\geqslant 4$ (see Fig. \ref{Fig_ForbiddenCondition-oshaped3}(b) and Fig. \ref{Fig_ForbiddenCondition-oshaped3}(c));\\
  \hspace{0.65cm}(2) $d\geqslant 2$, $t_x\geqslant a+1$, $t_y\leqslant c+l$, $[(s_x\leqslant a)$ or $(s_x\geqslant a+1$ and $s_y\geqslant c+l+1)]$, and

  \hspace{0.4cm}(2.1) $k\geqslant 3$, $t_y=1$, and $a+3\leqslant t_x\leqslant m-1$ (see Fig. \ref{Fig_ForbiddenCondition-oshaped3}(d)); or

  \hspace{0.4cm}(2.2) $l\geqslant 3$, $t_x=m$, and $2\leqslant t_y\leqslant l-1$ (see Fig. \ref{Fig_ForbiddenCondition-oshaped3}(e)); or

  \hspace{0.4cm}(2.3) $k,l\geqslant 3$ and $t=(m,1)$ (see Fig. \ref{Fig_ForbiddenCondition-oshaped3}(f));\\
  \hspace{0.65cm}(3) $ d=1$, $s_x\leqslant a$, $t_x\geqslant a+1$, and $[(2\leqslant t_y\leqslant c+l)$ or $(k\geqslant 3$ and $t_x\geqslant a+3)]$ (see Fig. \ref{Fig_ForbiddenCondition-oshaped3}(g) and Fig. \ref{Fig_ForbiddenCondition-oshaped3}(h)).
\end{description}

\begin{figure}[!t]
\centering
\includegraphics[scale=0.85]{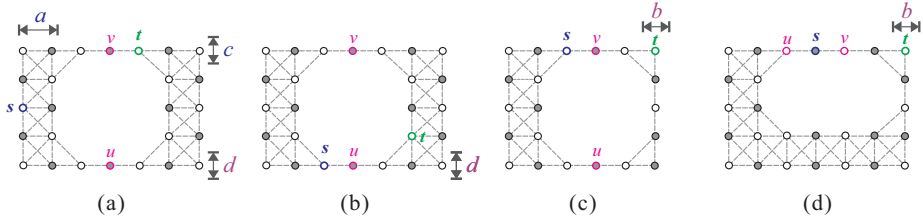}
\caption{$O$-shaped supergrid graphs with $a\geqslant 2$ and $c=1$ in which there is no Hamiltonian $(s, t)$-path for (a) $d=1$, $b\geqslant 2$, $s_x\leqslant a$, and $t_x\geqslant a+3$, (b) $d=1$, $b\geqslant 2$, $a+1\leqslant s_x\leqslant a+k-2$, and $t_x\geqslant a+k+1$, and (c)--(d) $b=1$, $s_x, t_x\geqslant a+1$, $t=(m, 1)$, $k\geqslant 3$, and $s_x\leqslant m-3$.} \label{Fig_ForbiddenCondition-oshaped2}
\end{figure}

\begin{figure}[!t]
\centering
\includegraphics[scale=0.85]{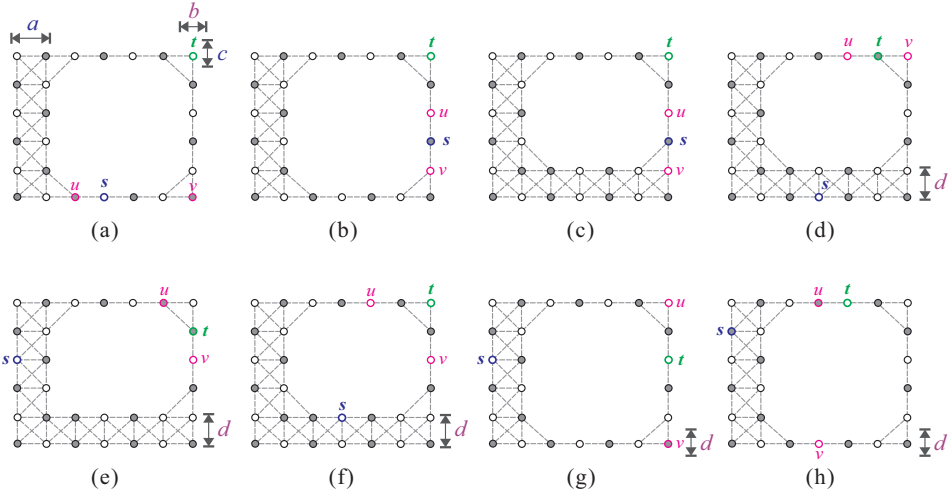}
\caption{$O$-shaped supergrid graphs with $a\geqslant 2$ and $c=b=1$ in which there is no Hamiltonian $(s, t)$-path.} \label{Fig_ForbiddenCondition-oshaped3}
\end{figure}

The following lemma shows the necessary condition for that $HP(O(m,n; k,l; a, b,c,d), s, t)$ does exist.

\begin{lem}\label{Necessary-condition-Oshaped}
If $HP(O(m,n; k,l; a,b,c,d), s, t)$ exists, then $(O(m,n; k,l; a,b,c,d)$ does not satisfy conditions $\mathrm{(F1)}$ and $\mathrm{(F10)}$--$\mathrm{(F13)}$.
\end{lem}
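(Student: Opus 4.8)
The plan is to prove each implication by contraposition: for every one of the conditions (F1) and (F10)--(F13) I would exhibit a structural obstruction that rules out any Hamiltonian $(s,t)$-path. The cases split naturally into the connectivity condition (F1) and the narrow-strip conditions (F10)--(F13). For (F1) the obstruction is pure connectivity. If $\{s,t\}$ is a vertex cut, then $O(m,n;k,l;a,b,c,d)-\{s,t\}$ has at least two components; but in any Hamiltonian path $s=v_1\rightarrow\cdots\rightarrow v_N=t$ the interior $v_2\rightarrow\cdots\rightarrow v_{N-1}$ is a single connected path spanning all of $V(G)\setminus\{s,t\}$, forcing $G-\{s,t\}$ to be connected, a contradiction. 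Since an $O$-shaped supergrid graph has no cut vertex, only the vertex-cut subcase of (F1) is relevant here.

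The core tool for (F10)--(F13) is a \emph{forced-edge} argument coming from low-degree vertices created by the width-one frame segments (the cases $a=1$, $b=1$, $c=1$, or $d=1$). When a segment has width one, the vertices strictly interior to the corresponding bridge over the hole have degree exactly two: for example, when $c=1$ a top-bridge vertex $(x,1)$ with $a+2\leqslant x\leqslant a+k-1$ sees only its two horizontal neighbors $(x-1,1)$ and $(x+1,1)$, because the three cells $(x-1,2),(x,2),(x+1,2)$ below it all lie in the removed hole. In a Hamiltonian path every non-endpoint vertex uses exactly two incident edges, and a degree-two vertex is forced to use \emph{both} of its edges. Hence each width-one bridge forces a long contiguous horizontal (or, for $a=1$ or $b=1$, vertical) segment into the path, the only freedom occurring at the two ends of the bridge and at a vertex coinciding with $s$ or $t$.

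With these forced segments in hand I would argue condition by condition that the hypothesized positions of $s$ and $t$ make completion impossible, using three recurring kinds of contradiction: (i) the forced segments, together with the forced edges at the corners adjacent to the bridges, close up into a cycle that omits $s$ or $t$, which cannot be a subpath of an $(s,t)$-path; (ii) some frame vertex is left with both of its only admissible continuations already committed, so it cannot be threaded into the path; or (iii) the committed segments partition the frame so that the two loose ends destined for $s$ and $t$ lie on the wrong sides of a width-one passage. The side hypotheses ``$s\not\thicksim t$'' and ``$\{s,t\}$ is not a vertex cut'' attached to (F10) and (F11) are exactly what block the obvious escapes of rerouting through the edge $(s,t)$ or of splitting the forced ring at $s$ and $t$, and will be invoked to eliminate the alternative completions. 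For (F12) and (F13) two width-one segments coexist ($c=d=1$, or $b=c=1$), so I would run the forced-edge analysis on both bridges at once and track the parity of how often the path can switch between the left and right blocks of the frame.

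I expect the main obstacle to be the bookkeeping of (F13), which has three top-level cases and the nested subcases (1.1)--(1.3), (2.1)--(2.3), and (3), each placing $t$ (or $s$) on a particular bridge and thus requiring its own verification of which edges are forced and that no legal completion survives. A secondary delicate point, recurring throughout, is the situation in which $s$ or $t$ sits \emph{on} a forced segment: such a vertex may legitimately absorb one of its two forced edges as a path endpoint, so the degree-two forcing must be reapplied with that vertex excluded, and the boundary situations $s_x=t_x$ with $|s_y-t_y|>2$ (as in (F10)(2) and (F13)(1.3)) will need a separate short count of how many bridge traversals the two endpoints can supply against how many the narrow ring demands.
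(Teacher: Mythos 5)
Your proposal is correct and takes essentially the same route as the paper: a contrapositive argument in which the vertex-cut/connectivity observation disposes of (F1), and the rigidity of the width-one bridges rules out (F10)--(F13). In fact the paper's own proof is terser than yours --- for (F10)--(F13) it merely marks two vertices $u$ and $v$ in its figures and asserts that no Hamiltonian $(s,t)$-path can contain both --- and your degree-two forced-edge analysis is exactly the mechanism underlying that assertion, so your sketch is at least as detailed as the published argument.
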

\begin{proof}
Assume that $(O(m,n; k,l; a,b,c,d), s, t)$ satisfies one of the conditions (F1) and (F10)--(F13), then we show that $HP(O(m,n; k,l; a,b,c,d), s, t)$ does not exist. For condition (F1), it the lemma clearly holds true (see Fig. \ref{Fig_ForbiddenCondition-oshaped1}(a) and \ref{Fig_ForbiddenCondition-oshaped1}(b)). For conditions (F10)--(F13), consider Figs. \ref{Fig_ForbiddenCondition-oshaped1}(c)--(e), Fig. \ref{Fig_ForbiddenCondition-oshaped2}, and Fig. \ref{Fig_ForbiddenCondition-oshaped3}. Let $v$ and $u$ be two vertices depicted in these figures. It is easy to see that there is no Hamiltonian $(s, t)$-path in $O(m,n; k,l; a,b,c,d)$ containing both of vertices $u$ and $v$.
 \end{proof}

We have considered any case to discover the forbidden conditions for that $HP(O(m,n; k,l; a,b,c,d), s, t)$ does exist. In the next section, we will verify that $O(m,n; k,l; a,b,c,d)$ contains a Hamiltonian $(s, t)$-path if $(O(m,n; k,l; a,b,c,d),$ $s, t)$ does not satisfy conditions (F1) and (F10)--(F13).

\section{The Hamiltonian Connectivity of $O$-shaped Supergrid Graphs}\label{Sec_O-shaped-supergrid}
In this section, we will show that $O(m,n; k,l; a,b,c,d)$ always contains a Hamiltonian $(s, t)$-path when $(O(m,n; k,l;$ $a,b, c,d), s, t)$ does not satisfy conditions (F1) and (F10)--(F13). Note that in the following lemmas, we consider the cases that $(s_x,t_x\leqslant a)$, $(s_x,t_x\geqslant a+1)$, and $(s_x\leqslant a$ and $t_x\geqslant a+1)$.

\begin{lem}\label{HP-Oshaped1}
Let $O(m,n; k,l; a,b,c, d)$ be an $O$-shaped supergrid graph, and let $s$ and $t$ be its two distinct vertices such that $s_x, t_x\leqslant a$ and $(O(m,n; k,l; a,b,c, d), s, t)$ does not satisfy conditions $\mathrm{(F1)}$ and $\mathrm{(F10)}$. Then, $O(m,n; k,l; a,b,c,d)$ contains a Hamiltonian $(s, t)$-path, i.e., $HP(O(m,n; k,l; a,b,c,d), s, t)$ does exist.
\end{lem}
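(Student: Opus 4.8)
The plan is to reuse the vertical separation from the proof of Theorem~\ref{HC-Oshaped}: cut $O(m,n;k,l;a,b,c,d)$ between columns $a$ and $a+1$ into the left rectangle $R_1=R(a,n)$, which contains both $s$ and $t$ because $s_x,t_x\leqslant a$, and the $C$-shaped remainder $R_2=C(m-a,n;k,l;c,d)$. The intended gluing mechanism is to produce a Hamiltonian $(s,t)$-path of $R_1$ and a Hamiltonian structure (cycle or path) of $R_2$, each carrying a vertical boundary edge on the cut line, and then to combine the two along a pair of parallel edges via Proposition~\ref{Pro_Obs}. Since the resulting object keeps the endpoints of the $R_1$-path, it will be a Hamiltonian $(s,t)$-path of the whole graph.

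\textbf{The main case $a\geqslant 2$.} Because $a\geqslant 2$ and $n\geqslant 3$, the rectangle $R_1$ has both dimensions at least $2$ and height at least $3$, so $(R_1,s,t)$ cannot satisfy condition (F1); Lemma~\ref{HamiltonianConnected-Rectangular} then gives a canonical Hamiltonian $(s,t)$-path $P_1$ of $R_1$ containing a vertical edge $e_1=((a,j),(a,j+1))$ on the right boundary (column $a$). I would then split on the width $b$ of the right arm of $R_2$, noting that the $C$-shaped parameter $(m-a)-k$ equals $b$. If $b\geqslant 2$, Theorem~\ref{HC-LCshaped} supplies a Hamiltonian cycle $C_2$ of $R_2$, whose flat/concave faces I arrange to include the parallel edge $e_2=((a+1,j),(a+1,j+1))$; combining $P_1$ and $C_2$ through $e_1\thickapprox e_2$ by Proposition~\ref{Pro_Obs}(2) yields the path. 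If $b=1$, condition (F9) forbids a Hamiltonian cycle of $R_2$, so instead I take a Hamiltonian $(x,y)$-path of $R_2$ with $x=(a+1,j)$, $y=(a+1,j+1)$ (a pair that one checks avoids (F1)--(F6) for $R_2$) from Theorem~\ref{HP-Theorem-RLCshaped}, and splice it into $P_1$ by deleting $e_1$ and inserting $(a,j)\to x\rightsquigarrow y\to(a,j+1)$. Throughout I must keep the rows $j,j+1$ off the hole rows $c+1,\dots,c+l$ of column $a+1$; since $c,d\geqslant 1$ the top and bottom strips always provide an admissible row, and the flexibility of the canonical path of $R_1$ (if needed, refined via Lemma~\ref{HamiltonianConnected-Rectangular-wz_rectangle}) lets me position $e_1$ to match it.

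\textbf{The degenerate case $a=1$.} Here $R_1=R(1,n)$ collapses to a single column and $s_x=t_x=1$, so no Hamiltonian $(s,t)$-path can live inside $R_1$ unless $\{s,t\}=\{(1,1),(1,n)\}$, and the separation-and-glue argument no longer applies directly. I would instead build the path by hand, threading the column-$1$ vertices through $R_2=C(m-1,n;k,l;c,d)$: partition column $1$ at $s$ and $t$ into at most three vertical segments, connect consecutive segments by detours that descend into $R_2$ along column $2$, and let the single detour that must also absorb all of $R_2$ use a Hamiltonian traversal from Theorem~\ref{HP-Theorem-RLCshaped}. The obstructions to this threading are precisely the configurations listed in (F10), which forces $a=b=c=d=1$; once (F10) is excluded the construction closes, the remaining work being a finite case check on the positions of $s,t$ in the column and on the small values $l=1,2$ versus $l\geqslant 3$.

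\textbf{Main obstacle.} The hard part is the $a=1$ case, where the clean gluing is unavailable and one must verify that every placement of $\{s,t\}$ in the first column not captured by (F10) admits a valid detour, including the tight instances $a=b=c=d=1$ for which (F10) was carved out in the first place. A secondary but unavoidable point in the case $a\geqslant 2$ is the bookkeeping ensuring the gluing edge lies in a row where both columns $a$ and $a+1$ are present (i.e.\ off the hole) and that the auxiliary endpoints chosen for $R_2$ never trigger (F1)--(F6); these are exactly the places where the hypotheses $c,d\geqslant 1$ and the edge-control afforded by canonical paths and cycles are used.
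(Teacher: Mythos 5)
Your separation at the line between columns $a$ and $a+1$ creates three concrete failures that the paper's proof is specifically structured to avoid. First, the claim that $a\geqslant 2$ forces $(R_1,s,t)$ to avoid condition (F1) is false: for $a=2$ the rectangle $R_1=R(2,n)$ has vertex cuts, e.g.\ $s=(1,j)$, $t=(2,j)$ with $2\leqslant j\leqslant n-1$ (every path between the rows above and below row $j$ must pass through row $j$, which is deleted entirely). Such a pair is a legitimate input to the lemma, since it is not a vertex cut of the whole $O$-shaped graph and (F10) requires $a=b=c=d=1$; so Lemma \ref{HamiltonianConnected-Rectangular} gives you nothing, and indeed no Hamiltonian $(s,t)$-path of $R_1$ exists. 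The paper devotes Cases 2.1.2 and 2.2.1.2 of its proof to exactly this configuration, resolving it with an additional horizontal separation and a path-plus-two-cycles combination via Statements (1)--(2) of Proposition \ref{Pro_Obs}. Second, your gluing edge $e_2=((a+1,j),(a+1,j+1))$ need not exist at all: column $a+1$ meets the hole, so $e_2$ requires \emph{two adjacent} rows inside the top strip (height $c$) or the bottom strip (height $d$). When $c=d=1$ (subcases (2.1) and (2.3) of the paper's case analysis, see Fig.~\ref{Fig_CaseII}) each strip has a single row, so no such edge exists; ``an admissible row'' is not enough. Third, your $b=1$ repair is unavailable: the endpoints $x=(a+1,j)$, $y=(a+1,j+1)$ lie in the same strip of the $C$-shaped piece $R_2$, whose arm width is exactly $b=1$, so $(R_2,x,y)$ satisfies precisely condition (F6), and by Theorem \ref{HP-Theorem-RLCshaped} the Hamiltonian $(x,y)$-path you want to splice in does \emph{not} exist.

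The paper's constructions sidestep all three problems. For $a\geqslant 3$ it cuts one column earlier, between columns $a-1$ and $a$, so the right piece is the $O$-shaped graph $O(m-a+1,n;k,l;1,b,c,d)$ whose left boundary column is complete; by Theorem \ref{HC-Oshaped} (and the remark following it) that piece always has a Hamiltonian cycle with a flat face on that full column, independently of $b$, $c$, $d$, making the parallel-edge merge always available. For $a=2$ (and for your degenerate case $a=1$) it does not merge along parallel edges at all: it splits $R_1$ horizontally into a piece containing $s$ and a piece containing $t$, and concatenates three Hamiltonian paths end-to-end, routing through $R_2$ between its two corner vertices $(a+1,1)$ and $(a+1,n)$ --- endpoints in \emph{different} strips, which is exactly how (F6) is dodged when $b=1$. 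Your sketch for $a=1$ has the right spirit (it matches the paper's Case 1), but as proposed your main case leaves $a=2$, $c=d=1$, and $b=1$ unproven, so the argument is incomplete.
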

\begin{proof}
By symmetry, we can only consider the cases of $a=b=c=d=1$, $a\geqslant 2$ and $c=1$, and $a,b,c,d\geqslant 2$, as illustrated in Section \ref{Sec_Preliminaries}, the isomorphic cases are omitted. We then consider the following three cases:

Case 1: $a=b=c=d=1$. We first make a vertical separation on $O(m,n; k,l; a,b,c,d)$ to obtain two disjoint supergrid subgraphs $R_1=R(m_1,n)$ and $R_2 = C(m-m_1,n; k,l; c,d)$, where $m_1=a$ (see Fig. \ref{Fig_HP1}(a)). Then, $(s \thicksim t)$ or $(s\not\thicksim t$ and $((l\leqslant 2)$ or $(l\geqslant 3$ and $[(s_y,t_y\leqslant 3)$ or $(s_y,t_y\geqslant n-2)])))$. If $s\not\thicksim t$, $l\geqslant 3$, and $[(s, t\notin \{(1,1),(1,n)\})$ or $(s$ or $t\in \{(1,1),(1,n)\}$ and $|s_y-t_y|>2)]$, then $(O(m, n; k, l; a,b,c, d), s, t)$ satisfies condition (F1) or (F10). Without loss of generality, assume that $t_y< s_y$ and for the case $s\not\thicksim t$ assume that $t=(1,1)$. We make a horizontal separation on $R_1$ to obtain two disjoint supergrid subgraphs $R_{11}$ and $R_{12}$ such that $R_{11}=R(m_1,n_1)$ and $R_{12}=R(m_1,n-n_1)$, where $n_1=t_y$ if $s \thicksim t$; otherwise $n_1=t_y+1$ (see Fig. \ref{Fig_HP1}(b) and Fig. \ref{Fig_HP1}(c)). Let $q\in V(R_{11})$, $p\in V(R_{12})$, and $w,z \in V(R_2)$ such that such that $p\thicksim w$, $q\thicksim z$, $z=(m_1+1,1)$, $w=(m_1+1,n)$, $p=(m_1, n)$, and $q=(m_1, 1)$ if $s\thicksim t$; otherwise $q=(m_1, 2)$. Consider $(R_2, w, z)$. Since $z_x =w_x=m_1+1=a+1$, $z_y=1$, and $w_y=n$, clearly $(R_2, w, z)$ does not satisfy conditions (F1), (F2), and (F4)--(F6). Now, consider $(R_{11}, q, t)$ and $(R_{12}, s, p)$. Clearly if $(R_{11}, q, t)$ or $(R_{12}, s, p)$ satisfies condition (F1), then $(O(m,n; k,l; a,b,c, d), s, t)$ satisfies condition (F10), a contradiction. Thus, $(R_{11}, q, t)$ and $(R_{12}, s, p)$ do not satisfy condition (F1). Since $(R_{12}, s, p)$, $(R_{11}, q, t)$, and $(R_2, w, z)$ do not satisfy conditions (F1)--(F2) and (F4)--(F6), By Theorem \ref{HP-Theorem-RLCshaped}, there exist a Hamiltonian $(s, p)$-path $P_1$, a Hamiltonian $(q, t)$-path $P_3$, and a Hamiltonian $(w, z)$-path $P_2$ of $R_{12}$, $R_{11}$, and $R_2$, respectively (see Fig. \ref{Fig_HP1}(d)). Then, $P = P_1 \Rightarrow P_2\Rightarrow P_3$ forms a Hamiltonian $(s, t)$-path of $O(m,n; k,l; a,b,c,d)$, as depicted in Fig. \ref{Fig_HP1}(e).

\begin{figure}[h]
\centering
\includegraphics[scale=0.85]{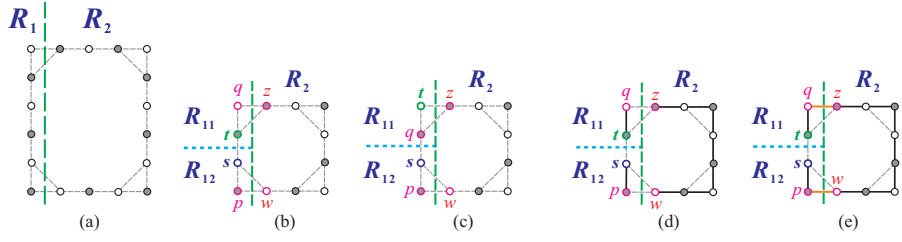}
\caption{(a) A vertical separation on $O(m,n; k,l; a,b,c,d)$ under that $s_x, t_x\leqslant a$ and $a=b=c=d=1$, (b)--(c) a horizontal separation on $R_1$, (d) a Hamiltonian path in $R_{11}$, $R_{12}$, and $R_2$, and (e) a Hamiltonian $(s,t)$-path in $O(m,n; k,l; a,b,c,d)$, where (b) indicates  $s\thicksim t$, (c) indicates $s\not\thicksim t$, and bold lines indicate the constructed Hamiltonian $(s, t)$-path.} \label{Fig_HP1}
\end{figure}

Case 2: $a\geqslant 2$ and $c=1$. Depending on the size of $a$, we consider the following two subcases:

\hspace{0.5cm}Case 2.1: $a = 2$. In this subcase, we first make a vertical separation on $O(m,n; k,l; a,b,c,d)$ to obtain two disjoint supergrid subgraphs $R_1=R(m_1,n)$ and $R_2 = C(m-m_1,n; k,l; c,d)$, where $m_1=a$ (see Fig. \ref{Fig_HP2}(a)). Depending on whether $s, t\in \{(m_1, 1), (m_1, n)\}$, we consider the following subcases.

\begin{figure}[h]
\centering
\includegraphics[scale=0.85]{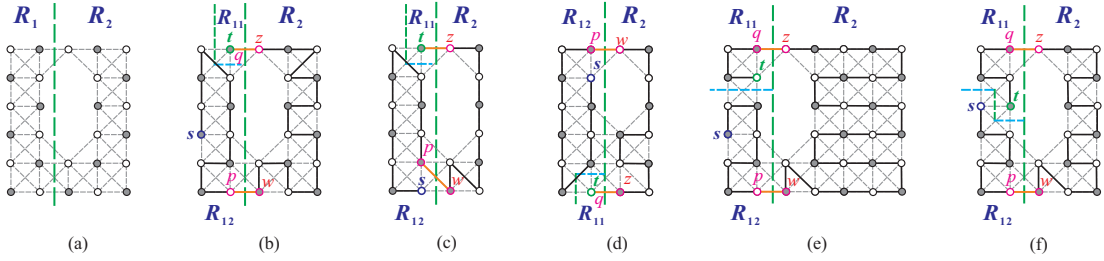}
\caption{(a) A vertical separation on $O(m,n; k,l; a,b,c,d)$ under that $s_x, t_x\leqslant a$, $a = 2$, and $c=1$ to obtain $R_1$ and $R_2$, (b), (c), (d), (f) a vertical and horizontal separations on $R_1$, and (e) a horizontal separation on $R_1$, where bold lines indicate the constructed Hamiltonian path.} \label{Fig_HP2}
\end{figure}

\hspace{1cm}Case 2.1.1: $s$ or $t\in\{(m_1, 1), (m_1, n)\}$. Without loss of generality, assume that $t\in \{(m_1, 1), (m_1, n)\}$.

\hspace{1.5cm}Case 2.1.1.1: $t=(m_1, 1)$. We make a vertical and horizontal separations on $R_1$ to obtain two disjoint supergrid subgraphs $R_{11}$ and $R_{12}$ such that $R_{11} = R(1, t_y)$ and $R_{12} = L(m_1,n; 1,t_y)$ (see Fig. \ref{Fig_HP2}(b) and Fig. \ref{Fig_HP2}(c)). Let $q\in V(R_{11})$, $p\in V(R_{12})$, and $w, z\in V(R_2)$ such that such that $p\thicksim w$, $q\thicksim z$, $z = (m_1+1, 1)$, $w = (m_1+1, n)$, $q = (m_1, 1) = t$, and $p = (m_1, n)$ if $ s\neq (m_1, n)$; otherwise $p = (m_1, n-1)$. Consider $(R_2, w, z)$. Since $z_x = w_x = m_1+1 = a+1$, $z_y = 1$, and $w_y =n $, clearly $(R_2, w, z)$ does not satisfy conditions (F1)--(F2) and (F4)--(F6). Consider $(R_{12}, s, p)$. Since $(p_y = n$ and $s_y \leqslant n)$ or $(p_y = n-1$ and $s_y = n)$, a simple check shows that $(R_{12}, s, p)$ does not satisfy conditions (F1), (F2), and (F3). A Hamiltonian $(s, t)$-path of $O(m,n; k,l; a,b,c,d)$ can be constructed by similar to Case 1. For instance, Figs. \ref{Fig_HP2}(b)--(c) depict the constructed Hamiltonian $(s, t)$-paths of $O(m,n; k,l; a,b,c,d)$ in this subcase.

\hspace{1.5cm}Case 2.1.1.2: $t=(m_1, n)$. If $d = 1$, then by symmetry a Hamiltonian $(s, t)$-path of $O(m,n; k,l; a,b,c,d)$ can be constructed by similar to Case 2.1.1.1, where $R_{11}=R(1, 1)$, $R_{12}=L(m_1,n; 1,1)$, $q=(m_1, n)$, $p=(m_1, 1)$, $w=(m_1+1, 1)$, and $z=(m_1+1, n)$ (see Fig. \ref{Fig_HP2}(d)). Consider $d\geqslant 2$. We make a vertical and horizontal separations on $R_1$ to obtain two disjoint supergrid subgraphs $R_{11}$ and $R_{12}$ such that $R_{11} = R(1, 1)$ and $R_2 = L(n,m_1; 1,1)$ (see Fig. \ref{Fig_HP2}(d)). Let $q\in V(R_{11})$, $p\in V(R_{12})$, and $w, z\in V(R_2)$ such that such that $p\thicksim w$, $q\thicksim z$, $z = (m_1+1, n)$, $w = (m_1+1, 1)$, $q = (m_1, n) = t$, and $p = (m_1, 1)$ if $ s\neq (m_1, 1)$; otherwise $p = (m_1, 2)$. Then, a Hamiltonian $(s, t)$-path of $O(m,n; k,l; a,b,c,d)$ can be constructed by similar to Case 1. For instance, Fig. \ref{Fig_HP2}(d) depicts the constructed Hamiltonian $(s, t)$-path of $O(m,n; k,l; a,b,c,d)$ in this subcase.

\hspace{1cm}Case 2.1.2: $s, t\notin \{(m_1, 1),(m_1, n)\}$. Without loss of generality, assume that $t_y\leqslant s_y$. First, let $s_y\neq t_y$. Then a Hamiltonian $(s, t)$-path of $O(m,n; k,l; a,b,c,d)$ can be constructed by similar to Case 1, where $n_1 = t_y$ (see Fig. \ref{Fig_HP2}(e)). Now, let $s_y = t_y$. Then a Hamiltonian $(s, t)$-path of $O(m,n; k,l; a,b,c,d)$ can be constructed by similar to Case 2.1.1.1, where $R_{12} = L(m_1,n-n_1+1; 1,1)$, $R_{11} = L(m_1,n_1; 1,1)$, and $n_1=s_y$ (see Fig. \ref{Fig_HP2}(f)). For example, Figs. \ref{Fig_HP2}(e)--(f) show the constructed Hamiltonian $(s, t)$-paths of $O(m,n; k,l; a,b,c,d)$ in this subcase.

\hspace{0.5cm}Case 2.2: $a \geqslant 3$. We make a vertical separation on $O(m,n; k,l; a,b,c,d)$ to obtain two disjoint supergrid subgraphs $R_1 =R(m_1, n)$ and $R_2 = O(m-m_1,n; k,l; 1,b,c,d)$, where $m_1 = a-1$ (see Fig. \ref{Fig_HP3}(a)). Depending on the positions of $s$ and $t$, there are the following three subcases:

\begin{figure}[h]
\centering
\includegraphics[scale=0.85]{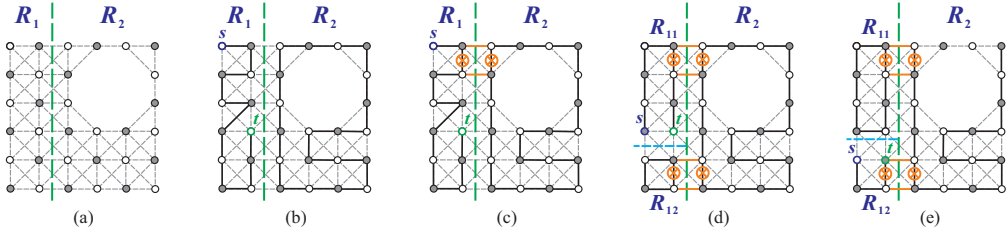}
\caption{(a) A vertical separation on $O(m,n; k,l; a,b,c,d)$ under that $s_x, t_x\leqslant a$ and $a\geqslant 3$, (b) a Hamiltonian $(s,t)$-path in $R_1$ and a Hamiltonian cycle in $R_2$, (c) a Hamiltonian $(s, t)$-path in $O(m,n; k,l; a,b,c,d)$ for $\{s, t\}$ is not a vertex cut of $R_{11}$, (d) and (e) horizontal separations on $R_1$ and Hamiltonian $(s, t)$-paths of $O(m,n; k,l; a,b,c,d)$ for $\{s, t\}$ is a vertex cut of $R_{11}$, where (b)--(e) $s, t\in R_1$, bold lines indicate the constructed Hamiltonian path, and $\otimes$ represents the destruction of an edge while constructing a such Hamiltonian path.} \label{Fig_HP3}
\end{figure}

\hspace{1.0cm}Case 2.2.1: $s, t\in R_1$.

\hspace{1.5cm}Case 2.2.1.1: $(m_1\geqslant 3)$ or $(m_1=2$ and $[(s_y\neq t_y)$, $(s_y=t_y=1)$, or $(s_y=t_y=n)])$. In this subcase, $\{s, t\}$ is not a vertex cut of $R_1$. Consider $(R_1, s, t)$. It is easy to check that $(R_1, s, t)$ does not satisfy condition (F1). Hence, by Lemmas \ref{HamiltonianConnected-Rectangular-wz_rectangle}--\ref{HP-3rectangle-boundary_path}, $R_1$ contains a Hamiltonian $(s, t)$-path $P_1$ in which one edge $e_1$ is placed to face $R_2$. By Theorem \ref{HC-Oshaped}, $R_2$ contains a Hamiltonian cycle $HC_2$ such that one flat face is placed to face $R_1$. Then, there exist two edges $e_1 \in P_1$ and $e_2\in HC_2$ such that $e_1 \thickapprox e_2$ (see Fig. \ref{Fig_HP3}(b)). By Statement (2) of Proposition \ref{Pro_Obs}, $P_1$ and $HC_2$ can be combined into a Hamiltonian $(s, t)$-path of $O(m, n; k,l; a,b,c,d)$. The construction of a such Hamiltonian path is depicted in Fig. \ref{Fig_HP3}(c).

\hspace{1.5cm}Case 2.2.1.2: $m_1 = 2$ and $2\leqslant s_y=t_y\leqslant n-1$. In this subcase, $\{s, t\}$ is a vertex cut of $R_1$. We make a horizontal separation on $R_1$ to obtain two disjoint supergrid subgraphs $R_{11} =R(m_1, n_1)$ and $R_{12} = R(m_1, n-n_1)$ such that $n_1=s_y$ if $s_y\neq n-1$; otherwise $n_1=s_y-1$ (see Fig. \ref{Fig_HP3}(d) and \ref{Fig_HP3}(e)). Notice that if $n_1=s_y$, then $s, t\in R_{11}$; otherwise $s, t\in R_{12}$. Without loss of generality, assume that $s, t\in R_{11}$. Clearly since $s_y=t_y=n_1$, $(R_{11}, s, t)$ does not satisfy condition (F1). By Lemma \ref{HamiltonianConnected-Rectangular-wz_rectangle}--\ref{HP-3rectangle-boundary_path}, $R_{11}$ contains a Hamiltonian $(s,t)$-path $P_{11}$ in which one edge $e_{11}$ is placed to face $R_2$. By Lemma \ref{HC-rectangular_supergrid_graphs} and Theorem \ref{HC-Oshaped}, $R_{12}$ and $R_2$ contain Hamiltonian cycle $HC_{12}$ and $HC_2$, respectively. Then, there exist four edges $e_1, e_2\in  HC_2$, $e_{11}\in P_{11}$, and $e_{12}\in HC_{12}$ such that $e_1\thickapprox e_{11}$ and $e_2\thickapprox e_{12}$; as shown in Fig. \ref{Fig_HP3}(d). By Statements (1) and (2) of Proposition \ref{Pro_Obs}, $P_{11}$, $HC_{12}$, and $HC_2$ can be combined into a Hamiltonian $(s, t)$-path of $O(m,n; k,l; a,b,c,d)$. The construction of a such Hamiltonian path is depicted in Fig. \ref{Fig_HP3}(d). For the case of $s, t\in R_{12}$, a Hamiltonian $(s, t)$-path of $O(m,n; k,l; a,b,c,d)$ can be constructed by the same arguments, as shown in Fig. \ref{Fig_HP3}(e).

\hspace{1.0cm}Case 2.2.2: $s, t\in R_2$. Since $s_x, t_x\leqslant a$, thus $s_x=t_x=a$. Without loss of generality, assume that $t_y < s_y$. We make a vertical and horizontal separations on $R_2$ to obtain three disjoint supergrid subgraphs $R_{21} =R(1, t_y)$, $R_{22} = R(1,n -t_y)$, and $R_{23} =C(m-m_1-1,n; k,l; c,d)$. Let $R_\textrm{a} = R_1\cup R_{22} = L(a,n; 1,t_y)$. A Hamiltonian $(s, t)$-path of $O(m,n; k,l; a,b,c,d)$ can be constructed by similar to Case 2.1.1.1, where $p\in V(R_\textrm{a})$, $q \in V(R_{21})$, and $w, z\in V(R_{23})$ (see Fig. \ref{Fig_HP4} (a)--(c)).

\begin{figure}[h]
\centering
\includegraphics[scale=0.85]{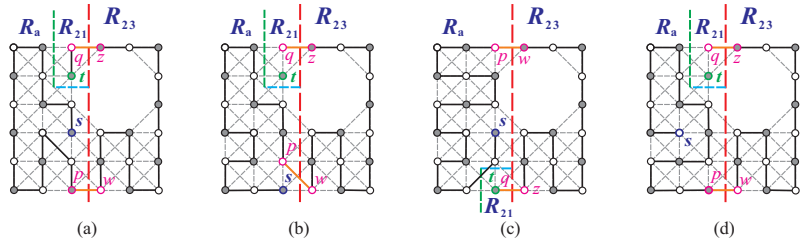}
\caption{A Hamiltonian $(s, t)$-path of $O(m,n; k,l; a,b,c,d)$ under that $a\geqslant 3$ and $c=1$ for (a)--(c) $s, t\in R_2$ and (d) $s\in R_1$ and $t\in R_2$.} \label{Fig_HP4}
\end{figure}

\hspace{1.0cm}Case 2.2.3: $s\in R_1$ and $t\in R_2$. A Hamiltonian $(s, t)$-path of $O(m,n; k,l; a,b,c,d)$ can be constructed by similar to Case 2.2.2. The construction of such a Hamiltonian $(s, t)$-path is shown in Fig. \ref{Fig_HP4}(d).

Case 3: $a, b, c, d\geqslant 2$. For the case of $a = 2$, a Hamiltonian $(s, t)$-path of $O(m,n; k,l; a,b,c,d)$ can be constructed by the same arguments in Case 2.1. And for $a \geqslant 3$, a Hamiltonian $(s, t)$-path of $O(m,n; k,l; a,b,c,d)$ can be obtained by the same construction in Case 2.2.
\end{proof}

\begin{lem}\label{HP-Oshaped2}
Let $O(m,n; k,l; a,b,c,d)$ be an $O$-shaped supergrid graph, and let $s$ and $t$ be its two distinct vertices such that $s_x\leqslant a$, $t_x\geqslant a+1$, and $(O(m,n; k,l; a,b,c, d), s, t)$ does not satisfy conditions $\mathrm{(F1)}$ and $\mathrm{(F10)}$--$\mathrm{(F13)}$. Then, $O(m,n; k,l; a,b,c,d)$ contains a Hamiltonian $(s, t)$-path, i.e., $HP(O(m,n; k,l; a,b,c,d), s, t)$ does exist.
\end{lem}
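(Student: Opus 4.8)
The plan is to follow the same separation-based strategy used in the proof of Lemma \ref{HP-Oshaped1}, but now the source $s$ lies in the left rectangular part (since $s_x\leqslant a$) while the target $t$ lies in the right $C$-shaped part (since $t_x\geqslant a+1$). As in the earlier case analysis, I would first reduce to the three structural cases $a=b=c=d=1$, $a\geqslant 2$ with $c=1$, and $a,b,c,d\geqslant 2$, invoking symmetry to discard the isomorphic configurations. The overall idea is to make a vertical separation of $O(m,n; k,l; a,b,c,d)$ into a left rectangular supergrid graph $R_1$ containing $s$ and a right $C$-shaped (or smaller $O$-shaped) supergrid graph $R_2$ containing $t$, then to build the Hamiltonian $(s,t)$-path by concatenating a canonical Hamiltonian path of $R_1$ ending at a boundary vertex $p$ adjacent to a chosen vertex $r_1$ of $R_2$, with a Hamiltonian path of $R_2$ starting at that neighbor and ending at $t$, so that $P=P_1\Rightarrow P_2$ is the desired path.

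First I would handle the separation so that neither resulting subgraph triggers a forbidden condition. The left piece $R_1=R(m_1,n)$ (with $m_1=a$, or $m_1=a-1$ when $a\geqslant 3$ to leave a legitimate $O$-shaped remainder) must admit a canonical Hamiltonian $(s,p)$-path by Lemma \ref{HamiltonianConnected-Rectangular} or Lemma \ref{HamiltonianConnected-Rectangular-wz_rectangle}, which requires only that $(R_1,s,p)$ avoid (F1); since $p$ can be placed at a corner or boundary vertex facing $R_2$, this is arranged by choosing $p$ appropriately. The right piece $R_2$ is $C$-shaped (or $O$-shaped when $a\geqslant 3$), and here I must verify that $(R_2,r_1,t)$ avoids conditions (F1)--(F6) (respectively (F1),(F10)--(F13) in the $O$-shaped recursion) so that Theorem \ref{HP-Theorem-RLCshaped} (respectively the already-proved parts of this section) yields a Hamiltonian $(r_1,t)$-path. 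Since $r_1$ is a corner/boundary vertex of $R_2$ on the side facing $R_1$, the standard check shows $(R_2,r_1,t)$ cannot fall into the cut-vertex or degree-one traps unless the original instance already satisfied one of the excluded conditions (F1),(F10)--(F13)---this is exactly the contrapositive reasoning used in Lemma \ref{HP-Oshaped1}. Alternatively, when $s$ and $t$ lie close to the hole or near short boundaries, I would instead combine a Hamiltonian path of one subgraph with Hamiltonian cycles of the others via Proposition \ref{Pro_Obs}, using parallel edges $e_1\thickapprox e_2$ to merge a cycle into a path exactly as in Case 2.2.1 of the previous lemma.

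The hard part will be the precise bookkeeping of the forbidden conditions (F11)--(F13), which are specifically designed to rule out straddling $(s,t)$-paths where $s$ is on the left and $t$ is on the right; these are exactly the conditions that can be triggered when the chosen separation line forces $t$ into a position that becomes a local obstruction (a would-be cut vertex or an isolated corner pocket) in $R_2$. I expect the main obstacle to be the sub-subcases with $c=1$ and small $d$, $l$, or $k$ (conditions (F12) and (F13)), where the narrowness of the surviving corridor around the hole severely limits where $r_1$ may be placed and whether the two subpaths can be concatenated without isolating a vertex such as the corner $u$ or $v$ highlighted in Figures \ref{Fig_ForbiddenCondition-oshaped2}--\ref{Fig_ForbiddenCondition-oshaped3}. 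For these I would need either a finer separation (e.g. an additional horizontal cut producing a third piece $R_3$ joined by a second parallel-edge merge, as in Case 2.2.2) or an ad hoc routing through the corner, and then argue that whenever such a routing fails, the instance must have satisfied one of (F11)--(F13), giving the required contradiction with the hypothesis. The remaining cases where both $s$ and $t$ are comfortably interior should follow routinely from the corner-placement argument and the concatenation $P_1\Rightarrow P_2$.
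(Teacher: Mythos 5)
Your high-level plan coincides with the paper's: reduce by symmetry to the cases $a=b=c=d=1$, $a\geqslant 2$ with $c=1$, and $a,b,c,d\geqslant 2$; make a vertical separation; build the path as $P_1\Rightarrow P_2$ using Theorem \ref{HP-Theorem-RLCshaped}; and merge Hamiltonian cycles via Proposition \ref{Pro_Obs} in awkward cases. However, there is a concrete false step at the heart of your argument. You fix the left piece to be a rectangle (cut at $x=a$, or $x=a-1$), and you justify the right piece by the claim that $(R_2,r_1,t)$ ``cannot fall into the cut-vertex or degree-one traps unless the original instance already satisfied one of the excluded conditions.'' This is not true. Take $c=1$, $k\geqslant 2$, and $t=(a+2,1)$, with $s$ placed so that none of (F1), (F10)--(F13) holds (this is possible; it is exactly the paper's Case 1.1.2, where a Hamiltonian $(s,t)$-path exists). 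Cutting at $x=a$ gives $R_2=C(m-a,n;k,l;c,d)$, in which the vertex $w=(a+1,1)$ has degree one: its would-be neighbors $(a+1,2)$ and $(a+2,2)$ lie in the hole, while $(a,1)$ and $(a,2)$ lie in $R_1$. Since $w\neq r_1$ and $w\neq t$, the pair $(R_2,r_1,t)$ satisfies condition (F2), so no Hamiltonian $(r_1,t)$-path of $R_2$ exists and the concatenation cannot be carried out at all. Your fallbacks do not repair this: an additional horizontal cut plus a cycle merge still needs some Hamiltonian path through the corridor of $R_2$, and the obstruction is in the choice of the pieces, not in the routing; likewise, recursing on a smaller $O$-shaped right piece (your $m_1=a-1$ variant) replaces the problem by an instance of the very lemma being proved, whose own forbidden conditions you would still have to rule out.

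What the paper does in precisely these situations is change the shape of the \emph{left} piece rather than the right one. In its Case 1.1.2 the cut is made at $x=a+1$, through the hole columns, so the left piece is the $C$-shaped graph $C(a+1,n;1,l;c,d)$ wrapping around the left end of the hole, and the right piece is $C(m-a-1,n;k-1,l;c,d)$, in which the degree-one vertex is now $t$ itself (harmless, since (F2) exempts the endpoints). Similarly, for $t=(a+k,1)$ with $b,d\geqslant 2$ it carves the corridor above the hole out as the rectangle $R(k,c)$ and takes the remainder as a rotated $C$-shaped graph $C(n,m;l+c,k;a,b)$; and for $t=(m,1)$ with $b=1$, $k\geqslant 3$, $l\leqslant 2$ it uses three pieces, one of which is covered by a Hamiltonian cycle merged in by Proposition \ref{Pro_Obs}. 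These non-rectangular and corridor decompositions, together with the case-by-case verification that the resulting subinstances avoid (F1)--(F6), are the real content of the proof; your proposal leaves exactly this part as ``ad hoc routing,'' so the gap is genuine.
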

\begin{proof}
Let $a=b=c=d=1$. Then $t_y = s_y$, $s\in \{(1,1), (1,n)\}$, and $[(t_x\leqslant a+2)$ or $(k=2$ and $t\in \{(m,1), (m,n)\})]$. If $k\geqslant 3$, $t_x\geqslant a+3$, and $[(s\in \{(1,1), (1,n)\}$ and $t_y\neq s_y)$ or $(s\notin \{(1,1), (1,n)\})]$, then $(O(m,n; k,l; a,b,c,d), s, t)$ satisfies condition (F1) or (F10). So, $s_y = t_y = 1$ or $s_y = t_y = n$, and hence this case is isomorphic to Case 1 of Lemma \ref{HP-Oshaped1}. Therefore, in the following cases we assume that $a\geqslant 2$. Also for the case $b=c=d=1$, without loss of generality, assume that $s_y, t_y\leqslant c+l$. Consider the following cases:

Case 1: $c = 1$. In this case, $a\geqslant 2$ and $c=1$, and there are four subcases based on the sizes of $b$, $c$, and $d$ (see Fig. \ref{Fig_CaseII}). Depending on the location of $t$, we consider the following subcases:

\hspace{0.5cm}Case 1.1: $a+1\leqslant t_x\leqslant a+k$ and $t_y=1$. In this subcase, $(t_x\leqslant a+2)$ or $ (b, d\geqslant 2$ and $t_x\geqslant a+k-1)$. If ($b=1$ or $d=1$, $k\geqslant 3$, and $t_x\geqslant a+3$) or ($b, d\geqslant 2$, $k\geqslant 5$, and $a+3\leqslant t_x\leqslant a+k-2$), then $(O(m,n; k,l; a,b,c,d), s, t)$ satisfies (F11)--(F13). Note that ($b=1$ or $d=1$, $k\geqslant 3$, and $t_x\geqslant a+3$) satisfies condition (F12) or (F13), and ($b, d\geqslant 2$, $k\geqslant 5$, and $a+3\leqslant t_x\leqslant a+k-2$) satisfies condition (F11). We then have the following subcases:

\begin{figure}[h]
\centering
\includegraphics[scale=0.85]{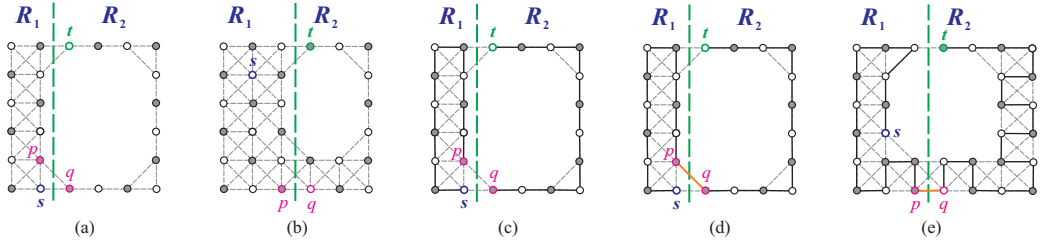}
\caption{(a) and (b) A vertical separation on $O(m,n; k,l; a,b,c,d)$ under that $a\geqslant 2$, $c=1$, and $t = (a+1, 1)$, (c) a Hamiltonian $(s, p)$-path in $R_1$ and a Hamiltonian $(q, t)$-path in $R_2$ for (a), (d) a Hamiltonian $(s, t)$-path in $O(m,n; k,l; a,b,c,d)$ for (a), and (e) a vertical separation on $O(m,n; k,l; a,b,c,d)$ and a Hamiltonian $(s, t)$-path in $O(m,n; k,l; a,b,c,d)$ under that $a\geqslant 2$, $c=1$, and $t = (a+2, 1)$, where bold lines indicate the constructed Hamiltonian path.} \label{Fig_HP5}
\end{figure}

\hspace{1.0cm}Case 1.1.1: $t_x = a+1$. We first make a vertical separation on $O(m,n; k,l; a,b,c,d)$ to obtain two disjoint supergrid subgraphs $R_1 = R(m_1, n)$ and $R_2 = C(m-m_1,n; k,l; c,d)$, where $m_1 = a$ (see Fig. \ref{Fig_HP5}(a) and \ref{Fig_HP5}(b)). Let $p\in V(R_1)$ and $q \in V(R_{2})$ such that $p\thicksim q$, $q = (m_1+1, n)$, and $p = (m_1, n)$ if $s\neq (m_1, n)$; otherwise $p = (m_1, n-1)$. Consider $(R_2, q, t)$. Since $t_y = 1$ and $q_y = n$, clearly $(R_2, q, t)$ does not satisfy (F1), (F2) and (F4)--(F6). Consider $(R_1, s, p)$. Condition (F1) holds, if $m_1 = 2$ and $s_y = p_y = n-1$. Clearly, it contradicts that $p = (m_1, n)$ when $s \neq (m_1, n)$. Thus, $(R_1, s, p)$ does not satisfy condition (F1). Since $(R_1, s, p)$ and $(R_2, q, t)$ do not satisfy conditions (F1), (F2), and (F4)--(F6), by Theorem \ref{HP-Theorem-RLCshaped}, there exist a Hamiltonian $(s, p)$-path $P_1$ and a Hamiltonian $(q, t)$-path $P_2$ of $R_1$ and $R_2$, respectively (see Fig. \ref{Fig_HP5}(c)). Then, $P = P_1 \Rightarrow P_2$ forms a Hamiltonian $(s, t)$-path of $O(m,n; k,l; a,b,c,d)$, as depicted in Fig. \ref{Fig_HP5}(d).

\hspace{1.0cm}Case 1.1.2: $t_x = a+2$. In this subcase, $k\geqslant 2$. A Hamiltonian $(s, t)$-path of $O(m,n; k,l; a,b,c,d)$ can be constructed by similar to Case 1.1.1, where $R_1 = C(m_1,n; 1,l; c,d)$, $R_2 = C(m-m_1,n; k-1,l; c,d)$, and $m_1 = a+1$ (see Fig. \ref{Fig_HP5}(e)). Fig. \ref{Fig_HP5}(e) also depicts the constructed Hamiltonian $(s, t)$-path of $O(m,n; k,l; a,b,c,d)$ in this subcase.

\hspace{1.0cm}Case 1.1.3: $b, d\geqslant 2$ and $t_x\geqslant a+k-1$. In this subcase, $a+k\geqslant t_x\geqslant a+k-1$. Thus, either $t = (a+k, 1)$ or $t = (a+k-1, 1)$.

\hspace{1.5cm}Case 1.1.3.1: $t = (a+k, 1)$. In this subcase, $k\geqslant 3$. We make two vertical and one horizontal separations on $O(m,n; k,l; a,b,c,d)$ to obtain two disjoint supergrid subgraphs $R_1 = C(n,m; l+c,k; a,b)$ and $R_2 = R(k, c)$; as shown in Fig. \ref{Fig_HP6}(a). Let $p\in V(R_1)$ and $q\in V(R_2)$ such that $p\thicksim q$, $q = (a+1, 1)$, and $p = (a, 1)$ if $s\neq (a,1 )$; otherwise $p = (a, 2)$. Consider $(R_1, s, p)$. Since $a, b, d\geqslant 2$, it is enough to show that $(R_1, s, p)$ is not in condition (F1). Condition (F1) holds, if $s_y = p_y = 2$. Clearly, it contradicts that $p = (a, 1)$ when $s \neq (a, 1)$. Consider $(R_2, q, t)$. Since $q = (a+1, 1)$ and $t = (a+k, 1)$, it is clear that $(R_2, q, t)$ does not satisfy condition (F1). A Hamiltonian $(s, t)$-path of $(O(m,n; k,l; a,b,c,d)$ can be constructed by similar to Case 1.1.1. Fig. \ref{Fig_HP6}(a) depicts such a constructed Hamiltonian $(s, t)$-path of $O(m,n; k,l; a,b,c,d)$.

\begin{figure}[h]
\centering
\includegraphics[scale=0.85]{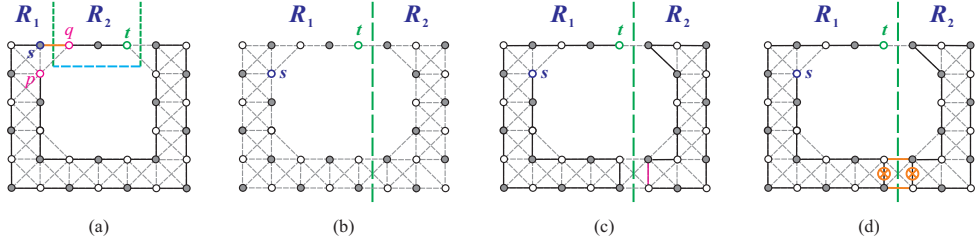}
\caption{(a) Two vertical and one horizontal separations on $O(m,n; k,l; a,b,c,d)$ under that $a,b,d\geqslant 2$, $c=1$, and $t = (a+k, 1)$, (b) a vertical separation on $O(m,n; k,l; a,b,c,d)$ under that $a,b,d\geqslant 2$, $c=1$, and $t = (a+k-1, 1)$, (c) a Hamiltonian $(s, t)$-path in $R_1$ and a Hamiltonian cycle in $R_2$ for (b), and (d) a Hamiltonian $(s, t)$-path in $O(m,n; k,l; a,b,c,d)$ for (b), where bold lines indicate the constructed Hamiltonian path and $\otimes$ represents the destruction of an edge while constructing a Hamiltonian $(s, t)$-path of $O(m,n; k,l; a,b,c,d)$.} \label{Fig_HP6}
\end{figure}

\hspace{1.5cm}Case 1.1.3.2: $t = (a+k-1, 1)$. In this subcase, $k\geqslant 4$. We make a vertical separation on $O(m,n; k,l; a,b,c,d)$ to obtain two disjoint supergrid subgraphs $R_1 = C(t_x,n; t_x-a,l; c,d)$ and $R_2 = C(m-t_x,n; 1,l; c,d)$ (see Fig. \ref{Fig_HP6}(b)). Consider $(R_1, s, t)$. Since $a, d\geqslant 2$, $t = (a+k-1,1)$ and $s_x\leqslant a$, it is cleat that $(R_1, s, t)$ does not satisfy (F1), (F2), and (F4)--(F6). Since $(R_1, s, t)$ does not satisfy conditions (F1), (F2), and (F4)--(F6), by Theorem \ref{HP-Theorem-RLCshaped}, $R_1$ contains a Hamiltonian $(s, t)$-path. Using the algorithm of \cite{Keshavarz19b}, we can construct a Hamiltonian $(s, t)$-path $P_1$ of $R_1$ in which one edge $e_1$ is placed to face $R_2$. By Theorem \ref{HC-LCshaped}, $R_2$ contains a Hamiltonian cycle $HC_2$. Note that by the construction of Hamiltonian cycle in \cite{Keshavarz19b}, we can construct $HC_2$ such that its one flat face is placed to $R_2$ (see Fig. \ref{Fig_HP6}(c)). Then, there exist two edges $e_1 \in P_1$ and $e_2\in HC_2$ such that $e_1 \thickapprox e_2$ (see Fig. \ref{Fig_HP6}(c)). By Statement (2) of Proposition \ref{Pro_Obs}, $P_1$ and $HC_2$ can be combined into a Hamiltonian $(s, t)$-path of $O(m,n; k,l; a,b,c,d)$. The construction of a such Hamiltonian path is depicted in Fig. \ref{Fig_HP6}(d).

\hspace{0.5cm}Case 1.2: $b=1$ and $t_y\leqslant c+l$. In this case, ($k\leqslant 2$ or $l\leqslant 2$, and $t=(m, 1)$) or ($d\geqslant 2$ and $t_y\geqslant c+l-1$). If ($l\geqslant 3$ and $2\leqslant t_y\leqslant c+l-2$) or ($l, k\geqslant 3$ and $t = (m, 1)$), then $(O(m,n; k,l; a,b,c,d), s, t)$ satisfies condition (F13).

\hspace{1cm}Case 1.2.1: $t =( m, 1)$. In this subcase, $k\leqslant 2$ or $l\leqslant 2$.

\hspace{1.5cm}Case 1.2.1.1: $k\leqslant 2$. A Hamiltonian $(s, t)$-path of $O(m,n; k,l; a,b,c,d)$ can be constructed by similar to Case 1.1.1, where $k = 1$, $R_1 = R(m_1, n)$, $R_2 = C(m-m_1,n; k,l; c,d)$, and $m_1 = a$ (see Fig. \ref{Fig_HP7}(a)) or $k = 2$, $R_1 = C(a+1,n; 1,l; c,d)$ and $R_2 = C(m-a-1,n; 1,l; c,d)$ (see Fig. \ref{Fig_HP7}(b)). Fig. \ref{Fig_HP7}(a) and Fig. \ref{Fig_HP7}(b) also show the constructed Hamiltonian $(s, t)$-paths of $O(m,n; k,l; a,b,c,d)$.

\begin{figure}[h]
\centering
\includegraphics[scale=0.85]{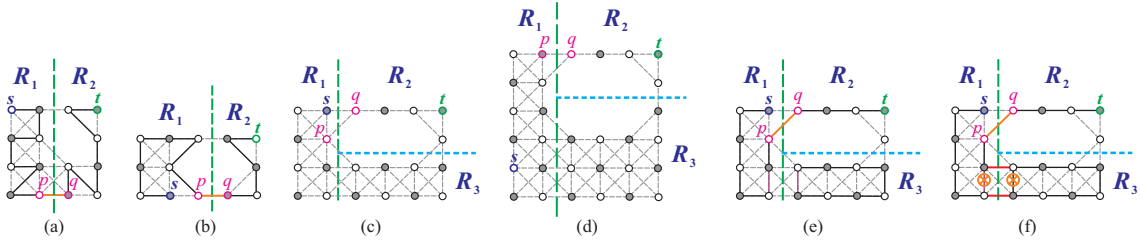}
\caption{(a) and (b) A vertical separation on $O(m,n; k,l; a,b,c,d)$ under that $a\geqslant 2$, $b=c=1$, $t = (m, 1)$, and $k\leqslant 2$, (c) and (d) a vertical and horizontal separations on $O(m,n; k,l; a,b,c,d)$ under that $a\geqslant 2$, $b=c=1$, $t = (m, 1)$, and $k\geqslant 3$, $l\leqslant 2$, (e) a Hamiltonian $(s, t)$-path in $R_1\cup R_2$ and a Hamiltonian cycle in $R_3$ for (c), and (f) a Hamiltonian $(s, t)$-path in $O(m,n; k,l; a,b,c,d)$ for (c).} \label{Fig_HP7}
\end{figure}

\hspace{1.5cm}Case 1.2.1.2: $k\geqslant 3$ and $l\leqslant 2$. We make a vertical and horizontal separations on $O(m,n; k,l; a,b,c,d)$ to obtain three disjoint supergrid subgraphs $R_1 = R(a, n)$, $R_2 = L(m-a,n_1; k,n_1-c)$, and $R_3 = R(m-a, d)$ if $l=1$; otherwise $R_3=L(m-a, n-n_1; k, l-n_1+c)$, where $n_1 = t_y+1 = 2$ (see Fig. \ref{Fig_HP7}(c) and \ref{Fig_HP7}(d)). Let $p\in V(R_1)$ and $q\in V(R_2)$ such that $p\thicksim q$, $q = (a+1, 1)$, and $p = (a, 1)$ if $s\neq (a, 1)$; otherwise $p = (a, 2)$. Consider $(R_1, s, p)$. Condition (F1) holds, if $a = 2$ and $s_y = p_y  =2$. Clearly, it contradicts that $p = (a, 1)$ when $s \neq (a,1)$. Now, consider $(R_2, q, t)$. Since $b = c = 1$, $q = (a+1, 1)$, and $t = (m, 1)$, it is easy to check that $(R_2, q, t)$ does not satisfy (F1), (F2), and (F3). Since $(R_1, s, p)$ and $(R_2, q, t)$ do not satisfy conditions (F1), (F2), and (F2), by Theorem \ref{HP-Theorem-RLCshaped}, there exist a Hamiltonian $(s, p)$-path $P_1$ and a Hamiltonian $(q, t)$-path $P_2$ of $R_1$ and $R_2$, respectively. Note that $P_1$ is a canonical Hamiltonian path of $R_1$. Then, $P = P_1 \Rightarrow P_2$ forms a Hamiltonian $(s, t)$-path of $R_1\cup R_2$, as depicted in Fig. \ref{Fig_HP7}(e). By Lemma \ref{HC-rectangular_supergrid_graphs} or Theorem \ref{HC-LCshaped}, $R_3$ contains a Hamiltonian cycle $HC_3$. We can place one flat face of $HC_3$ to face $R_1$. Then, there exist two edges $e_1\in P$ and $e_3 \in HC_3$ and such that $e_1 \thickapprox e_3$ (see Fig. \ref{Fig_HP7}(f)). By Statement (2) of Proposition \ref{Pro_Obs}, $P$ and $HC_3$ can be combined into a Hamiltonian $(s, t)$-path of $O(m,n; k,l; a,b,c,d)$. The construction of a such Hamiltonian path is depicted in Fig. \ref{Fig_HP7}(f).

\hspace{1cm}Case 1.2.2: $t = (m, c+l)$. A Hamiltonian $(s, t)$-path of $O(m,n; k,l; a,b,c,d)$ can be constructed by similar to Case 1.1.3.1, where $R_1 = L(m,n; k+b, c+l)$ and $R_2 = L(k+b,c+l; k,l)$ (see Fig. \ref{Fig_HP8}(a) and \ref{Fig_HP8}(b)).

\begin{figure}[h]
\centering
\includegraphics[scale=0.85]{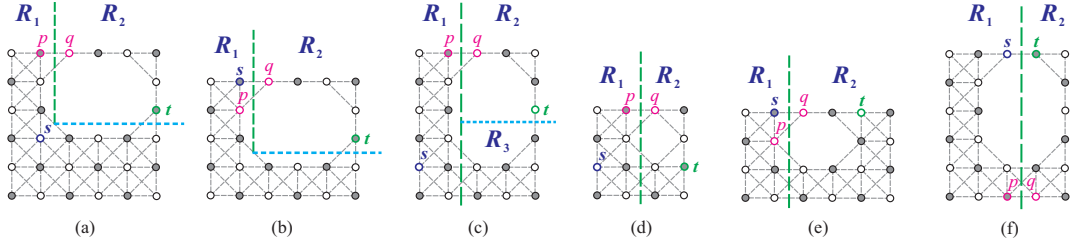}
\caption{(a)--(c) A vertical and horizontal separations on $O(m,n; k,l; a,b,c,d)$ under that $a\geqslant 2$, $b=c=1$, and $t = (m, c+l)$ or $t = (m, c+l-1)$, (d)--(e) a vertical separation on $O(m,n; k,l; a,b,c,d)$ for $a,d\geqslant 2$ and $c=1$, and (f) a vertical separation on $O(m,n; k,l; a,b,c,d)$ for $a\geqslant 2$, $c=1$, $s_y=t_y=1$ and $a+1\leqslant s_x,t_x\leqslant a+k$.} \label{Fig_HP8}
\end{figure}

\hspace{1cm}Case 1.2.3: $t = (m, c+l-1)$. In this subcase, $l\geqslant 2$. A Hamiltonian $(s, t)$-path of $O(m,n; k,l; a,b,c,d)$ can be constructed by similar to Case 1.2.1.2, where $n_1 = t_y$ (see Fig. \ref{Fig_HP8}(c)).

\hspace{.5cm}Case 1.3: $d\geqslant 2$ and $[(b\geqslant 2$ and $[(t_y\geqslant 2)$ or $(t_y=1$ and $t_x>a+k)])$ or $(b=1$ and $t_y\geqslant c+l+1)]$. A Hamiltonian $(s, t)$-path of $O(m,n; k,l; a,b,c,d)$ can be constructed by similar to Case 1.1.1, where $q = (m_1+1, 1)$, and $p = (m_1, 1)$ if $s\neq (m_1,1)$; otherwise $p=(m_1,2)$ (see Fig. \ref{Fig_HP8}(d) and \ref{Fig_HP8}(e)). Notice that since $d, b\geqslant 2$, $q = (m_1+1, 1)$, and $[(t_x > a+k)$ or $(t_x\leqslant a+k$ and $t_y > c+l)]$, it is easy to check that $(R_2, q, t)$ does not satisfy conditions (F1), (F2), and (F4)--(F6).

Case 2: $a,b,c,d\geqslant 2$. A Hamiltonian $(s, t)$-path of $O(m,n; k,l; a,b,c,d)$ can be constructed by similar to Case 1.1.1 (see Figs. \ref{Fig_HP5}(c)--(d)), where $m_1 = a$ and
$$\begin{cases}
  p = (m_1, n)\ \mathrm{and}\ q = (m_1+1, n),     & \mathrm{if}\ s\neq (m_1, n)\ \mathrm{and}\ t\neq (m_1+1, n);\\
  p = (m_1, 1)\ \mathrm{and}\ q = (m_1+1, 1),     & \mathrm{if}\ s = (m_1, n)\ \mathrm{and}\ t = (m_1+1, n);\\
  p = (m_1, n)\ \mathrm{and}\ q = (m_1+1, n-1),   & \mathrm{if}\ s\neq (m_1, n)\ \mathrm{and}\ t = (m_1+1, n);\\
  p = (m_1, n-1)\ \mathrm{and}\ q = (m_1+1, n),   & \mathrm{otherwise.}
\end{cases}$$

\noindent Consider $(R_2, q, t)$. Since $b,c,d\geqslant 2$, $q_x = m_1+1$, and $t_x\geqslant m_1+1$, it is clear that $(R_2, q, t)$ does not satisfy (F1), (F2), and (F4)--(F6). Now, consider $(R_1, s, p)$. Condition (F1) holds only if ($s_y=p_y = n-1$) or ($s_y=p_y = 2$). Obviously, it contradicts that $p = (m_1, n)$ when $s\neq (m_1, n)$ or $p = (m_1, 1)$ when $s\neq (m_1, 1)$. Thus, $(R_1, s, p)$ does not satisfy condition (F1).
\end{proof}

\begin{lem}\label{HP-Oshaped3}
Let $O(m,n; k,l; a,b,c,d)$ be an $O$-shaped supergrid graph, and let $s$ and $t$ be its two distinct vertices such that $s_x, t_x\geqslant a+1$ and $(O(m,n; k,l; a,b,c,d), s, t)$ does not satisfy conditions $\mathrm{(F1)}$ and $\mathrm{(F10)}$--$\mathrm{(F13)}$. Then, $O(m,n; k,l; a,b,c,d)$ contains a Hamiltonian $(s, t)$-path, i.e., $HP(O(m,n; k,l; a,b,c,d), s, t)$ does exist.
\end{lem}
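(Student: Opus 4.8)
The plan is to mirror the peel-and-glue strategy already used in Lemmas \ref{HP-Oshaped1} and \ref{HP-Oshaped2}, but now with both endpoints lying in the $C$-shaped portion to the right of the left strip. First I would invoke the symmetry reduction of Section \ref{Sec_Preliminaries} to restrict attention to the three representative cases $a=b=c=d=1$; $a\geqslant 2$ with $c=1$; and $a,b,c,d\geqslant 2$. For the two non-degenerate cases the basic move is a vertical separation of $O(m,n;k,l;a,b,c,d)$ into the left rectangle $R_1=R(a,n)$ and the $C$-shaped graph $R_2=C(m-a,n;k,l;c,d)$, which contains both $s$ and $t$. If $(R_2,s,t)$ avoids the $C$-shaped forbidden conditions, Theorem \ref{HP-Theorem-RLCshaped} yields a Hamiltonian $(s,t)$-path $P_2$ of $R_2$, and I would construct it, via the canonical $C$-shaped construction, so that one of its edges faces $R_1$. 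Since $a\geqslant 2$, Lemma \ref{HC-rectangular_supergrid_graphs} supplies a Hamiltonian cycle $HC_1$ of $R_1$ with a flat face toward $R_2$, whence Statement (2) of Proposition \ref{Pro_Obs} splices $HC_1$ into $P_2$ and produces the desired Hamiltonian $(s,t)$-path of $O$.

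The internal case analysis is driven by the relative positions of $s$ and $t$ inside the right region, namely both in the top strip, both in the bottom strip, both in the right vertical strip, or separated on opposite arms of the hole, each refined by the sizes of $b$ and $d$ exactly as in the subcases of Lemma \ref{HP-Oshaped2}; the relevant edge placements come from Lemmas \ref{HamiltonianConnected-Rectangular-wz_rectangle} and \ref{HP-3rectangle-boundary_path}. The thin-frame case $a=b=c=d=1$ must be treated separately, because there $R_1=R(1,n)$ is a single column carrying no Hamiltonian cycle; following Case 1 of Lemma \ref{HP-Oshaped1}, I would instead apply horizontal separations of the frame into rectangular blocks, thread the left column as a flat boundary path, and assemble the pieces by concatenation, while condition (F10) rules out precisely the placements of $s,t$ for which this threading is impossible.

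The heart of the argument, and the step I expect to be hardest, is the correspondence between the $O$-shaped forbidden conditions (F11)--(F13) and the $C$-shaped ones (F4)--(F6) for the subgraph $R_2$. Writing $R_2=C(m-a,n;k,l;c,d)$, its closed-arm width is $m'-k'=b$, so the $C$-shaped condition (F6) becomes $b=1$ together with $s_y,t_y\leqslant c$ or $s_y,t_y>c+l$, while (F4)--(F5) localize $s,t$ near the ends of the arms; these are exactly the configurations governed by the many subcases of (F13) and by (F11)--(F12). The delicate point is that $(R_2,s,t)$ may satisfy one of (F4)--(F6), so that $R_2$ alone admits no Hamiltonian $(s,t)$-path, even when $(O,s,t)$ avoids (F1) and (F10)--(F13); this occurs precisely because the left rectangle $R_1$ supplies extra routing room that the isolated $C$-shape lacks. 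In such borderline cases the naive peel-and-glue breaks down, and I would replace the vertical cut by a separation that lets the path enter $R_1$, detour through it, and re-enter the right region, for example by cutting horizontally so that an $L$-shaped block absorbs the troublesome arm, or by peeling off $R(a-1,n)$ and leaving a smaller $O$-shaped graph to which Theorem \ref{HC-Oshaped} applies. Verifying that every configuration not excluded by (F11)--(F13) admits such an alternative construction, and that the resulting pieces never themselves fall into a forbidden condition, is the bulk of the work; the remaining generic configurations reduce directly to Theorem \ref{HP-Theorem-RLCshaped}, Theorem \ref{HC-Oshaped}, and Proposition \ref{Pro_Obs} as above.
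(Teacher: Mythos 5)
Your generic construction---peel off $R_1=R(a,n)$, take a Hamiltonian $(s,t)$-path of the $C$-shaped remainder $R_2=C(m-a,n;k,l;c,d)$ with one edge facing $R_1$, and splice in a Hamiltonian cycle of $R_1$ via Statement (2) of Proposition \ref{Pro_Obs}---is sound, and you correctly identify the crux: $(R_2,s,t)$ can satisfy the $C$-shaped forbidden conditions (F4)--(F6) even when $(O(m,n;k,l;a,b,c,d),s,t)$ avoids (F1) and (F10)--(F13). But this is exactly where your proposal stops being a proof. For such borderline configurations (e.g.\ $b=1$, $c\geqslant 2$, both $s$ and $t$ in the strip above the hole, so that $(R_2,s,t)$ falls into (F6) while no $O$-shaped condition is violated) you only gesture at ``an alternative separation,'' and the one concrete fallback you name does not work: peeling off $R(a-1,n)$ and invoking Theorem \ref{HC-Oshaped} yields only a Hamiltonian \emph{cycle} of the smaller $O$-shaped graph, yet both $s$ and $t$ lie in that smaller graph, so the cycle cannot be converted into a Hamiltonian $(s,t)$-path by gluing in $R(a-1,n)$, which contains neither endpoint; what you would actually need is Hamiltonian connectivity of the smaller $O$-shaped graph, i.e.\ the very statement being proved. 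Since you yourself concede that verifying all these configurations ``is the bulk of the work,'' the essential content of the lemma is missing from the proposal.

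The paper sidesteps this difficulty by a different reduction that your proposal never uses: since $s_x,t_x\geqslant a+1$, whenever $s$ or $t$ lies in a strip of width at least $2$ (the right strip when $b\geqslant 2$, or the bottom strip when $d\geqslant 2$), one re-coordinatizes---placing $(1,1)$ at the upper-right or down-right corner---so that the configuration becomes isomorphic to one already settled in Lemma \ref{HP-Oshaped1} or Lemma \ref{HP-Oshaped2}; no new splicing argument is needed there. The only genuinely new constructions are for both endpoints in thin strips: when $s_y=t_y=1$ and $s_x,t_x\leqslant a+k$ (which forces $s\thicksim t$, else (F1) holds), the paper cuts vertically at $x=s_x$ into \emph{two} $C$-shaped pieces $C(m_1,n;m_1-a,l;c,d)$ and $C(m-m_1,n;a+k-m_1,l;c,d)$ and concatenates Hamiltonian paths of these, and the case $b=c=1$ with $t_x=m$ is again handled by reflection. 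If you wish to salvage your route, you must carry out, configuration by configuration, the horizontal/L-shaped separations for every case in which $(R_2,s,t)$ satisfies (F4)--(F6); the paper's symmetry reduction shows this entire analysis can be bypassed.
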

\begin{proof}
In the following, we will assume that $a\geqslant 2$ and $c=1$. The cases of $(a=b=c=d=1)$ and $(a,b,c,d\geqslant 2)$ are isomorphic to Lemmas \ref{HP-Oshaped1} and \ref{HP-Oshaped2}. If $c=d=1$ and $s_x, t_x\leqslant a+k$, then $s_y=t_y=1$ or $s_y=t_y=n$. Notice that if $s_y = 1$ and $t_y = n$, then $(O(m,n; k,l; a,b,c,d), s, t)$ satisfies condition (F1), i.e. $\{s, t\}$ is a vertex cut of $O(m,n; k,l; a,b,c,d)$. Thus, without loss of generality, assume that $s_y=t_y=1$ when $c=d=1$ and $s_x, t_x\leqslant a+k$. Consider the following three cases:

Case 1: One of the following cases holds:

\hspace{1.15cm}(1) $b\geqslant 2$ and $[(s_x,t_x > a+k)$ or $(s_x\leqslant a+k$ and $t_x > a+k)]$; or

\hspace{1.15cm}(2) $d\geqslant 2$ and $s_y > c+l$ or $t_y > c+l$.

\noindent In these cases, assume that $(1, 1)$ is the coordinates of vertex in upper-right corner when $b\geqslant 2$, or down-right corner when $d\geqslant 2$, in $O(m,n; k,l ; a,b,c,d)$. Then, we can construct a Hamiltonian $(s, t)$-path of $(O(m,n; k,l; a,b,c,d)$ with the same arguments as we did in the proofs of Lemmas \ref{HP-Oshaped1} and \ref{HP-Oshaped2}. Note that these cases are isomorphic to the assumptions of Lemmas \ref{HP-Oshaped1} and \ref{HP-Oshaped2}.

Case 2: $s_y=t_y = 1$ and $s_x,t_x\leqslant a+k$. In this case, $s\thicksim t$. If $s\not\thicksim t$, then $O(m,n; k,l; a,b,c,d), s, t)$ satisfies condition (F1). A Hamiltonian $(s, t)$-path of $O(m,n; k,l; a,b,c,d)$ can be constructed by similar to Case 1.1.1 of Lemma \ref{HP-Oshaped2}, where $R_1 = C(m_1,n; m_1-a,l; c,d)$, $R_2 = C(m-m_1,n; a+k-m_1,l; c,d)$, and $m_1=s_x$ (see Fig. \ref{Fig_HP8}(f)).

Case 3: $b=c=1$, $[(d=1)$ or $(d\geqslant 2$ and $s_y,t_y\leqslant c+l)]$, and $[(s_x=t_x = m)$ or $(s_x\leqslant m-1$ and $t_x = m)]$.
In this case, assume that $(1, 1)$ is the coordinates of vertex in upper-right corner in $O(m,n; k,l; a,b,c,d)$. Then, a Hamiltonian $(s, t)$-path of $O(m,n; k,l; a,b,c,d)$ can be constructed by similar  Lemmas \ref{HP-Oshaped1}, when $s_x=t_x=m$, and Lemma \ref{HP-Oshaped2}, when $s_x\leqslant a+k(=m-1)$ and $t_x = m$.
\end{proof}

It follows from Lemma \ref{Necessary-condition-Oshaped} and Lemmas \ref{HP-Oshaped1}--\ref{HP-Oshaped3} that the following theorem shows the Hamiltonian connectivity of $O$-shaped supergrid graphs.

\begin{thm}\label{HP-Theorem-Oshaped}
Let $O(m,n; k,l; a,b,c,d)$ be an $O$-shaped supergrid graph, and let $s$ and $t$ be its two distinct vertices. Then, $(O(m,n; k,l; a,b,c,d), s, t)$ contains a Hamiltonian $(s, t)$-path if and only if $(O(m,n; k,l; a,b,c,d), s, t)$ does not satisfy conditions $\mathrm{(F1)}$ and $\mathrm{(F10)}$--$\mathrm{(F13)}$.
\end{thm}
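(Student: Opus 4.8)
The statement is a biconditional, so the plan is to treat the two directions separately and assemble them entirely from results already established above. For the necessity (``only if'') direction I would simply invoke Lemma~\ref{Necessary-condition-Oshaped}, which is exactly the contrapositive: it asserts that if $(O(m,n;k,l;a,b,c,d),s,t)$ satisfies any of (F1), (F10)--(F13) then no Hamiltonian $(s,t)$-path exists. Hence the existence of such a path forces all of the forbidden conditions to fail, which is precisely the necessity half.

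For the sufficiency (``if'') direction I would recall the global convention $s_x\leqslant t_x$ and split on the horizontal position of the two endpoints relative to the left block $R(a,n)$ (the leftmost $a$ columns, exactly the piece $R_1$ used in the separation of Theorem~\ref{HC-Oshaped}). Because $s_x\leqslant t_x$, exactly one of three mutually exclusive possibilities occurs: (i) $s_x,t_x\leqslant a$; (ii) $s_x\leqslant a$ and $t_x\geqslant a+1$; or (iii) $s_x,t_x\geqslant a+1$. These three cases are precisely the hypotheses of Lemmas~\ref{HP-Oshaped1}, \ref{HP-Oshaped2}, and \ref{HP-Oshaped3}, respectively, so in each case the corresponding lemma directly supplies the desired Hamiltonian $(s,t)$-path. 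The only bookkeeping needed is to match the forbidden-condition hypotheses: Lemmas~\ref{HP-Oshaped2} and \ref{HP-Oshaped3} already assume the full list (F1) and (F10)--(F13), whereas Lemma~\ref{HP-Oshaped1} assumes only (F1) and (F10). The weaker hypothesis suffices in case~(i) because each of (F11)--(F13) forces at least one endpoint to have $x$-coordinate $\geqslant a+1$ (they all concern positions in the right-hand block), so none of them can even be triggered when $s_x,t_x\leqslant a$; thus assuming the full list automatically yields the hypothesis Lemma~\ref{HP-Oshaped1} needs.

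At the level of the theorem the argument is essentially a dispatch over the exhaustive cases (i)--(iii), and I expect the genuine difficulty to reside entirely in the three supporting lemmas rather than in their combination. The main obstacle lies in the case analysis inside those lemmas: repeatedly applying vertical and horizontal separation operations to cut the $O$-shaped graph into rectangular, $L$-shaped, and $C$-shaped pieces, routing $s$, $t$ and the splice vertices $p,q,w,z$ so that each subgraph avoids conditions (F1)--(F6), and then re-gluing the sub-paths and Hamiltonian cycles via the concatenation $\Rightarrow$ and the parallel-edge and adjoining moves of Proposition~\ref{Pro_Obs}. Verifying that every endpoint placement excluded by (F10)--(F13) genuinely obstructs a Hamiltonian path, and that every remaining placement admits a clean separation, is where the care is concentrated; once that is in hand, the biconditional closes immediately upon noting the exhaustiveness of cases (i)--(iii).
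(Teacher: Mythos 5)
Your proposal is correct and takes essentially the same route as the paper: the paper's proof of Theorem~\ref{HP-Theorem-Oshaped} is exactly the combination of Lemma~\ref{Necessary-condition-Oshaped} for necessity and Lemmas~\ref{HP-Oshaped1}--\ref{HP-Oshaped3} for sufficiency, dispatched over the same three exhaustive cases $(s_x,t_x\leqslant a)$, $(s_x\leqslant a,\ t_x\geqslant a+1)$, and $(s_x,t_x\geqslant a+1)$. Your added observation that (F11)--(F13) cannot be triggered when $s_x,t_x\leqslant a$ (so Lemma~\ref{HP-Oshaped1}'s weaker hypothesis suffices) is a correct piece of bookkeeping that the paper leaves implicit.
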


\section{The Longest $(s, t)$-path Algorithm}\label{Sec_Algorithm}
From Theorem \ref{HP-Theorem-Oshaped}, we know that if $(O(m,n; k,l;a,b, c,d), s, t)$ satisfies one of conditions (F1) and (F10)--(F13), then $(O(m,n; k,l; a,b,c,d), s, t)$ contains no Hamiltonian $(s, t)$-path. So in this section, first for these cases we give upper bounds on the lengths of longest paths between $s$ and $t$. Then, we show that these upper bounds are equal to the lengths of longest $(s, t)$-paths in $O(m,n; k,l;a,b, c,d)$. Notice that the isomorphic cases are omitted, and assume that $s_x \leqslant t_x$. Then, we can only consider the cases of $a=b=c=d=1$, $a\geqslant 2$ and $c=1$, and $a,b,c,d\geqslant 2$ (see Section \ref{Sec_Preliminaries}). In the following, we use $\hat{L}(G, s, t)$ to denote the length of longest paths between $s$ and $t$, and $\hat{U}(G,s,t)$ to indicate the upper bound on the length of longest paths between $s$ and $t$, where $G$ is a rectangular, $L$-shaped, or $C$-shaped supergrid graph. By the length of a path we mean the number of vertices of the path. The following lemmas give these upper bounds.

\begin{figure}[h]
\centering
\includegraphics[scale=0.85]{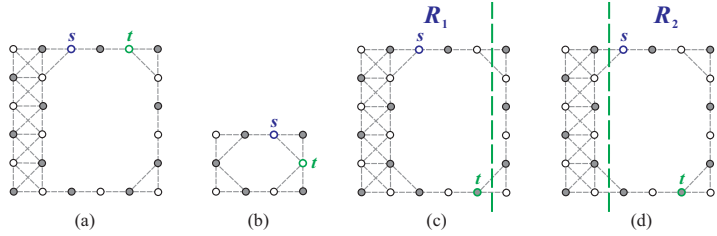}
\caption{The upper bound of the longest $(s, t)$-path under that $\{s, t\}$ is a vertex cut, where (a) (O1) holds, (b) (O2) holds, and (c)--(d) (O3) holds, where bold dash line indicates the vertical separation on $O(m,n; k,l; a,b,c,d)$.} \label{Fig-LongP1}
\end{figure}

We first consider the case of $\{s, t\}$ is a vertex cut of $O(m,n; k,l; a,b,c,d)$. We compute the upper bound of the longest $(s, t)$-path in this case as the following lemma.

\begin{lem}\label{Lemma:O1-O3}
Let $c = 1$ and $\{s, t\}$ be a vertex cut of $O(m,n; k,l;a,b, c,d)$. Then, the following conditions hold:
\begin{description}
  \item[$\mathrm{\textbf{(O1)}}$] If $k\geqslant 3$ and $s_y = t_y = 1$, then the length of any path between $s$ and $t$ cannot exceed $m\times n-k\times l-t_x+s_x+1$ (see Fig. $\mathrm{\ref{Fig-LongP1}(a)}$).
  \item[$\mathrm{\textbf{(O2)}}$] If $b = 1$, $a+1\leqslant s_x\leqslant a+k$, $s_y=1$, and $t_x=m$, then the length of any path between $s$ and $t$ cannot exceed $m\times n-k\times l-t_y-m+s_x+2$ (see Fig. $\mathrm{\ref{Fig-LongP1}(b)}$).
  \item[$\mathrm{\textbf{(O3)}}$] If $d = 1$, $a+1\leqslant s_x, t_x\leqslant a+k$, $s_y = 1$, and $t_y = n$, then the length of any path between $s$ and $t$ cannot exceed $\max\{\hat{L}(R_{1}, s, t), \hat{L}(R_{2}, s, t)\}$, where $R_1 = C(t_x,n; t_x-a,l, c, d)$, $R_2 = C(m-m_1,n; a+k-m_1,l, c, d)$, and $m_1 = s_x-1$ (see Fig. $\mathrm{\ref{Fig-LongP1}(c)}$ and $\mathrm{\ref{Fig-LongP1}(d)}$).
\end{description}
\end{lem}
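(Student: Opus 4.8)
The plan is to exploit the hypothesis that $\{s,t\}$ is a vertex cut together with one elementary observation about simple $(s,t)$-paths. First I would isolate the following fact: if $P = s=v_0\to v_1\to\cdots\to v_{L-1}=t$ is any simple path and $\{s,t\}$ is a vertex cut, then the interior vertices $v_1,\ldots,v_{L-2}$ all lie in a single connected component of $O(m,n;k,l;a,b,c,d)-\{s,t\}$. This holds because every edge $v_iv_{i+1}$ with $1\leqslant i\leqslant L-3$ joins two vertices distinct from $s$ and $t$, so $v_1\to\cdots\to v_{L-2}$ is a connected subgraph of the cut graph and is therefore confined to one component $C$. Consequently $V(P)\subseteq\{s,t\}\cup C$, so $|P|\leqslant|C|+2$; equivalently, $P$ is an $(s,t)$-path lying inside the induced subgraph $O[\{s,t\}\cup C]$. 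All three bounds then reduce to determining the components of the cut graph and estimating the largest one.

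For \textbf{(O1)}, since $c=1$ the row $y=1$ is the only top-frame row and, for $a+2\leqslant x\leqslant a+k-1$, the over-hole vertex $(x,1)$ has only the neighbours $(x-1,1),(x+1,1)$. Hence removing $s=(s_x,1)$ and $t=(t_x,1)$ with $s_x<t_x$ splits the graph into exactly two components: the trapped over-hole segment $B=\{(x,1):s_x<x<t_x\}$ with $|B|=t_x-s_x-1$, and the remainder $A$ (left, bottom, and right frames plus the over-hole vertices outside $[s_x,t_x]$), which is connected because it is linked all the way around the hole. From $|V(O)|=mn-kl$ we get $|A|=mn-kl-2-(t_x-s_x-1)$, and since the frame dwarfs the short strip, $|A|\geqslant|B|$; thus $|P|\leqslant|A|+2=mn-kl-t_x+s_x+1$. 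For \textbf{(O2)}, with $b=c=1$ the column $x=m$ is the single-column right frame; removing $s=(s_x,1)$ and $t=(m,t_y)$ cuts off the upper-right region $B=\{(x,1):s_x<x\leqslant a+k\}\cup\{(m,y):1\leqslant y<t_y\}$, attached to the rest only through $s$ and $t$. Using $a+k=m-1$ gives $|B|=(a+k-s_x)+(t_y-1)=m+t_y-s_x-2$, and the larger complement $A$ yields $|P|\leqslant|A|+2=(mn-kl)-|B|=mn-kl-t_y-m+s_x+2$.

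For \textbf{(O3)}, here $c=d=1$, so the hole occupies rows $2,\ldots,n-1$ and the top (row $1$) and bottom (row $n$) frames are single rows over the hole. Removing $s=(s_x,1)$ and $t=(t_x,n)$ separates $O-\{s,t\}$ into a left component $C_{\mathrm{left}}$ (left frame together with the over-hole top/bottom cells left of $s$ and $t$) and a right component $C_{\mathrm{right}}$, with the hole blocking any middle connection. The induced subgraph $O[\{s,t\}\cup C_{\mathrm{left}}]$ embeds as an induced subgraph of the $C$-shaped graph $R_1=C(t_x,n;t_x-a,l;c,d)$, and $O[\{s,t\}\cup C_{\mathrm{right}}]$ embeds in $R_2=C(m-m_1,n;a+k-m_1,l;c,d)$ with $m_1=s_x-1$; both $R_1,R_2$ contain $s$ and $t$. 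Hence $P$ is an $(s,t)$-path of $R_1$ or of $R_2$, so $|P|\leqslant\hat L(R_1,s,t)$ or $|P|\leqslant\hat L(R_2,s,t)$, giving $|P|\leqslant\max\{\hat L(R_1,s,t),\hat L(R_2,s,t)\}$.

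The conceptual heart, namely the cut observation, is immediate; the real care goes into the geometry. The main obstacle is to verify rigorously that in each configuration $O-\{s,t\}$ has exactly the claimed components: in particular that the remainder $A$ in (O1)--(O2), and each of $C_{\mathrm{left}},C_{\mathrm{right}}$ in (O3), is connected and that no further splitting occurs, which rests on the degree-$2$ structure of the over-hole row(s) forced by $c=1$ (and $d=1$). For (O3) one must also check that the two augmented components genuinely sit as induced subgraphs of the stated $C$-shaped graphs so that $\hat L(R_i,s,t)$ is a legitimate upper bound. Once the components are pinned down, the vertex counts for $|B|$ in (O1) and (O2) are routine, and the inequality $|A|\geqslant|B|$ needed to drop the maximum follows because the frame has far more vertices than the short over-hole strip.
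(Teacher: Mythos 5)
Your proposal is correct and follows essentially the same route as the paper: both use the fact that a simple $(s,t)$-path through a vertex cut $\{s,t\}$ must have all its interior vertices in a single component of $O(m,n;k,l;a,b,c,d)-\{s,t\}$, then identify the two components, bound the path length by the larger component size plus two in (O1)--(O2), and by the longest $(s,t)$-paths of the enclosing $C$-shaped graphs in (O3). Your write-up is in fact more careful than the paper's terse proof, since you explicitly verify the component structure, the counts $|B|$, and the induced-subgraph embeddings into $R_1$ and $R_2$.
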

\begin{proof}
Consider Fig. \ref{Fig-LongP1}. Removing $s$ and $t$ clearly disconnects $O(m,n; k,l;a,b, c,d)$ into two components $R_1$ and $R_2$. Thus, a simple path between $s$ and $t$ can only go through one of these components. Therefore, its length cannot exceed the size of the largest component. Notice that, for (O1) (resp., (O2)), the length of any path between $s$ and $t$ is equal to $\max\{t_x-s_x+1, m\times n-k\times l-t_x+s_x+1\}$ (resp., $\max\{t_y+m-s_x,m\times n-k\times l-t_y-m+s_x+2\}$). Since $a\times n+b\times n+d\times k>t_x-s_x+1$ (resp., $a\times n+d\times (k+b) > t_y+m-s_x$), it is obvious that the length of any path between $s$ and $t$ cannot exceed $m\times n-k\times l-t_x+s_x+1$ (resp., $m\times n-k\times l-t_y-m+s_x+2$).
\end{proof}

Next, we consider the case that $\{s, t\}$ is not a vertex cut of $O(m,n; k,l; a,b,c,d)$. In this case, $(O(m,n; k,l;$ $a,b,c,d), s, t)$ may satisfy condition (F10), (F11), (F12), or (F13). The following lemma shows the upper bound of the longest $(s, t)$-path under that $c = 1$ and $(O(m,n; k,l; a,b,c,d), s, t)$ satisfies conditions (F11)--(F13).

\begin{figure}[!t]
\centering
\includegraphics[scale=0.85]{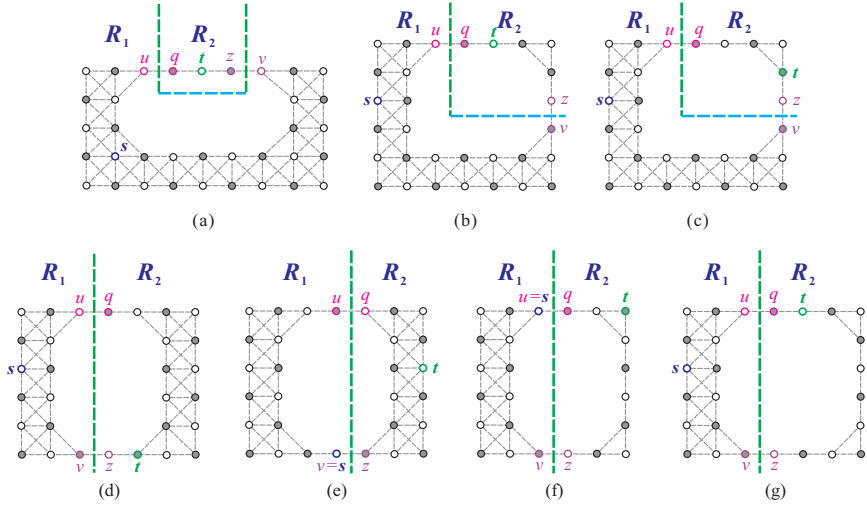}
\caption{The upper bound of the longest $(s, t)$-path in $O(m,n; k,l; a,b, c,d)$ for (a) condition (F11), (b) case (2.1) of condition (F13), (c) cases (2.2) and (2.3) of condition (F13), (d)--(e) condition (F11), and (f)--(g) cases (1.1), (1.2) and (3) of condition (F13), where bold dash line indicates the vertical or horizontal separation on $O(m,n; k,l; a,b,c,d)$.} \label{Fig-LongP2}
\end{figure}

\begin{lem}\label{Lemma:F11-F13}
Let $(O(m,n; k,l; a,b,c,d), s, t)$ satisfy $\mathrm{(F11)}$, $\mathrm{(F12)}$, or $\mathrm{(F13)}$ (cases $(1.1)$, $(1.2)$, $(2)$, and $(3)$ of $\mathrm{(F13)}$). Then, the following implications hold:
\begin{description}
  \item[$\mathrm{(1)}$] If $(O(m,n; k,l; a,b,c,d), s, t)$ satisfies condition $\mathrm{(F11)}$, then the length of any path between $s$ and $t$ cannot exceed $\max\{m \times n - k \times l - t_x+u_x+1, m \times n - k \times l -v_x+t_x+1\}$, where $u = (a+1, 1)$, $v = (a+k, 1)$, $t_y = 1$, and $a+3\leqslant t_x\leqslant a+k-2$ (see Fig. \emph{\ref{Fig-LongP2}(a)}).
  \item[$\mathrm{(2)}$] If $(O(m,n; k,l;a,b,c,d), s, t)$ satisfies case \emph{(2.1)} of condition $\mathrm{(F13)}$, then the length of any path between $s$ and $t$ cannot exceed $\max\{m \times n - k \times l - t_x+u_x+1, m\times n - k\times l - z_y-m+t_x+1\}$, where $u = (a+1, 1)$, $z = (m, c+l-1)$, $t_y = 1$, and $a+3\leqslant t_x\leqslant m-1$ (see Fig. \emph{\ref{Fig-LongP2}(b)}).
  \item[$\mathrm{(3)}$] If $(O(m,n; k,l; a,b, c,d), s, t)$ satisfies case \emph{(2.2)} or \emph{(2.3)} of condition $\mathrm{(F13)}$, then the length of any path between $s$ and $t$ cannot exceed $\max\{m \times n - k \times l - m+u_x-t_y+2, m\times n-k\times l-v_y+t_y+1\}$, where $u = (a+1, 1)$, $v = (m, c+l)$, $t_x = m$, and $1\leqslant t_y\leqslant l-1$ (see Fig. \emph{\ref{Fig-LongP2}(c)}).
  \item[$\mathrm{(4)}$] If $(O(m,n; k,l; a,b,c,d), s, t)$ satisfies condition $\mathrm{(F12)}$, case \emph{(1.1)} or \emph{(3)} of $\mathrm{(F13)}$, then the length of any path between $s$ and $t$ cannot exceed $\max\{\hat{L}(R_1, s, u) + \hat{L}(R_2, q, t), \hat{L}(R_1, s, v) + \hat{L}(R_2, z, t)\}$ (see Fig. \emph{\ref{Fig-LongP2}(d)--(g)}), where $u = (m_1, 1)$, $q = (m_1+1, 1)$, $v = (m_1, n)$, and $z = (m_1+1, n)$, $R_1 = C(m_1,n; m_1-a,l; c,d)$, $R_2 = C(m-m_1,n; a+k-m_1,l; c,d)$, if $k>1$; otherwise $R_2 = R(b, n)$, and $m_1 = a+1$ if $s_x\leqslant a$; otherwise $m_1 = s_x$.
\end{description}
\end{lem}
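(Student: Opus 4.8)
The plan is to establish all four bounds by the single \emph{cut-and-count} principle already exploited in Lemma~\ref{Lemma:O1-O3}: I locate a small cut (a degree-$2$ bottleneck vertex in (1)--(3), a two-edge vertical separation in (4)) that forces every simple $(s,t)$-path to omit a fixed block of vertices, and then bound its length by the total vertex count $m\times n-k\times l$ minus the size of that block, maximised over the admissible sides of the cut. The isomorphism reductions of Section~\ref{Sec_Preliminaries} let me assume $c=1$ in (1)--(3), so the part of the annulus lying directly over the hole is the single row $(a+1,1),\dots,(a+k,1)$; by the hole-adjacency rules each \emph{interior} vertex of this row has degree exactly $2$, whereas its two ends $u=(a+1,1)$ and $v=(a+k,1)$ retain ``pass-through'' edges into the left, respectively right, block of the annulus. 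This degeneracy is precisely what creates the forbidden conditions (F11)--(F13), and it is the lever for each bound.

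For parts (1)--(3) I would argue uniformly. In each such case the designated vertex $t$ lies interior to a width-one strip (the top row when $c=1$, or the rightmost column when $b=1$), hence $deg(t)=2$, and any path ending at $t$ reaches it from exactly one of its two strip-neighbours. Having fixed that entry side, the interior strip-vertices on the \emph{other} side of $t$ form a pendant path attached to the rest of $O(m,n;k,l;a,b,c,d)$ only through the corner ($u$, $v$, or $z$) on that side; that corner may still be traversed (two of its edges leave the strip into the adjacent block), but every strictly interior strip-vertex beyond it would become a dead end, impossible in a simple path unless $s$ sat there, which the hypotheses forbid. Counting the forced omissions on each side yields the two quantities inside the $\max$; for instance in (1) an entry from one side omits the $v_x-t_x-1$ vertices $(t_x+1,1),\dots,(a+k-1,1)$, giving $m\times n-k\times l-v_x+t_x+1$, and an entry from the other side gives the mirror bound $m\times n-k\times l-t_x+u_x+1$. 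The different algebraic shape of the second term in (2) and (3) merely reflects that, when $b=1$ and $d\geqslant 2$, the omitted block is an $L$-shaped region combining the right strip with part of the right column rather than a plain strip, so that the vertex count reorganises into the displayed expressions involving $z_y$ and $v_y$; the principle is identical, and I would read each region off Figs.~\ref{Fig-LongP2}(a)--(c).

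For part~(4) I would instead use an edge-cut. In the covered cases the removed rectangle abuts both the top and the bottom rows along the relevant boundary, so the vertical separation between columns $m_1$ and $m_1+1$ meets the graph in \emph{exactly} the two edges $(u,q)=\big((m_1,1),(m_1+1,1)\big)$ and $(v,z)=\big((m_1,n),(m_1+1,n)\big)$, the candidate crossed edges through $(m_1,2)$, $(m_1+1,2)$ and their bottom mirrors being destroyed by the hole. With $s$ on the left of this cut (in $R_1=C(m_1,n;m_1-a,l;c,d)$) and $t$ on the right (in $R_2$), any simple $(s,t)$-path crosses the two-edge cut an odd, hence exactly one, number of times; it therefore decomposes as an $(s,u)$-path in $R_1$ linked through $(u,q)$ to a $(q,t)$-path in $R_2$, or as an $(s,v)$-path linked through $(v,z)$ to a $(z,t)$-path in $R_2$. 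Bounding each piece by the longest-path value $\hat{L}(\cdot)$ of the relevant $C$-shaped (or, when $k=1$, rectangular) subgraph --- all well defined and computable by Theorem~\ref{RLC-shaped_LongestPath} --- and taking the larger of the two sums gives the asserted bound. The degenerate reading $\hat{L}(R_1,s,u)$ with $s=u$, which occurs in case (1.1) of (F13) since there $m_1=s_x$ and $s_y=1$, simply contributes the single vertex $s$, so that option collapses to $1+\hat{L}(R_2,q,t)$.

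The main obstacle is not the counting but the adjacency bookkeeping underwriting the two structural claims. For (1)--(3) I must verify in every sub-case that the chosen vertex really has degree $2$ and that $u,v,z$ really keep their pass-through edges; for (4) I must verify that the column-$m_1$ separation is \emph{exactly} two edges, with no surviving crossed edge slipping around a corner of the hole. Both facts hinge on whether $(m_1\pm1,2)$ and their bottom mirrors fall inside the removed rectangle $R(k,l)$, i.e.\ on the precise ranges of $b,c,d,k$ imposed by (F11)--(F13); when $d\geqslant 2$ these require first re-orienting the graph (as in Lemma~\ref{HP-Oshaped3}) so that the thin side is the one being pinched. Once these local facts are pinned down the four maxima follow immediately, and the subsequent lemmas of this section will show that each bound is attained.
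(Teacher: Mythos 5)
Your parts (1)--(3) are essentially correct and actually supply the details that the paper's own proof (a single sentence reading the decomposition off Fig.~\ref{Fig-LongP2}) leaves implicit: you correctly observe that the strip vertices strictly beyond $t$ become unvisitable while the corners $u$, $v$, $z$ can still be passed through, which is exactly what produces the ``$+1$'' in the stated formulas, and that when $b=1$ the stranded set is an $L$-shaped region running down the right column, which is where the terms in $z_y$ and $v_y$ come from. One caveat: the blanket claim $deg(t)=2$ is not literally true at the strip ends allowed by the hypotheses (e.g.\ $t=(a+k,1)$ in case (2.1) of (F13), where $t$ also sees $(m,1)$ and $(m,2)$); the counting still adapts, but those boundary positions need their own two-line check.

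The genuine gap is in part (4). The claim that the separation between columns $m_1$ and $m_1+1$ meets the graph in exactly the two edges $(u,q)$ and $(v,z)$ is false in one of the sub-cases part (4) must cover, namely (F13) case (1.1) with $d\geqslant 2$ (Fig.~\ref{Fig_ForbiddenCondition-oshaped2}(d)). There $b=c=1$ but $a,d\geqslant 2$, so the ``bottom mirrors'' of the crossed edges are \emph{not} destroyed: below the hole both columns are intact for the $d$ rows $c+l+1,\dots,n$, and the cut contains $d$ horizontal plus $2(d-1)$ crossed edges there, i.e.\ $3d-1\geqslant 5$ edges in total. With more than two cut edges, ``odd number of crossings, hence exactly one'' is a non sequitur (three crossings are conceivable a priori), so the decomposition into an $(s,u)$-path plus a $(q,t)$-path, or an $(s,v)$-path plus a $(z,t)$-path, is not established. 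Your proposed repair---re-orienting the graph as in Lemma~\ref{HP-Oshaped3}---cannot work here, because the two width-one sides of this graph (top, $c=1$; right, $b=1$) are \emph{adjacent} rather than opposite, so every axis-parallel separation through the hole has a thick side carrying at least four crossing edges. (A milder instance of the same slip occurs when the cut abuts an end of the hole, e.g.\ $k=1$ in case (3) of (F13): then $(m_1+1,2)$ and $(m_1+1,n-1)$ lie outside the hole and crossed edges survive around its corner.) The fix is already contained in your own machinery for (1)--(3): in (F13)(1.1) the vertex $s=u$ is interior to the width-one top strip, so apply the pendant argument at $s$ instead of at a cut. If the path leaves $s$ to the right, it is confined to the strip and the two corner vertices, giving length at most $1+\hat{L}(R_2,q,t)$; if it leaves to the left, the edge $(s,q)$ is unused, so $q$ and all $R_2$-strip vertices except $(a+k,1)$ are unvisitable \emph{no matter how many times the path later crosses the bottom separation}, and counting these $a+k-s_x-1$ omissions bounds the length by $mn-kl-(a+k-s_x-1)$, which is precisely $\hat{L}(R_1,s,v)+\hat{L}(R_2,z,t)$. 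With that substitution part (4) is sound; the parity/two-edge-cut argument should be reserved for (F12) and for (F13)(3) with $k\geqslant 2$, where $c=d=1$ and the cut lies strictly inside the hole's column range, so it is genuinely a two-edge cut.
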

\begin{proof}
Consider Fig. \ref{Fig-LongP2}. It is clear that the longest $(s, t)$-path $P$ of $O(m,n; k,l; a,b,c,d)$ that starts from $s$ should pass through all (or some) the vertices of $R_1$, leaves $R_1$ at $u$ (or $v$), enters $R_2$ at $q$ (or $z$), and ends at $t$. Therefore, the length of any path between $s$ and $t$ cannot exceed $\max\{\hat{L}(R_1, s, u) + \hat{L}(R_2, q, t), \hat{L}(R_1, s, v) + \hat{L}(R_2, z, t)\}$.
\end{proof}

Finally, we consider condition (F10) and case (1.3) of condition (F13) as follows.

\begin{figure}[h]
\centering
\includegraphics[scale=.85]{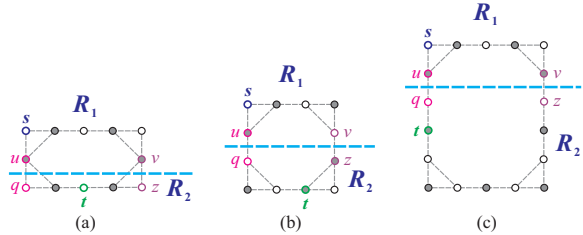}
\caption{The upper bound of the longest $(s, t)$-path in $O(m,n; k,l; a,b, c,d)$ for condition (O4), where (a)--(b) $d=1$, $t_y=n$, and $s_x\neq t_x$, and (c) $s_x=t_x$, $l\geqslant 3$, and $t_y\geqslant 4$, where bold dash line indicates the horizontal separation on $O(m,n; k,l; a,b,c,d)$.}\label{Fig-LongP3}
\end{figure}

\begin{lem}\label{Lemma:O4}
Let $a = c = 1$ and $s = (1, 1)$. If $(O(m,n; k,l; a,b,c,d), s, t)$ satisfies $\mathrm{(O4)}$, then the length of any path between $s$ and $t$ cannot exceed $\max\{\hat{L}(R_1, s, u) + \hat{L}(R_2, q, t), \hat{L}(R_1, s, v) + \hat{L}(R_2, z, t)\}$ (see Fig. $\mathrm{\ref{Fig-LongP3}}$ ), where $R_1 = C(n_1,m; 1,k; a,b)$, $R_2 = C(n-n_1,m; l-1,k; a,b)$ if $l > 1$; otherwise $R_2 = R(m, d)$, $n_1 = 1$, $u = (1, c+1)$, $q = (1, c+2)$, $v = (m, c+1)$, and $z = (m, c+2)$. Where condition $\mathrm{(O4)}$ is defined as follows:
\begin{description}
  \item[$\mathrm{\textbf{(O4)}}$] One of the following cases holds:
\begin{itemize}
  \item[\emph{(a)}] $d=1$, $t_y=n$, and $s_x\neq t_x$ (case \emph{(1)} of \emph{(F10)}); or
  \item[\emph{(b)}] $s_x=t_x$, $l\geqslant 3$, and $t_y\geqslant 4$ (case \emph{(2)} of \emph{(F10)} and case \emph{(1.3)} of \emph{(F13)}).
  \end{itemize}
\end{description}
\end{lem}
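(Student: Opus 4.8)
The plan is to prove this upper bound with the same separation-and-bottleneck method already used in Lemmas~\ref{Lemma:O1-O3} and~\ref{Lemma:F11-F13}, reading the two alternatives of the maximum off the horizontal separation drawn in Fig.~\ref{Fig-LongP3}. First I would reduce both subcases of $\mathrm{(O4)}$ to the normalized configuration $a=c=1$, $s=(1,1)$ fixed in the statement. Subcase~(a) is exactly case~(1) of (F10), so there $a=b=c=d=1$; subcase~(b) gathers case~(2) of (F10) (again $a=b=c=d=1$) together with case~(1.3) of (F13), and in the latter $b=c=1$ with $a\ge 2$, so after reflecting $O(m,n;k,l;a,b,c,d)$ about a vertical axis and relabelling $s\leftrightarrow t$ we may assume $s$ and $t$ lie in the single left column $x=1$ with $a=1$. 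In every instance $s\in R_1$ and $t\in R_2$, where $R_1$ is the upper piece and $R_2$ the lower piece produced by the horizontal separation of Fig.~\ref{Fig-LongP3}, with the left connectors $u=(1,c+1)$, $q=(1,c+2)$ and the right connectors $v=(m,c+1)$, $z=(m,c+2)$ straddling the cut; by Theorem~\ref{RLC-shaped_LongestPath} the numbers $\hat{L}(R_1,\cdot,\cdot)$ and $\hat{L}(R_2,\cdot,\cdot)$ are well defined.

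The core of the argument is a crossing count for an arbitrary simple $(s,t)$-path $P$. Since $s\in R_1$ and $t\in R_2$, $P$ switches from $R_1$ to $R_2$ an odd number of times. Because $a=c=1$, the left wall is a single column whose crossed-edge neighbours fall inside the hole, so the only edge joining $R_1$ to $R_2$ on the left is $(u,q)$; on the right the junction collapses to the single edge $(v,z)$ exactly when $b=1$. In subcase~(a) and in the (F10) part of subcase~(b) we have $b=1$, so the separation is the two-edge cut $\{(u,q),(v,z)\}$ and $P$ crosses exactly once. Hence $P$ splits into an $(s,u)$-subpath of $R_1$ followed by a $(q,t)$-subpath of $R_2$, or into an $(s,v)$-subpath of $R_1$ followed by a $(z,t)$-subpath of $R_2$, whence $|P|\le \hat{L}(R_1,s,u)+\hat{L}(R_2,q,t)$ or $|P|\le \hat{L}(R_1,s,v)+\hat{L}(R_2,z,t)$; the maximum of the two is the asserted bound.

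The delicate step is the (F13) part of subcase~(b), where the wall opposite $s$ has width $b\ge 2$, so the right junction is a whole band of vertical and crossed edges and the two-edge-cut count no longer forbids $P$ from recrossing the separation. Here I would exploit the corridor forced by $a=c=1$ and $l\ge 3$: the left-column vertices $(1,c+2),\dots,(1,c+l)$ all have degree two, so any simple path either avoids them or runs through them in one monotone stretch, and since $t=(1,t_y)$ with $t_y\ge 4$ sits inside this corridor, the part of $P$ lying in the left column is pinned to a single monotone segment terminating at $t$. I would use this to argue that any excursion of $P$ into $R_1$ other than the one issuing from $s$ must strand corridor vertices, so recrossings cannot raise $|V(P)\cap R_1|$ above $\hat{L}(R_1,s,u)$ (respectively $\hat{L}(R_1,s,v)$), and likewise that entering $R_2$ at a vertex other than $q$ or $z$ cannot beat the recorded values. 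Turning this monotonicity-and-no-gain observation into a clean inequality---rather than merely reading it from Fig.~\ref{Fig-LongP3} as the companion lemmas do---is the main obstacle, and I would discharge it using the explicit vertex counts for $\hat{L}$ on the $C$-shaped and rectangular pieces $R_1$ and $R_2$ supplied by Theorem~\ref{RLC-shaped_LongestPath}.
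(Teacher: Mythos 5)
Your skeleton --- the horizontal separation between rows $c+1$ and $c+2$, the parity argument for crossings, and the splitting of $P$ into an $(s,u)$- or $(s,v)$-piece of $R_1$ concatenated with a $(q,t)$- or $(z,t)$-piece of $R_2$ --- is the same one the paper invokes (its proof simply points back to Lemma \ref{Lemma:F11-F13}), and your two-edge-cut argument is sound, indeed cleaner than the paper's wording, whenever $b=1$ \emph{and} $l\geqslant 2$. However, there are two genuine gaps. First, subcase (a) allows $l=1$ (the lemma explicitly provides for this by setting $R_2=R(m,d)$ in that case), and there your key premise is false: with $l=1$ the row $c+2$ is a full row, so besides $(u,q)$ and $(v,z)$ the cut also contains the crossed edges $(u,(2,c+2))$ and $(v,(m-1,c+2))$, four edges in all. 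A path may then cross three times, or cross once but enter $R_2$ at $(2,c+2)$ or $(m-1,c+2)$ rather than at $q$ or $z$, and your decomposition no longer yields the stated maximum. The bound survives, but only after an extra argument (entering $R(m,d)$ one column inboard of $q$ or $z$ never beats $\hat{L}(R_2,q,t)$ or $\hat{L}(R_2,z,t)$, and a three-crossing path strands almost all of $R_1$ because $u$ and $v$ each have only two neighbours inside $R_1$), which you do not supply; you instead assert the contrary of the difficulty.

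Second, for the case you yourself flag as the crux --- the (F13)(1.3) instances of subcase (b), where $b\geqslant 2$ and the right junction is a band with many cut edges --- you offer a plan rather than a proof, and the tool you propose to finish with does not exist: Theorem \ref{RLC-shaped_LongestPath} asserts only that longest $(s,t)$-paths of rectangular, $L$-shaped, and $C$-shaped pieces are \emph{computable} in linear time; it supplies no closed-form vertex counts. The missing step can be closed more elementarily: since $s=(1,1)$ has exactly two neighbours, namely $u=(1,2)$ and $(2,1)$, any $(s,t)$-path whose final stretch descends the corridor through $u$ contains at most three vertices of $R_1$ (which is precisely $\hat{L}(R_1,s,u)$) and its $R_2$-portion is the forced corridor of $\hat{L}(R_2,q,t)=t_y-2$ vertices; whereas any path reaching $t$ from below must leave the $t_y-3$ corridor vertices above $t$ unvisited, so its length is at most $|V(G)|-(t_y-3)$, which one then checks equals $\hat{L}(R_1,s,v)+\hat{L}(R_2,z,t)$. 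This counting makes the number of band crossings irrelevant, which is exactly what your ``recrossings cannot raise $|V(P)\cap R_1|$'' claim needs but does not establish. (A minor further slip: $(1,c+l)$ has degree three, being adjacent to $(2,c+l+1)$, so the degree-two corridor is only $(1,c+2),\dots,(1,c+l-1)$.)
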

\begin{proof}
The proof is similar to the proof of Lemma \ref{Lemma:F11-F13}; see Figs. \ref{Fig-LongP3}.
\end{proof}

Let condition (O0) be defined as follows:

\begin{description}
  \item[$\mathrm{\textbf{(O0)}}$] $(O(m,n; k,l; a,b,c,d), s, t)$ does not satisfy any of conditions (F1), (F10), (F11), (F12), and (F13).
\end{description}

It is easy to check that any $(O(m,n; k,l; a,b,c,d), s, t)$ must satisfy one of conditions (O0), (O1), (O2), (O3), (O4), (F11), (F12), and (F13). If $(O(m,n; k,l; a,b,c,d), s , t)$ satisfies (O0), then $\hat{U}(O(m,n; k,l; a,b,c,d), s, t) = mn-kl$. Otherwise, $\hat{U}(O(m,n; k,l; a,b,c,d), s, t)$ can be computed using Lemma \ref{Lemma:O1-O3}--\ref{Lemma:O4}. We summarize them as follows, where $|G|=m\times n-k\times l$:\\

\small{
\noindent $\hat{U}(O(m,n; k,l; a,b,c,d), s, t)=
  \begin{cases}
    |G|-t_x+s_x+1,                                      &\mathrm{if \ (O1) \ holds;} \\
    |G|-t_y-m+s_x+2,                                    &\mathrm{if \ (O2)\ holds;} \\
    \max\{\hat{L}(R_1, s, t), \hat{L}(R_2, s, t)\},     &\mathrm{if \ (O3) \ holds;}\\
    \max\{|G|-t_x+u_x+1, |G|-v_x+t_x+1\},               &\mathrm{if \ (F11) \ holds;} \\
    \max\{|G|-t_x+u_x+1, |G|-z_y-m+t_x+1\},             &\mathrm{if \ case \ (2.1) \ of \ (F13) \ holds;} \\
    \max\{|G|-m+u_x-t_y+2, |G|-v_y+t_y+1\},             &\mathrm{if \ case \ (2.2) \ or \ (2.3) \ of \ (F13) \ holds;} \\
    \max\{\hat{L}(R_1, s, u) + \hat{L}(R_2, q, t), \hat{L}(R_1, s, v) + \hat{L}(R_2, z, t)\},   &\mathrm{if \ (O4),\ (F12),\ or \ (F13)\ (case \ 1.1\ or \ 3) \ holds;} \\
    mn-kl,                                              &\mathrm{if \ (O0) \ holds.}
  \end{cases}$}\\

Now, we show how to obtain a longest $(s,t)$-path for $O$-shaped supergrid graphs. Notice that if $(O(m,n; k,l; a,b,c,d), s, t)$ satisfies (O0), then, by Theorem \ref{HP-Theorem-Oshaped}, it contains a Hamiltonian $(s, t)$-path.

\begin{lem}\label{Lemma:long-Osupergrid}
If $(O(m,n; k,l; a,b c,d), s, t)$ satisfies one of conditions $\mathrm{(O1)}$--$\mathrm{(O4)}$ and $\mathrm{(F11)}$--$\mathrm{(F13)}$, then $\hat{L}(O(m,n; k,l; a,b,c,d), s, t) = \hat{U}(O(m,n; k,l; a,b,c,d), s, t)$.
\end{lem}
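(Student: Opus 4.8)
The plan is to establish the matching lower bound: for each of the conditions (O1)--(O4) and (F11)--(F13) I will exhibit an actual $(s,t)$-path of $O(m,n; k,l; a,b,c,d)$ whose length equals the upper bound $\hat{U}$ computed in Lemmas~\ref{Lemma:O1-O3}--\ref{Lemma:O4}, which immediately gives $\hat{L}=\hat{U}$. Since each of those upper bounds was derived from a specific separation of $O(m,n; k,l; a,b,c,d)$ into rectangular, $L$-shaped, or $C$-shaped pieces, the key observation is that realizing $\hat{U}$ on the $O$-shaped graph reduces to realizing the quantities $\hat{L}(R_i,\cdot,\cdot)$ on those pieces, and the latter are already attainable by Theorem~\ref{RLC-shaped_LongestPath} (longest $(s,t)$-paths of rectangular, $L$-, and $C$-shaped supergrid graphs are constructible in linear time), together with Theorem~\ref{HP-Theorem-RLCshaped} and Lemma~\ref{HamiltonianConnected-Rectangular} for the Hamiltonian pieces.

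First I would treat the vertex-cut conditions (O1), (O2), (O3). Here removing $s$ and $t$ disconnects $O(m,n; k,l; a,b,c,d)$ into two components, so any $(s,t)$-path lies entirely in one component together with $s$ and $t$. I reattach $s,t$ to the larger component; the resulting graph is rectangular, $L$-shaped, or $C$-shaped, and $(s,t)$ violates none of (F1)--(F6) on it (otherwise the original $O$-shaped graph would itself fail the stated hypotheses), so by Theorem~\ref{HP-Theorem-RLCshaped} it admits a Hamiltonian $(s,t)$-path. For (O3) both pieces $R_1,R_2$ are $C$-shaped, and I invoke Theorem~\ref{RLC-shaped_LongestPath} to build a longest $(s,t)$-path inside whichever piece gives the larger $\hat{L}$, so the length is exactly the stated $\max\{\hat{L}(R_1,s,t),\hat{L}(R_2,s,t)\}$. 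A direct vertex count, subtracting the $k\times l$ hole and the short segment between $s$ and $t$, confirms equality with $\hat{U}$ in each case.

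Next I would handle (F11)--(F13) and (O4), where the bound is the maximum of two concatenation schemes. Reusing the separation from Lemma~\ref{Lemma:F11-F13} or Lemma~\ref{Lemma:O4}, I construct in $R_1$ a longest $(s,u)$-path and in $R_2$ a longest $(q,t)$-path (and, in the alternative scheme, a longest $(s,v)$- and a longest $(z,t)$-path), each attainable on a rectangular/$L$-/$C$-shaped piece by Theorem~\ref{RLC-shaped_LongestPath}. Since $u\thicksim q$ and $v\thicksim z$ by the placement of the separating edge set, the concatenation operation $\Rightarrow$ joins the two subpaths into a single $(s,t)$-path of $O(m,n; k,l; a,b,c,d)$, and taking the better of the two schemes yields length $\hat{U}$. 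The degenerate branches $k=1$ (where $R_2=R(b,n)$) and $l=1$ (where $R_2=R(m,d)$) are handled identically with the corresponding rectangular piece.

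The hard part will be verifying, in the two-scheme cases, that the values $\hat{L}$ on the pieces are attained \emph{simultaneously} with the prescribed endpoints $u,q$ (resp.\ $v,z$) lying on the common boundary, i.e.\ that forcing the path to leave $R_1$ at $u$ and enter $R_2$ at $q$ costs no length relative to the unconstrained longest $(s,u)$- and $(q,t)$-paths. I expect this to follow because $u,q,v,z$ are corner or near-corner vertices of their pieces, so the relevant sub-instances avoid conditions (F1)--(F6) and the canonical longest paths of Theorem~\ref{RLC-shaped_LongestPath} can be routed to terminate at those corners; confirming this corner-routing, case by case across the subcases of (F12) and (F13), is the only genuinely delicate bookkeeping in the argument.
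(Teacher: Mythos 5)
Your overall strategy (realize each upper bound by an explicit path assembled from longest or Hamiltonian paths on rectangular, $L$-, and $C$-shaped pieces) is the same as the paper's, and your treatment of (O3), (F12), cases (1.1) and (3) of (F13), and (O4) coincides with the paper's Cases 2 and 5: split into two standard pieces, invoke the known algorithms, and concatenate across an edge of the separation. Also, the ``simultaneity'' difficulty you flag at the end is not a real issue: $\hat{L}(R_1,s,u)$ is by definition the length of a longest path whose endpoints are exactly $s$ and $u$, so no extra corner-routing argument is needed beyond what the cited algorithms already provide.

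The genuine gap is in (F11) and case (2) of (F13) --- precisely the configurations where $t$ lies inside the width-one corridor above the hole ($c=1$). There a two-piece concatenation $P_1\Rightarrow P_2$ ending at $t$ cannot attain $\hat{U}$. For (F11) the bound $m\times n-k\times l-v_x+t_x+1$ counts as missed only the corridor vertices \emph{strictly} between $t$ and the far end $v=(a+k,1)$: the vertex $v$ itself has two neighbours in the right arm and so can be swept up by a detour, and the entire far arm must also be covered even though the path terminates at $t$ inside the corridor. Any scheme of your form --- a longest path in the piece containing $s$ concatenated with a longest $(q,t)$-path in the piece containing $t$ --- necessarily leaves $v$ (and, depending on where you cut, the whole far arm) uncovered, so it falls short of $\hat{U}$ by at least one vertex and proves only $\hat{L}\geqslant \hat{U}-1$. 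The paper closes exactly this gap with a third piece: in its Cases 3.1 and 4.2 the far arm together with $v$ forms an $L$-shaped graph $R_3$ whose Hamiltonian cycle is spliced into the \emph{middle} of the concatenated path via parallel edges (Statement (2) of Proposition \ref{Pro_Obs}), and in its Case 3.2.2 the near corridor endpoint $u$ is attached to the arm piece (making it $L$-shaped) so that a three-piece chain $P_1\Rightarrow P_3\Rightarrow P_2$ covers it. A secondary defect of the same kind occurs in your treatment of (O1)/(O2): the graph ``larger component plus $\{s,t\}$'' is in general \emph{not} rectangular, $L$-, or $C$-shaped, because the deleted row-1 segment between $s$ and $t$ need not align with the hole, so Theorem \ref{HP-Theorem-RLCshaped} cannot be applied to it; the paper instead cuts vertically at $x=s_x$ into two genuine $C$-shaped pieces and concatenates longest paths produced by the known algorithms, one of which automatically misses exactly the trapped corridor vertices.
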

\begin{proof}
We prove this lemma by constructing a $(s, t)$-path $P$ such that its length equals to $\hat{U}(O(m,n; k,l; a,b,c,d), s, t)$. Consider the following cases:

Case 1: Conditions (O1) and (O2) hold. By Lemma \ref{Lemma:O1-O3}, $\hat{U}(O(m,n; k,l; a,b,c,d), s, t) = m\times n - k\times l - t_x+s_x+1$ and $\hat{U}(O(m,n; k,l; a,b,c,d), s, t) = m\times n - k\times l - t_y-m+s_x+2$, respectively. Consider Figs. \ref{Fig-LongP1}(a)--(b). We make a vertical separation on $O(m,n; k,l; a,b,c,d)$ to obtain two disjoint supergrid subgraphs $R_1 = C(m_1,n; m_1-a,l; c,d)$ and $R_2 = C(m-m_1,n; a+k-m_1,l)$ if $s_x\neq m-1$; otherwise $R_2 = R(b, n)$, where $m_1 = s_x$ (see Figs. \ref{Fig-LongPath1}(a)--(b)). Let $p\in V(R_1)$ and $q\in V(R_2)$ such that $p\thicksim q$, $p = (m_1, n)$, and $q = (m_1+1, n)$. First, by the algorithms of \cite{Hung17a} and \cite{Keshavarz19b}, we can construct a longest $(s, p)$-path $P_1$ in $R_1$ and a longest $(q, t)$-path $P_2$ in $R_2$. Then, $P = P_1\Rightarrow P_2$ forms a longest $(s, t)$-path of $O(m,n; k,l; a,b,c,d)$. Figs. \ref{Fig-LongPath1}(c) and (d) show the constructions of such a longest $(s,t)$-path. The size of constructed longest $(s, t)$-path equals to $\hat{L}(R_1, s, p) + \hat{L}(R_2, q, t) = m\times n-k\times l-t_x+s_x+1$ or $m\times n-k\times l-t_y-m+s_x+2$.

\begin{figure}[h]
\centering
\includegraphics[scale=0.85]{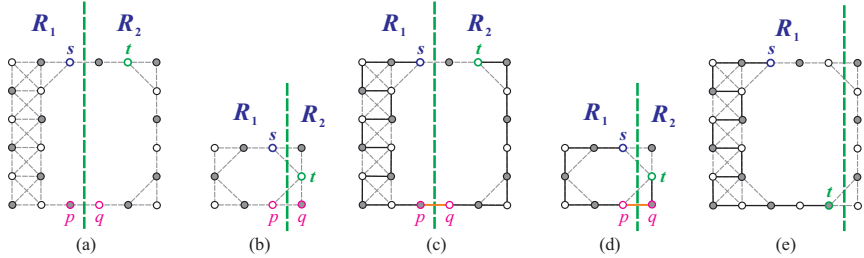}
\caption{(a) and (b) A vertical separation on $O(m,n; k,l; a,b,c,d)$ for (O1) and (O2) respectively, (c) and (d) a longest $(s, t)$-path in $O(m,n; k,l; a,b,c,d)$ for (a) and (b) respectively, and (e) a longest $(s, t)$-path in $O(m,n; k,l; a,b,c,d)$ for (O3), where bold lines indicate the constructed longest path between $s$ and $t$.}\label{Fig-LongPath1}
\end{figure}

Case 2: Condition (O3) holds. Then, by Lemma \ref{Lemma:O1-O3}, $\hat{U}(O(m,n; k,l; a,b,c,d), s, t) = \max\{\hat{L}(R_1, s, t), \hat{L}(R_2, s, t)\}$. Consider Figs. \ref{Fig-LongP1}(c) and \ref{Fig-LongP1}(d). Since $R_1$ and $R_2$ are $C$-shaped supergrid graphs, by the algorithm of \cite{Keshavarz19b} we can construct a longest path between $s$ and $t$ in $R_1$ or $R_2$. Fig. \ref{Fig-LongPath1}(e) depicts such a construction.

Case 3: Condition (F11) holds. Consider Fig. \ref{Fig-LongP2}(a). Then, by Lemma \ref{Lemma:F11-F13}, $\hat{U}(O(m,n; k,l; a,b,c,d), s, t) = \ell$, where $\ell = \max\{m\times n - k\times l - t_x+u_x+1, m\times n - k\times l -v_x+t_x+1\}$. There are the following two subcases:

\hspace{0.5cm}Case 3.1: $\ell = m\times n - k\times l - v_x+t_x+1$. We make two vertical and one horizontal separations on $O(m,n; k,l; a,b,c,d)$ to obtain three disjoint supergrid subgraphs $R_1 = L(m,n; k+b, c+l)$, $R_2 = R(k-1, c)$, and $R_3 = L(b+1,c+l; 1,l)$ (see Fig. \ref{Fig-LongPath2}(a)).
Let $p\in V(R_1)$ and $q\in V(R_2)$ such that $p\thicksim q$, $q = (a+1, 1)$, and $p = (a, 1)$ if $s\neq (a, 1)$; otherwise $p = (a, 2)$. Consider $(R_1, s, p)$. It is easy to check that $(R_1, s, p)$ does not satisfy conditions (F1)--(F3). By the algorithms of \cite{Hung17a} and \cite{Keshavarz19a}, we can construct a Hamiltonian $(s, p)$-path $P_1$ in $R_1$ and a longest $(q, t)$-path $P_2$ in $R_2$. Note that by the algorithm in \cite{Keshavarz19a} we can construct $P_1$ so that its one edge is placed to face $R_3$. Then, $P_{12} = P_1 \Rightarrow P_2$ forms a longest $(s, t)$-path of $R_1\cup R_2$, as depicted in Fig. \ref{Fig-LongPath2}(b). By Theorem \ref{HC-LCshaped}, $R_3$ contains a Hamiltonian cycle $HC_3$. By the algorithm in \cite{Keshavarz19a}, we can construct $HC_3$ such that its one flat face is faced to $R_1$. Then, there exist two edges $e_1\in P_{12}$ and $e_3 \in HC_3$ and such that $e_1 \thickapprox e_3$ (see Fig. \ref{Fig-LongPath2}(c)). By Statement (2) of Proposition \ref{Pro_Obs}, $P_{12}$ and $HC_3$ can be combined into a longest $(s, t)$-path $P$ of $O(m,n; k,l; a,b,c,d)$. The construction of a such longest path is depicted in Fig. \ref{Fig-LongPath2}(c). The size of constructed longest $(s, t)$-path equals to $\hat{L}(R_1, s, p) + \hat{L}(R_2, q, t) + |V(R_3)| = m\times n - k\times l -v_x+t_x+1$.

\begin{figure}[h]
\centering
\includegraphics[scale=0.85]{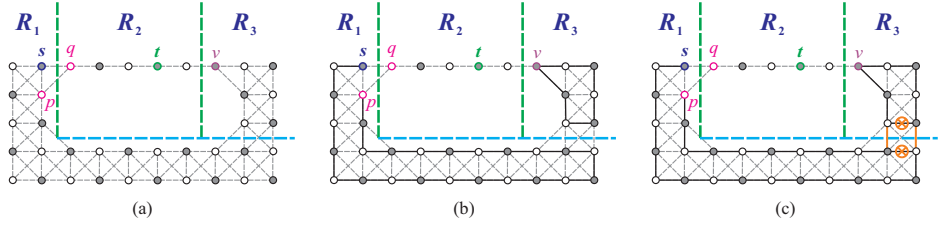}
\caption{(a) Two vertical and one horizontal separations on $O(m,n; k,l; a,b,c,d)$ for (F11) under that $\ell = m\times n - k\times l - v_x+t_x+1$, (b) a longest $(s, t)$-path in $R_1\cup R_2$ and a Hamiltonian cycle of $R_3$, and (c) a longest $(s, t)$-path in $O(m,n; k,l; a,b,c,d)$ for (a), where bold lines indicate the constructed longest $(s, t)$-path and $\otimes$ represents the destruction of an edge while constructing such a $(s, t)$-path.}\label{Fig-LongPath2}
\end{figure}

\hspace{0.5cm}Case 3.2: $\ell = m\times n - k\times l - t_x+u_x+1$. Consider the following subcases:

\hspace{1cm}Case 3.2.1: $s_x > a$. In this case, assume that $(1, 1)$ is the coordinates of vertex in upper-right corner in $O(m,n; k,l; a,b,c,d)$. Then, a longest $(s, t)$-path of $O(m,n; k,l; a,b,c,d)$ can be constructed by similar to Cases 3.1 (see Fig. \ref{Fig-LongPath3}(a)).

\begin{figure}[h]
\centering
\includegraphics[scale=0.85]{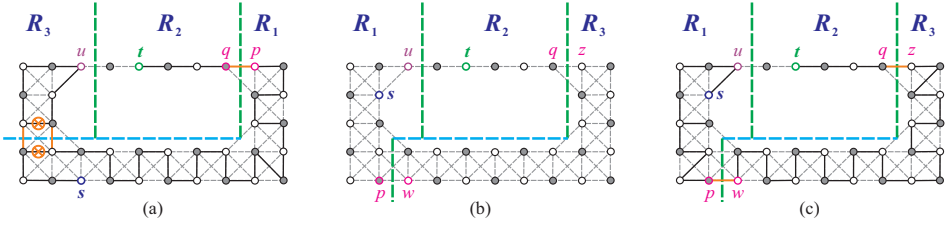}
\caption{(a) A longest $(s, t)$-path in $O(m,n; k,l; a,b,c,d)$ for (F11) under that $\ell = m\times n - k\times l - t_x+u_x+1$ and $s_x > a$, (b) three vertical and one horizontal separations on $O(m, n; k, l; a,b,c,d)$ for (F11) under that $\ell = m\times n - k\times l - t_x+u_x+1$ and $s_x \leqslant a$, and (c) a longest $(s, t)$-path in $O(m,n; k,l; a,b,c,d)$ for (b).}\label{Fig-LongPath3}
\end{figure}

\hspace{1cm}Case 3.2.2: $s_x\leqslant a$. We make three vertical and one horizontal separations on $O(m,n; k,l; a,b,c,d)$ to obtain three disjoint
supergrid subgraphs $R_1 = L(a+1,n; 1,n-c)$, $R_2 = R(k-1, c)$, and $R_3 = L(m-a,n; k,c+l)$ (see Fig. \ref{Fig-LongPath3}(b)). Let $p\in V(R_1)$, $q\in V(R_2)$, $w, z\in V(R_3)$ such that $p \thicksim w$, $q \thicksim z$, $q = (a+k, 1)$, $z = (a+k+1, 1)$, $w = (a+1, n)$, and $p = (a, n)$ if $s\neq (a, n)$; otherwise $p = (a, n-1)$. It is easy to verify that $(R_1, s, p)$ and $(R_3, w, z)$ do not satisfy conditions (F1)--(F3). By the algorithm of \cite{Keshavarz19a}, we can construct a Hamiltonian $(s, p)$-path $P_1$ and Hamiltonian $(w, z)$-path $P_3$ of $R_3$, respectively.
By the algorithm of \cite{Hung17a}, we can construct a longest $(q, t)$-path $P_2$ in $R_2$. Then, $P = P_1 \Rightarrow P_3\Rightarrow P_2$ forms a longest $(s, t)$-path of $O(m,n; k,l; a,b,c,d)$. The size of constructed longest $(s, t)$-path equals to $\hat{L}(R_1, s, p) + \hat{L}(R_3, w, z) + \hat{L}(R_2, q, t) = m\times n - k\times l - t_x-u_x+1$. The construction of such a longest $(s, t)$-path of $O(m,n; k,l; a,b,c,d)$ is depicted in Fig. \ref{Fig-LongPath3}(c).

Case 4: Case 2 of Condition (F13) holds. Consider Figs. \ref{Fig-LongP2}(b)--(c). Then, by Lemma \ref{Lemma:F11-F13}, $\hat{U}(O(m,n; k,l; a,b,c,d) , s, t) = \ell$, where $\ell = \max\{m\times n - k\times l - t_x+u_x+1, m\times n - k\times l - z_y-m+t_x+1\}$ (resp. $\ell = \max\{m\times n - k\times l - t_x+u_y+2, m\times n - k\times l - v_y+t_y+1\}$). There are the following two subcases:

\hspace{0.5cm}Case 4.1: $\ell = m\times n - k\times l - t_x+u_x+1$ (resp. $\ell = m\times n - k\times l -m+u_x-t_y+2)$. A longest $(s, t)$-path of $O(m,n; k,l; a,b,c,d)$ can be constructed by similar to Case 1, where $m_1 = a+1$, $q = (m_1+1, n)$, and $p = (m_1, n)$ (see Fig. \ref{Fig-LongPath4}(a)). The size of constructed longest $(s, t)$-path equals to $\hat{L}(R_1, s, p) + \hat{L}(R_2, q, t) = m\times n - k\times l - t_x+u_x+1$ (resp. $m\times n - k\times l -m-u_x-t_y+2)$. The construction of such a longest $(s, t)$-path of $O(m,n; k,l; a,b,c,d)$ is depicted in Fig. \ref{Fig-LongPath4}(b).

\begin{figure}[h]
\centering
\includegraphics[scale=0.85]{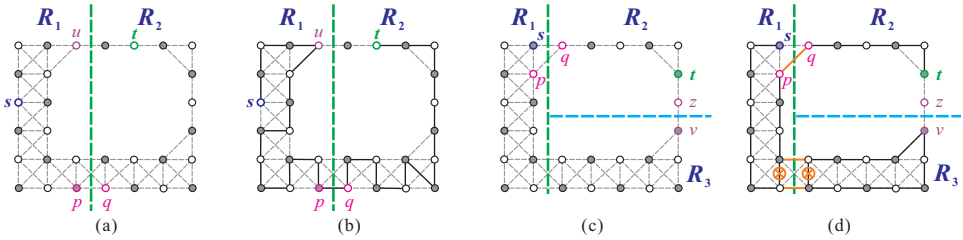}
\caption{(a) A vertical separation on $O(m,n; k,l; a,b,c,d)$ for case (2) of condition (F13) under that $\ell = m\times n - k\times l - t_x+u_x+1$, (b) a longest $(s, t)$-path in $O(m,n; k,l; a,b,c,d)$ for (a), (c) a vertical and a horizontal separations on $O(m,n; k,l; a,b,c,d)$ for case (2) of condition (F13) under that $\ell = m\times n - k\times l - z_y-m+t_x+1$, and (d) a longest $(s, t)$-path in $O(m,n; k,l; a,b,c,d)$ for (c).}\label{Fig-LongPath4}
\end{figure}

\hspace{0.5cm}Case 4.2: $\ell = m\times n - k\times l - z_y-m+t_x+1$ (resp. $\ell = m\times n - k\times l -v_y+t_y+1)$. A longest $(s, t)$-path of $O(m,n; k,l; a,b,c,d)$ can be constructed by similar to Case 3.1, where $R_1 = R(m_1, n)$, $R_2 = L(m-m_1,c+l-1; k,l-1)$, $R_3 = L(m-m_1,d+1; k,1)$, $m_1 = a$, $q = (m_1+1, 1)$, $p = (m_1, 1)$ if $s\neq (m_1, 1)$; otherwise $p = (m_1, 2)$ (see Fig. \ref{Fig-LongPath4}(c)). The size of constructed longest $(s, t)$-path equals to $\hat{L}(R_1, s, p) + \hat{L}(R_2, q, t) + |V(R_3)| = m\times n-k\times l - z_y-m+t_x+1$ (resp. $m\times n - k\times l -v_y+t_y+1)$. The construction of such a longest $(s, t)$-path of $O(m,n; k,l; a,b,c,d)$ is depicted in Fig. \ref{Fig-LongPath4}(d).

Case 5: Condition (F13) (cases 1.1 and 3), (F12), or (O4) holds. Then, by Lemma \ref{Lemma:F11-F13} or \ref{Lemma:O4}, $\hat{U}(O(m,n; k,l; a,b,c,d),s,t) = \max\{\hat{L}(R_1, s, u) + \hat{L}(R_2,q,t), \hat{L}(R_1, s, v) + \hat{L}(R_2, z, t)\}$. Consider Figs. \ref{Fig-LongP2}(d)--(g) and Fig. \ref{Fig-LongP3}. Since $R_1$ and $R_2$ are $C$-shaped supergrid graphs, first by the algorithm of \cite{Keshavarz19b} we can construct  a longest $(s, u)$-path $P_{11}$, a longest $(s, v)$-path $P_{12}$ in $R_1$, a longest $(q, t)$-path $P_{21}$, and a longest $(z, t)$-path $P_{22}$ in $R_2$. Then, $P = P_{11}\Rightarrow P_{21}$ or $P = P_{12}\Rightarrow P_{22}$ forms a longest $(s, t)$-path of $O(m,n; k,l; a,b,c,d)$.
\end{proof}

It follows from Theorem \ref{HP-Theorem-Oshaped} and Lemmas \ref{Lemma:O1-O3}--\ref{Lemma:long-Osupergrid} that the following theorem concludes the result.

\begin{thm}\label{LP-Theorem}
Let $O(m,n; k,l; a,b,c,d)$ be an $O$-shaped supergrid graph with vertices $s$ and $t$. Then, there exists a linear-time algorithm for finding the longest $(s, t)$-path of $O(m,n; k,l; a,b,c,d)$.
\end{thm}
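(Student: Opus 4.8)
The plan is to assemble the preceding lemmas into a decision-and-construct algorithm and then argue that each stage runs in $O(mn)$ time. First I would use the observation recorded just after Lemma \ref{Lemma:O4}: every triple $(O(m,n; k,l; a,b,c,d), s, t)$ satisfies exactly one of the conditions (O0), (O1), (O2), (O3), (O4), (F11), (F12), and (F13). Deciding which one holds reduces to comparing the coordinates of $s$ and $t$ against the parameters $a,b,c,d,k,l,m,n$, together with the adjacency test $s\thicksim t$ and the vertex-cut test for $\{s,t\}$ (a boundary/position check). All of these are $O(1)$ predicates, so the classification step costs constant time.

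Next I would branch on the identified condition. If (O0) holds, then by Theorem \ref{HP-Theorem-Oshaped} the graph admits a Hamiltonian $(s,t)$-path, and the constructive proofs of Lemmas \ref{HP-Oshaped1}--\ref{HP-Oshaped3} actually produce one: each such proof performs a bounded number of vertical and horizontal separations, reducing $O(m,n; k,l; a,b,c,d)$ to $O(1)$ rectangular, $L$-shaped, or $C$-shaped pieces, builds a Hamiltonian path or cycle in each piece, and glues them with the concatenation operator $\Rightarrow$ or with Proposition \ref{Pro_Obs}. This path has length $mn-kl=|V(O)|$ and is therefore optimal. If instead one of (O1)--(O4) or (F11)--(F13) holds, I would invoke the matching construction from Lemma \ref{Lemma:long-Osupergrid}, which in every case decomposes $O(m,n; k,l; a,b,c,d)$ into $O(1)$ subgraphs $R_1,R_2$ (and sometimes $R_3$), each rectangular, $L$-shaped, or $C$-shaped, computes a longest path with the prescribed endpoints in each by the linear-time algorithms of Theorem \ref{RLC-shaped_LongestPath}, and concatenates them. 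By Lemmas \ref{Lemma:O1-O3}--\ref{Lemma:long-Osupergrid} the resulting path attains the upper bound $\hat{U}(O(m,n; k,l; a,b,c,d), s, t)$, hence is a longest $(s,t)$-path.

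For the running time I would argue as follows. The classification is $O(1)$. Every branch performs only a constant number of separations, yielding a constant number of subgraphs whose total size is at most $|V(O)|=mn-kl$. On each subgraph the required Hamiltonian path, Hamiltonian cycle, or longest path is produced in time linear in that subgraph's size by Lemmas \ref{HamiltonianConnected-Rectangular}--\ref{HP-3rectangle-boundary_path}, Theorem \ref{RLC-shaped_LongestPath}, and Theorem \ref{HC-LCshaped}. The gluing steps---concatenation via $\Rightarrow$ and the edge-exchange combinations of Proposition \ref{Pro_Obs}---each touch only a constant number of edges, so relinking the vertex lists costs at most $O(mn)$. Summing over the constant number of pieces gives a total of $O(mn)$, which establishes the claim.

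The main obstacle I anticipate is not any individual construction---each is already supplied by an earlier lemma---but verifying exhaustiveness and pairwise disjointness of the case analysis: one must confirm that the conditions (O0)--(O4) and (F11)--(F13) together cover every admissible placement of $s$ and $t$ (including the symmetric and isomorphic configurations reduced away in Section \ref{Sec_Preliminaries} and the rotations used inside Lemmas \ref{HP-Oshaped2} and \ref{HP-Oshaped3}), and that the decomposition chosen in each case really produces only rectangular, $L$-shaped, or $C$-shaped pieces to which the cited linear-time subroutines apply. Once this bookkeeping is in place, both the $O(mn)$ bound and the optimality of the produced path follow immediately from the stated lemmas.
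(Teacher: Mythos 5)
Your proposal is correct and follows essentially the same route as the paper: classify the input among (O0)--(O4) and (F11)--(F13), output the Hamiltonian $(s,t)$-path from Lemmas \ref{HP-Oshaped1}--\ref{HP-Oshaped3} (via Theorem \ref{HP-Theorem-Oshaped}) in the non-forbidden case, and otherwise output the path built in Lemma \ref{Lemma:long-Osupergrid}, with all subgraph computations done by the linear-time rectangular/$L$-shaped/$C$-shaped subroutines. Your explicit accounting of the $O(1)$ classification, the constant number of separations, and the linear cost of each piece is a welcome elaboration of the complexity claim the paper leaves implicit in its Algorithm \ref{TheHamiltonianPathAlgm}.
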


The linear-time algorithm is formally presented as Algorithm \ref{TheHamiltonianPathAlgm}.

\begin{algorithm}[tb]
  \SetCommentSty{small}
  \LinesNumbered
  \SetNlSty{textmd}{}{.}

    \KwIn{An $O$-shaped supergrid graph $O(m,n;k,l;a,b,c,d)$ and two distinct vertices $s$ and $t$ in it.}
    \KwOut{The longest $(s, t)$-path.}

\textbf{if} $s_x, t_x\leqslant a$ \textbf{then} \textbf{output} $HP(O(m,n; k,l; a,b,c,d), s, t)$ constructed from Lemma \ref{HP-Oshaped1}; // $(O(m,n; k,l; a,b,c,d), s, t)$ does not satisfy the forbidden conditions (F1) and (F10);\\
\textbf{if} $s_x\leqslant a$ and $t_x\geqslant a+1$ \textbf{then} \textbf{output} $HP(O(m,n; k,l; a,b,c,d), s, t)$ constructed from Lemma \ref{HP-Oshaped2}; // $(O(m,n; k,l; a,b,c,d), s, t)$ does not satisfy the forbidden conditions (F1) and (F10)--(F13);\\
\textbf{if} $s_x, t_x\geqslant a+1$ \textbf{then} \textbf{output} $HP(O(m,n; k,l; a,b,c,d), s, t)$ constructed from Lemma \ref{HP-Oshaped3}; // $(O(m,n; k,l; a,b,c,d), s, t)$ does not satisfy the forbidden conditions (F1) and (F10)--(F13);\\
\textbf{if} $(O(m,n; k,l; a,b,c,d), s, t)$ satisfies one of the forbidden conditions (F1)and (F10)--(F13),  \textbf{then} \textbf{output} the longest $(s, t)$-path based on Lemma \ref{Lemma:long-Osupergrid}.\\
\caption{The longest $(s, t)$-path algorithm}
\label{TheHamiltonianPathAlgm}
\end{algorithm}

\section{Concluding Remarks}\label{Sec_Conclusion}
We gave necessary conditions for the existence of a Hamiltonian path in $O$-shaped supergrid graphs between any two given vertices. Then we showed that these necessary conditions are also sufficient by giving a linear-time algorithm to compute the Hamiltonian path between any two vertices. That is, $O$-shaped supergrid graphs are Hamiltonian connected except five forbidden conditions. We finally present a linear-time algorithm to compute the longest $(s, t)$-path of an $O$-shaped grid graph given any two vertices $s$ and $t$ when the forbidden conditions are satisfied. $O$-shaped supergrid graphs are a special kind of supergrid graphs with some holes. So, solving the Hamiltonian and the longest path problem for $O$-shaped supergrid graphs can be considered among the first attempts to solve the problems for more general cases of supergrid graphs. The Hamiltonian and  longest path problems are NP-complete for general supergrid graphs \cite{Hung15}. But it is still open for supergrid graphs with some holes or without hole. We would like to post it as an open problem to interested readers.

\section*{Acknowledgments}
This work is partly supported by the Ministry of Science and Technology, Taiwan under grant no. MOST 108-2221-E-324-012-MY2.

\end{document}